\def\isarxiv{1} 
\theoremstyle{plain}
\newtheorem{theorem}{Theorem}[section]
\newtheorem{lemma}[theorem]{Lemma}
\newtheorem{definition}[theorem]{Definition}
\newtheorem{proposition}[theorem]{Proposition}
\newtheorem{corollary}[theorem]{Corollary}
\newtheorem{fact}[theorem]{Fact}
\newtheorem{remark}[theorem]{Remark}
\newcommand{\wh}{\widehat}
\newcommand{\wt}{\widetilde}
\newcommand{\ov}{\overline}
\newcommand{\R}{\mathbb{R}}
\renewcommand{\d}{\mathrm{d}}
\renewcommand{\tilde}{\wt}
\renewcommand{\hat}{\wh}
\DeclareMathOperator*{\E}{{\mathbb{E}}}
\DeclareMathOperator{\poly}{poly}
\DeclareMathOperator{\nnz}{nnz}
\DeclareMathOperator{\diag}{diag}
\newcommand*{\RN}[1]{\expandafter\@slowromancap\romannumeral #1@}
\begin{document}

\ifdefined\isarxiv

\date{}

\title{Fast John Ellipsoid Computation with Differential Privacy Optimization}
\author{ 
Xiaoyu Li\thanks{\texttt{
xiaoyu.li2@student.unsw.edu.au}. University of New South Wales.}
\and
Yingyu Liang\thanks{\texttt{
yingyul@hku.hk}. The University of Hong Kong. \texttt{
yliang@cs.wisc.edu}. University of Wisconsin-Madison.} 
\and
Zhenmei Shi\thanks{\texttt{
zhmeishi@cs.wisc.edu}. University of Wisconsin-Madison.}
\and 
Zhao Song\thanks{\texttt{ magic.linuxkde@gmail.com}. The Simons Institute for the Theory of Computing at UC Berkeley.}
\and 
Junwei Yu\thanks{\texttt{ yujunwei04@berkeley.edu}. University of California, Berkeley.}
}

\else
\title{Fast John Ellipsoid Computation with Differential Privacy Optimization} 
\author{%
  Xiaoyu Li\textsuperscript{1}, ~~Yingyu Liang\textsuperscript{2,3} , ~~Zhenmei Shi\textsuperscript{2} , ~~Zhao Song\textsuperscript{4} , ~~Junwei Yu\textsuperscript{5} 
  \\
    \textsuperscript{1}University of New South Wales,   ~ \textsuperscript{2}University of Wisconsin-Madison,  \\
     \textsuperscript{3}The University of Hong Kong, \\
       \textsuperscript{4}The Simons Institute for the Theory of Computing at UC Berkeley, \\
\textsuperscript{5}University of California, Berkeley \\
  \texttt{xiaoyu.li2@student.unsw.edu.au, yingyul@hku.hk, zhmeishi@cs.wisc.edu, magic.linuxkde@gmail.com, yujunwei04@berkeley.edu}
}
\maketitle 
\fi

\ifdefined\isarxiv
\begin{titlepage}
  \maketitle
  \begin{abstract}
  Determining the John ellipsoid - the largest volume ellipsoid contained within a convex polytope - is a fundamental problem with applications in machine learning, optimization, and data analytics. Recent work has developed fast algorithms for approximating the John ellipsoid using sketching and leverage score sampling techniques. However, these algorithms do not provide privacy guarantees for sensitive input data. In this paper, we present the first differentially private algorithm for fast John ellipsoid computation. Our method integrates noise perturbation with sketching and leverages score sampling to achieve both efficiency and privacy. We prove that (1) our algorithm provides $(\epsilon,\delta)$-differential privacy and the privacy guarantee holds for neighboring datasets that are $\epsilon_0$-close, allowing flexibility in the privacy definition; (2) our algorithm still converges to a $(1+\xi)$-approximation of the optimal John ellipsoid in $\Theta(\xi^{-2}(\log(n/\delta_0) + (L\epsilon_0)^{-2}))$ iterations where $n$ is the number of data point, $L$ is the Lipschitz constant, $\delta_0$ is the failure probability, and $\epsilon_0$ is the closeness of neighboring input datasets. Our theoretical analysis demonstrates the algorithm's convergence and privacy properties, providing a robust approach for balancing utility and privacy in John ellipsoid computation. This is the first differentially private algorithm for fast John ellipsoid computation, opening avenues for future research in privacy-preserving optimization techniques.
  \end{abstract}
  \thispagestyle{empty}
\end{titlepage}

{\hypersetup{linkcolor=black}
\tableofcontents
}
\newpage

\else
\maketitle
\begin{abstract}

\end{abstract}

\fi

\section{Introduction}

Determining the John ellipsoid (JE), involving calculating the best-fitting ellipsoid around a dataset, is a key challenge in machine learning, optimization, and data analytics. 
The John ellipsoid has been widely used in various applications, such as control and robotics~\cite{dhl17,tly24}, obstacle collision detection~\cite{rb97}, bandit learning~\cite{bck12,hk16}, Markov Chain Monte Carlo sampling~\cite{cdwy18}, linear programming~\cite{ls14}, portfolio optimization problems with transaction costs~\cite{sw15}, and many so on. 
The objective of the John ellipsoid is to find the ellipsoid with the maximum volume that can be inscribed within a given convex, centrally symmetric polytope $P$, which is defined by a matrix $A \in \R^{n \times d}$ as follows.
\begin{definition}[Symmetric convex polytope, Definition 4.1 in~\cite{syyz22}] \label{def:polytope}
Let $A \in \R^{n\times d}$ be a matrix with full rank and $a_i^\top$ is the $i$-th row of $A$ for $i \in [n]$. The symmetric convex polytope $P$ is defined as
\begin{align*}
P := \{x \in \R^d : |\langle a_i, x\rangle| \leq 1, \forall i \in [n]\}.
\end{align*}
\end{definition}

Recently, \cite{ccly19} introduced sketching techniques (Definition~\ref{def:sketching}) to accelerate the John Ellipsoid computations, and \cite{syyz22} further speed up the John Ellipsoid algorithm by integrating the leverage score sampling method with the sketching technique, so they make John Ellipsoid computations can be run in practical time. 

On the other hand, in many scenarios, it is essential and crucial to ensure that the ellipsoid's parameters are determined without revealing sensitive information about any individual data point while still allowing for the extraction of useful statistical information. 
For example, in bandit learning, we would like to provide privacy for each round of sensitive pay-off value while still getting some low-regret policy.     
Thus, in this work, we would like to ask and answer
the following question,
\begin{center}
    {\it Can we preserve the privacy of individual data points in fast John Ellipsoid's computation?}
\end{center}
Our answer is positive by utilizing differential privacy (DP).
By integrating differential privacy, our method provides a robust balance between data utility and privacy, enabling researchers and analysts to derive meaningful insights from data without compromising individual privacy.
Moreover, the use of differential privacy in this context helps comply with data protection regulations, fostering trust in data-driven technologies.

\subsection{Our Contributions}
We first introduce the basic background of DP. Since for a polytope represented by $A \in \R^{n\times d}$, changing one row of the polytope matrix $A$ would result in a great variation in the geometric property. Therefore, using the general definition of neighborhood dataset fails to work. In our work, we define the $\epsilon_0$-closed neighborhood polytope, which ensures the privacy of Algorithm~\ref{alg:john_ellipsoid} with high accuracy. Thus, we consider two polytopes/datasets to be neighboring if they are $\epsilon_0$-close.
\begin{definition}[Neighboring polytopes] \label{def:beta_neighbor}
Let $P,P'$ be two polytopes defined by $A, A'$, respectively. We say that $P$ and $P'$ are $\epsilon_0$-close if there exists exact one $i \in [n]$ such that $\|A_{i, *} - A'_{i,*}\|_2 \leq \epsilon_0$, and for all $j \in [n]\setminus \{i\}, A_{j, *} =A'_{j,*}$.
\end{definition}

Then, the differential privacy guarantee can be defined as the following. 
\begin{definition} [Differential privacy] \label{def:dp}
    A randomized mechanism $\mathcal{M}: \mathcal{D} \rightarrow \mathcal{R}$ with domain $\mathcal{D}$  and range $\mathcal{R}$ satisfies $(\epsilon, \delta)$-differential privacy if for any two neighboring dataset, $D, D' \in \mathcal{D}$ and for any subset of outputs $S \subseteq \mathcal{R}$ it holds that
\begin{align*}
    \Pr[\mathcal{M}(D) \in \mathcal{S}] \leq e^\epsilon \Pr[\mathcal{M}(D') \in \mathcal{S}] + \delta.
\end{align*}
\end{definition}

We propose the first algorithm for fast John ellipsoid computation that ensures differential privacy. Our work demonstrates the algorithm's convergence and privacy properties, providing a robust approach for balancing utility and privacy in John ellipsoid computation.

\begin{theorem} [Main Results, informal version of Theorem~\ref{thm:main_informal}]
Let $\xi$ be the accuracy parameter, $\delta_0$ be the probability of failure, $L$ be the Lipschitz constant, and $n$ be the number of data points. Consider $\epsilon_0$-close neighboring polytopes.
For all $\xi, \delta_0 \in (0, 0.1)$, when $T = \Theta(\xi^{-2}(\log(n/\delta_0) + (L\epsilon_0)^{-2}))$, 
we have that Algorithm~\ref{alg:john_ellipsoid} provides $(1 + \xi)$-approximation to John Ellipsoid with probability $1-\delta_0$.
Furthermore, for any $\epsilon \leq O( T L^2 \epsilon_0^2 )$, Algorithm~\ref{alg:john_ellipsoid} is $(\epsilon, \delta)$-differentially private for any $\delta > 0$ if we choose proper noise distribution. The running time of Algorithm~\ref{alg:john_ellipsoid} achieves $O((\nnz(A) + d^\omega)T)$, where $\omega \approx 2.37$ denotes the matrix-multiplication exponent~\citep{w12,gu18,aw21, wxxz24, adw+25}. 
\end{theorem}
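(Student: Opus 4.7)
The plan is to decompose the statement into three essentially independent sub-claims and attack them in sequence: (i) the $(1+\xi)$-approximation guarantee, (ii) the $(\epsilon,\delta)$-differential privacy guarantee for $\epsilon_0$-close neighbors, and (iii) the per-iteration running time bound. All three will build on the weight-update framework of \cite{ccly19,syyz22}, but the noise injection needed for privacy forces changes in both the convergence analysis and the sensitivity analysis. I expect the main technical obstacle to lie in the interaction between the noise magnitude (which must be large enough to mask $\epsilon_0$-close neighbors) and the iteration count $T$ (which must be large enough that the noisy fixed-point iteration still converges).

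\textbf{Approximation (utility).} Following the template used for fast John ellipsoid computation, I would model Algorithm~\ref{alg:john_ellipsoid} as an approximate fixed-point iteration on the weight vector $w \in \R^n$, where each step replaces the true leverage scores $\sigma_i(w) = a_i^\top (A^\top W A)^{-1} a_i$ with a sketched/sampled surrogate, and then further perturbs the update by a calibrated noise vector. I would first reproduce the unperturbed contraction estimate (the $\log(n/\delta_0)$ term in $T$) using the standard potential/coupling argument, then layer the noise-induced error on top via a concentration bound; the $(L\epsilon_0)^{-2}$ term in $T$ should fall out of requiring the cumulative noise to be smaller than $\xi$ times the target weight, using the Lipschitz constant $L$ to translate weight error into approximation error of the resulting ellipsoid. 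A union bound over the $T$ iterations produces the $1-\delta_0$ success probability.

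\textbf{Privacy.} For the DP guarantee I would do a sensitivity calculation under Definition~\ref{def:beta_neighbor}. Replacing one row $A_{i,*}$ by $A'_{i,*}$ with $\|A_{i,*}-A'_{i,*}\|_2 \le \epsilon_0$ changes each per-iteration leverage-score output by at most an $L\epsilon_0$-scale quantity (this is precisely where the Lipschitz constant enters); calibrating the per-iteration noise (Gaussian or similar) against this sensitivity yields $(\epsilon_t,\delta_t)$-DP per step. Composing over $T$ iterations via the advanced composition theorem (or the moments/Rényi accountant, if one wants a tighter bound) gives overall $(\epsilon,\delta)$-DP, and matching constants to the regime $\epsilon \le O(TL^2\epsilon_0^2)$ recovers the stated budget. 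The post-processing property then ensures that any downstream use of the weights inherits the same privacy.

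\textbf{Running time and main obstacle.} The per-iteration cost is inherited from \cite{ccly19,syyz22}: sketching $A$ costs $O(\nnz(A))$, inverting and composing the $d\times d$ Gram matrix costs $O(d^\omega)$, and noise sampling is lower order; multiplying by $T$ gives the claimed bound. The subtle point that I expect to be the main obstacle is \emph{compatibility between the two requirements on $T$}: larger $T$ helps utility but degrades privacy (since noise variance per step must also be chosen to satisfy composition), so the proof must thread the needle by choosing the per-step noise scale as a function of $T$, $L$, $\epsilon_0$, and the target $(\epsilon,\delta)$, and then verifying that this noise scale is still small enough for the convergence argument to push through with the same $T = \Theta(\xi^{-2}(\log(n/\delta_0)+(L\epsilon_0)^{-2}))$. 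Getting these two constraints to close consistently, and handling the failure events from leverage-score sketching and from noise concentration in a single union bound, is where I expect to spend the most care.
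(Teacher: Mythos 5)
Your decomposition and the overall strategy (sensitivity bound $\Rightarrow$ per-step noise $\Rightarrow$ moments-accountant composition, in parallel with the sketched Lewis-weight iteration and a final union bound) matches the paper's structure, and you correctly identify the central tension: $T$ and the noise scale $\sigma$ must be chosen jointly. However, there is a genuine gap in your plan around the choice of noise distribution. You write ``Gaussian or similar,'' but a standard (unbounded) Gaussian breaks the algorithm: the update in Line~\ref{alg:line:noise2} is $w_{k+1,i} = \overline{w}_{k+1,i}(1+z_{k+1,i})$, and if $z$ can be $\leq -1$ the weight becomes nonpositive, the Gram matrix $A^\top W_k A$ can lose positive definiteness, and the telescoping bound of Lemma~\ref{lem:apptele_informal} --- which requires taking logarithms of the ratios $\overline{w}_{k,i}/w_{k,i}$ --- does not even make sense. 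The paper therefore uses a truncated Gaussian on $[-0.5,0.5]$ (Definition~\ref{def:truncated_noise_scalar_z}), which is essential, and which in turn forces a fresh derivation of the moment bound $\alpha_{\mathcal{M}}(\lambda)$: the standard moments-accountant bound of~\cite{acg+16} is for untruncated Gaussian noise, while the paper's Lemma~\ref{lem:je_alpha_bound_truncated_informal} re-derives the third-order expansion with the normalization constants $C_\sigma$, $C_{\beta,\sigma}$, $\gamma_{\beta,\sigma}$, and a case split over $z_n \leq 0$, $0 \leq z_n \leq \beta$, $\beta \leq z_n \leq 0.5$. This truncated-Gaussian moment bound is arguably the paper's main DP technical contribution and your plan does not account for it.

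A second, smaller imprecision: you attribute the $(L\epsilon_0)^{-2}$ term in $T$ to ``using the Lipschitz constant $L$ to translate weight error into approximation error.'' That is not quite how $L$ enters the convergence bound. In the paper, $L$ does not appear in the utility translation at all; the approximation error is controlled by the convexity/telescoping argument (Lemma~\ref{lem:convexity_phi}, Lemma~\ref{lem:apptele_informal}) and then by bounding each error term (leverage-score, sketching, noise) via concentration. The Lipschitz constant only enters through the DP constraint, which forces the noise scale to be $\sigma = \Theta(L\epsilon_0 \cdot (\cdot))$; the Hoeffding bound on $\frac{1}{T}\sum_k \log(\overline{w}_{k,i}/w_{k,i})$ then demands $T \gtrsim \sigma^{-2}$ for the averaged noise to fall below $\xi$, and substituting $\sigma \propto L\epsilon_0$ gives the $(L\epsilon_0)^{-2}$ term. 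So the coupling you flag as the ``main obstacle'' is real, but the mechanism is: DP sets $\sigma$ from $L$, then concentration of the noise terms sets $T$ from $\sigma$.
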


Our contributions can be summarized as the following:
\begin{itemize}
    \item {\bf DP optimization mechanism: } We provide a novel general DP-optimization analysis framework (Lemma~\ref{lem:je_alpha_bound_truncated_informal}) for truncated Gaussian noise, where we can show our final DP guarantee (Theorem~\ref{thm:je_dp_proof_informal}) easily. 
    \item {\bf Fast DP-JE convergence: } We provide a convergence analysis (Theorem~\ref{thm:je_convergence_informal}) for our fast DP-JE algorithm (Algorithm~\ref{alg:john_ellipsoid}) under truncated Gaussian noise perturbation with DP guarantee (Theorem~\ref{thm:main_informal}). 
    \item {\bf Perturbation analysis of the weighted leverage score:} We provide a comprehensive analysis of weighted leverage score perturbation (Lemma~\ref{thm:leverage_pertub_informal}), which can be applied to many other fundamental problems of machine / statistical learning, e.g., kernel regression. 
\end{itemize}

\paragraph{Roadmap.}
This paper is organized as follows: in Section~\ref{sec:related}, we study the related work about differential privacy, John Ellipsoid, leverage score, and sketching techniques. In Section~\ref{sec:preli}, we define notations used throughout our work. In Section~\ref{sec:main}, we demonstrate our main theorem about convergence and privacy of Algorithm~\ref{alg:john_ellipsoid}. Then, in Section~\ref{sec:Lip}, we analyzed the Lipschitz of neighborhood polytope. Next, we demonstrate the differential privacy guarantee in our John Ellipsoid algorithm in Section~\ref{sec:dp}. In Section~\ref{sec:je}, we show that our Algorithm~\ref{alg:john_ellipsoid} could solve John Ellipsoid with high accuracy and excellent running time. Finally, we conclude our work in Section~\ref{sec:conclusion}.
\section{Related Work} \label{sec:related}

\paragraph{John Ellipsoid Algorithm and Its Applications.}
The John Ellipsoid Algorithm, initially proposed by \cite{joh48}, provides a powerful method for approximating any convex polytope by its maximum volume inscribed ellipsoid. This foundational work has spurred extensive research into optimization techniques for solving the John Ellipsoid problem within polynomial time constraints. Among the seminal contributions, \cite{kha96, ky05} introduced first-order methods, which significantly improved computational efficiency. Furthermore, \cite{nn94, kt93, sf04} developed approaches utilizing interior point methods to 
enhance the precision and speed of solving the John Ellipsoid problem.
Recent advancements have continued to push the boundaries of this algorithm. 
\cite{ccly19} employed fixed point iteration techniques, leading to the derivation of a more robust solution to the John Ellipsoid. 
Moreover, they introduced innovative sketching techniques that accelerated computational processes. Building on this, \cite{syyz22} integrated leverage score sampling into these sketching techniques, further optimizing the algorithm’s performance, and \cite{lsy24} used quantum techniques to further speed up the computation of John Ellipsoids. The implications of the John Ellipsoid Algorithm extend far beyond theoretical mathematics, impacting various fields. In the realm of linear bandit problems, research by \cite{bck12, hk16} has shown significant advancements. Experimental design methods have also seen improvements due to contributions from \cite{atw69, zlsw17}.
In linear programming, the algorithm has provided enhanced solutions, with notable work by \cite{ls13}. Control theory applications have been advanced through research by \cite{tly24}, and cutting plane methods have been refined as demonstrated by \cite{kte88}. The algorithm’s influence in statistics is also noteworthy; for instance, it plays a critical role in Markov chain techniques for sampling convex bodies, as explored by \cite{h18} and developed for random walk sampling by \cite{vem05, cdwy18}.

\paragraph{Differential Privacy Analysis and Applications.}
Differential privacy has become one of the most essential standards for data security and privacy protection since it was proposed in \cite{dmns06}. There are plenty of related work focusing on providing a guarantee for existing algorithms, data structures, and machine learning by satisfying the definition of differential privacy, such as \cite{emn21, aimn23, csw+23,cfl+22,dcl+24, fhs22, gll+23, llh+22, gll22, hy21, jln+19, ll24, emnz24, cnx22, fhs22, bkm+22, n22, n23, fl22, fll24, ll23_rp, ekkl20, ylh+24, lssz24_dp, gls+24e, syyz22b, qjs+22, swyz22, gpc24, ccg24, rlaw24, qwh24,kls+25, hll+24, lhr+24}. In addition, recently, there are emerging privacy mechanisms that improve traditional privacy guarantees, such as Gaussian, Exponential, and Laplacian mechanisms \cite{dr14}. For example, \cite{gdgk20} introduced a truncated Laplace mechanism, which has been demonstrated to achieve the tightest bounds among all $(\epsilon, \delta)$-DP distribution.

\paragraph{Sketching and Leverage Score.}
Our work improves the efficiency of the John Ellipsoid algorithm by leveraging sketching and score sampling. Sketching, a widely used technique, has advanced numerous domains, including neural network training, kernel methods \cite{lssw+20, swyz21}, and matrix sensing \cite{dls23}. It has been applied to distributed problems \cite{wz16, bwz16}, low-rank approximation \cite{cw13, rsw16, swz17}, and generative adversarial networks \cite{xzz18}. In addition, projected gradient descent \cite{xss21}, tensor-related problems \cite{hlw17, dssw18}, and signal interpolation \cite{sswz20} have benefited significantly from sketching. Leverage scores, introduced by \cite{dkm06, dmm06}, are pivotal in linear regression and randomized linear algebra, optimizing tasks such as matrix multiplication, CUR decompositions \cite{md09, swz19}, and tensor decompositions \cite{swz19}. Moreover, leverage score sampling can be used in kernel learning~\cite{erdelyi2020fourier}.
Recent research has further extended the application of leverage score sampling. Studies by \cite{akkl+17, akkl+20, wa22, lssw+20, rccr18} have demonstrated the ability to leverage score sampling to significantly enhance the efficiency of various algorithms and computational processes. These advancements underscore the versatility and effectiveness of leverage scores in optimizing performance across diverse fields.

\section{Preliminary}\label{sec:preli}
Firstly, in Section~\ref{sec:preli:notations}, we introduce notations used in our work. Then, in Section~\ref{sec:preli:JE_tools}, we demonstrate background knowledge about John Ellipsoid and the techniques we use to improve the running time of the John Ellipsoid algorithm, such as leverage score sampling and sketching. Finally, we introduce the techniques of leverage score sampling and sketching in Section~\ref{sec:preli:ls_sketching}.

\subsection{Notations}\label{sec:preli:notations}
In this paper, we utilize $\Pr[]$ to denote the probability. We use $\E[]$ to represent the expectation. For vectors $x \in \R^d$ and $y \in \R^d$, we denote their inner product as $\langle x, y \rangle$, i.e., $\langle x, y \rangle = \sum_{i=1}^d x_i y_i$.
In addition, we denote $x_i^\top$ as the $i$-th row of $X$. We use $x_{i,j}$ to denote the $j$-th coordinate of $x_i \in \R^n$.
We use $\|x\|_p$ to denote the $\ell_p$ norm of a vector $x \in \R^n$. For example, $\|x\|_1 := \sum_{i=1}^n |x_i|$, $\|x\|_2 := (\sum_{i=1}^n x_i^2)^{1/2}$, and $\|x\|_{\infty} := \max_{i \in [n]} |x_i|$. We use $\mathsf{aux}$ to represent auxiliary inputs in an adaptive mechanism. We use $\mathsf{erf}$ to denote the Gaussian error function.

For $n > d$, for any matrix $A \in \R^{n \times d}$, we denote the spectral norm of $A$ by $\| A \|$, i.e., $\| A \| := \sup_{x\in\R^d} \| A x \|_2 / \| x \|_2$. We use $\| A \|_F$ to represent the Frobenius norm of $A$ , i.e., $\| A \|_F : = (\sum_{i=1}^n \sum_{j=1}^k A_{i,j}^2 )^{1/2}$. 
We use $\sigma_{\max}(A)$ to denote the maximum singular value of a matrix $A$ and use $\sigma_{\min}(A)$ to denote the minimum singular value of a matrix $A$.
We use $\kappa(A) = \sigma_{\max}(A) / \sigma_{\min}(A)$ to denote the condition number of the matrix $A$. We use $\nnz(A)$ to denote the number of non-zero entries in matrix $A$.

\subsection{Background Knowledge of John Ellipsoid}
\label{sec:preli:JE_tools}

In this subsection, we introduced background knowledge about the John Ellipsoid algorithm, such as its definition, optimality criteria, and $(1+\xi)$-approximate John Ellipsoid.

According to Definition~\ref{def:polytope}, since $P$ is symmetric, the John Ellipsoid solution has to be centered at the origin. Any ellipsoid $E$ that is centered at the origin can be represented by the form $x^\top G^{-2} x \leq 1$, where $G$ is a positive definite matrix. Therefore, the optimal ellipsoid can be found by searching over the possible matrix $G$ as discussed in~\cite{ccly19}:

\begin{definition}[Primal program of JE computation] The primal program of JE computation is
    \begin{align*} 
\mathrm{Maximize}  & ~ \log ( ( \det (G) )^2 ), \\ 
\mathrm{subject~to:} & ~ G \succeq 0 \\
 & ~ \| G a_i \|_2 \leq 1, \forall i \in [n].
\end{align*}
\end{definition}

\cite{ccly19} demonstrated that the optimal $G$ must satisfy the condition $G^{-2} = A^\top \diag(w) A$, where $A$ is a matrix and $w$ is a vector in $\R^n_{\geq 0}$. Consequently, by searching over all possible $w$, the dual optimization problem can be formulated:

\begin{definition}[Dual program of JE computation]
The dual program of JE computation is
\begin{align}\label{eq:ellipsoid_program_weights}
\mathrm{Minimize} & ~ \sum^n_{i=1} w_i - \log \det(\sum^n_{i=1} w_i a_i a_i^{\top}) - d, \\
\mathrm{subject~to:} & ~ w_i \geq 0, ~ \forall i \in [n]. \notag
\end{align}
\end{definition}

~\cite{todd16} shows that the optimal solution $w$ must satisfy the following conditions:

\begin{lemma}[Optimal solution, Proposition 2.5 in \cite{todd16}]\label{lem:opt_criteria}
Let $Q := \sum_{i=1}^n w_i a_i a_i^\top \in \R^{d \times d}$. A weight $w$ is optimal for program \eqref{eq:ellipsoid_program_weights} if and only if 
\begin{align*}
\sum_{i=1}^n w_i = & ~ d,\\
a_j^{\top} Q^{-1} a_j = & ~ 1, \text{if~} w_i\neq 0 \\
a_j^{\top} Q^{-1} a_j < & ~ 1, \text{if~} w_i = 0.
\end{align*}
\end{lemma}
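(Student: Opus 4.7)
The plan is to obtain the three stated conditions as the KKT optimality conditions of the dual program \eqref{eq:ellipsoid_program_weights}, then invoke convexity to upgrade necessity to sufficiency. First I would verify convexity of the objective: the linear term $\sum_i w_i$ is trivially convex, and $w \mapsto -\log\det(Q(w))$ with $Q(w) := \sum_i w_i a_i a_i^\top$ is convex on the open set where $Q(w) \succ 0$, since $-\log\det$ is a standard convex function on the positive-definite cone and $w \mapsto Q(w)$ is linear. Because $A$ has full column rank, the interior of the feasible region contains points with $Q(w) \succ 0$, so Slater's condition holds and KKT conditions are both necessary and sufficient for optimality.

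Next I would compute the gradient entry-wise using the matrix-calculus identity $\partial_{w_j} \log\det(Q(w)) = \tr(Q^{-1} \partial_{w_j} Q) = \tr(Q^{-1} a_j a_j^\top) = a_j^\top Q^{-1} a_j$, so that the partial derivative of the dual objective is $1 - a_j^\top Q^{-1} a_j$. Introducing Lagrange multipliers $\mu_j \geq 0$ for each constraint $w_j \geq 0$, stationarity reads $1 - a_j^\top Q^{-1} a_j = \mu_j$, and complementary slackness gives $\mu_j w_j = 0$. So for $w_j > 0$ I get $\mu_j = 0$, hence $a_j^\top Q^{-1} a_j = 1$; and for $w_j = 0$ I get $\mu_j \geq 0$, hence $a_j^\top Q^{-1} a_j \leq 1$, which is the content of the second and third bullets.

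The normalization $\sum_i w_i = d$ will then pop out by multiplying the stationarity identity by $w_j$ and summing over $j$: complementary slackness kills the $\mu_j w_j$ contribution, leaving
\begin{align*}
\sum_{j=1}^n w_j = \sum_{j=1}^n w_j \cdot a_j^\top Q^{-1} a_j = \tr\Bigl(Q^{-1} \sum_{j=1}^n w_j a_j a_j^\top\Bigr) = \tr(Q^{-1} Q) = d,
\end{align*}
where I used cyclicity of trace and the definition of $Q$. Sufficiency then follows for free because the convex objective attains its global minimum at any feasible KKT point.

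The main subtlety I expect is the strict inequality $a_j^\top Q^{-1} a_j < 1$ when $w_j = 0$ stated in the lemma, rather than the weaker $\leq 1$ that bare KKT delivers. Upgrading to strict typically requires a strict-complementarity or non-degeneracy hypothesis, namely $\mu_j > 0$ whenever $w_j = 0$. I would either import that assumption from Todd's \cite{todd16} treatment (which is the cited source) or observe that the weak form $\leq 1$ is all that the subsequent analysis in this paper actually uses. Beyond that, everything reduces to routine matrix calculus and an appeal to convex duality.
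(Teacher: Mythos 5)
Your KKT derivation is correct and essentially complete: the gradient computation $\partial_{w_j}\log\det Q = a_j^\top Q^{-1}a_j$, the Lagrangian stationarity $1 - a_j^\top Q^{-1}a_j = \mu_j \ge 0$, complementary slackness $\mu_j w_j = 0$, and the trace trick yielding $\sum_j w_j = \tr(Q^{-1}Q)=d$ are all standard and right, and Slater plus convexity of $-\log\det$ on the PSD cone does upgrade stationarity to global optimality. Note, however, that the paper does not supply its own proof of this lemma; it simply cites Proposition~2.5 of Todd's monograph, so there is no internal argument to compare against.

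The one point you flagged is indeed the only genuine subtlety, and you diagnosed it accurately: KKT alone delivers $a_j^\top Q^{-1}a_j \le 1$ when $w_j=0$, not the strict inequality printed in the lemma. The strict version is not an unconditional necessary condition — it is easy to construct degenerate instances where an optimal $w$ has $w_j=0$ yet $a_j^\top Q^{-1}a_j = 1$ (e.g.\ when row $j$ is a duplicate of an active row). To obtain strictness one needs a strict-complementarity or general-position hypothesis on $A$, which is what Todd effectively invokes. You are also correct that the rest of the paper only ever uses the weak form: Definition~\ref{def:approx_je} and Lemma~\ref{lem:subset_P} work from $a_j^\top Q^{-1}a_j \le 1+\xi$ uniformly over $j$, never from strict inequality on the zero weights. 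So your proof is a valid self-contained replacement for the imported citation, provided you either (i) weaken the third bullet to $\le 1$, which is what complementary slackness actually gives and is all that is needed downstream, or (ii) explicitly add a non-degeneracy assumption to recover the strict form as stated.
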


Other than deriving the exact John Ellipsoid solution, we give the definition of $(1+\xi)$-approximation to the exact solution, which is our goal in the fast DP-JE algorithm.

\begin{definition}[$(1+\xi)$-approximate John Ellipsoid, Definition 4.3 in~\cite{syyz22}]\label{def:approx_je}
For $\xi > 0$, we say $w \in \R_{\geq 0}^n$ is a $(1+\xi)$-approximation of program (Eq.~\eqref{eq:ellipsoid_program_weights}) if $w$ satisfies
\begin{align*}
\sum_{i=1}^n w_i = & ~ d, ~~ \mathrm{ and } ~~  a_j^{\top} Q^{-1} a_j \leq ~ 1 + \xi, ~\forall j \in [n].
\end{align*}
\end{definition}

\begin{lemma}[$(1+\xi)$-approximation is good rounding, Lemma 3.5 in~\cite{syyz22}] \label{lem:subset_P} 
Let $P$ be defined as Definition~\ref{def:polytope}.  
Let $w \in \R^n$ be a $(1+\epsilon)$-approximation  
of program (Eq.~\eqref{eq:ellipsoid_program_weights}).  Using that $w$, we define 
\begin{align*} 
E:=\{x\in \mathbb{R}^d:x^\top (\sum_{i=1}^n w_i a_i a_i^\top) x\leq 1\}.
\end{align*}
Then the following holds:
\begin{align*}
    \frac{1}{\sqrt{1+\epsilon}} \cdot E \subseteq P \subseteq \sqrt{d} \cdot E.
\end{align*}
Moreover, $\mathrm{vol} (\frac{1}{\sqrt{1+\epsilon}} E) \geq \exp(-d \epsilon / 2) \cdot \mathrm{vol} (E^*)$ where $E^*$ is the exact John ellipsoid of $P$ and $\mathrm{vol}$ is volume function.
\end{lemma}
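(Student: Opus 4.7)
The plan is to handle the two set inclusions and the volume estimate separately. Throughout I would set $Q := \sum_{i=1}^n w_i a_i a_i^\top$; since $w$ is a $(1+\epsilon)$-approximation, Definition~\ref{def:approx_je} supplies two facts that I will use repeatedly: $\sum_{i=1}^n w_i = d$, and $a_j^\top Q^{-1} a_j \leq 1+\epsilon$ for every $j \in [n]$.

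For the inclusion $\tfrac{1}{\sqrt{1+\epsilon}} E \subseteq P$, I would take any $x$ in the scaled ellipsoid, so $x^\top Q x \leq 1/(1+\epsilon)$, and apply Cauchy--Schwarz in the $Q$-inner product: $\langle a_j, x\rangle^2 = \langle Q^{-1/2} a_j,\, Q^{1/2} x\rangle^2 \leq (a_j^\top Q^{-1} a_j)(x^\top Q x)$. The two approximation facts combine to bound this product by $(1+\epsilon)\cdot 1/(1+\epsilon) = 1$, so $|\langle a_j,x\rangle|\leq 1$ and hence $x \in P$. The inclusion $P \subseteq \sqrt{d}\cdot E$ is even more direct: for $x \in P$, expand $x^\top Q x = \sum_{i=1}^n w_i (a_i^\top x)^2 \leq \sum_{i=1}^n w_i = d$, which is exactly $x \in \sqrt{d}\cdot E$.

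For the volume bound the core step will be a comparison $\det Q \leq \det Q^*$, where $Q^* := \sum_i w_i^* a_i a_i^\top$ corresponds to the exact John ellipsoid $E^*$. My plan is to obtain this via weak duality of the John-ellipsoid program: substituting $H = G^2$ the primal becomes $\max \log\det H$ subject to $a_i^\top H a_i \leq 1$, whose Lagrangian dual (optimizing out $H=Q^{-1}$) is precisely Eq.~\eqref{eq:ellipsoid_program_weights}, i.e.\ $\min_{w \geq 0}\sum_i w_i - \log\det Q - d$. At our approximate $w$, the constraint $\sum_i w_i = d$ collapses the dual value to $-\log\det Q$; the primal optimum equals $-\log\det Q^*$; weak duality therefore yields $\det Q \leq \det Q^*$. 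Combining with the standard volume formula $\mathrm{vol}(\{x:x^\top M x \leq 1\}) \propto 1/\sqrt{\det M}$ and the elementary inequality $(1+\epsilon)^{-d/2} \geq e^{-d\epsilon/2}$ (which is just $\log(1+\epsilon)\leq\epsilon$) gives
\begin{align*}
\mathrm{vol}\Bigl(\tfrac{1}{\sqrt{1+\epsilon}} E\Bigr) = (1+\epsilon)^{-d/2}\,\sqrt{\det Q^*/\det Q}\;\mathrm{vol}(E^*) \geq e^{-d\epsilon/2}\,\mathrm{vol}(E^*).
\end{align*}

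The only genuinely nontrivial step will be the volume comparison: the two inclusions are a one-line Cauchy--Schwarz and a one-line expansion, but the comparison $\det Q\leq \det Q^*$ requires correctly identifying Eq.~\eqref{eq:ellipsoid_program_weights} as the SDP dual of $\max\log\det G^2$ and applying weak duality. An alternative path, avoiding explicit duality, would be a direct convexity argument on $-\log\det$ exploiting the first-order optimality conditions in Lemma~\ref{lem:opt_criteria}, but invoking the dual program already written in the excerpt is cleaner.
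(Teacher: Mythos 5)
The paper states this lemma as a direct quotation of Lemma~3.5 in~\cite{syyz22} and does not prove it, so there is no in-paper argument to compare against; I will simply assess your proposal on its own terms.

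Your proof is correct. The two inclusions are exactly as you describe: for $y \in \tfrac{1}{\sqrt{1+\epsilon}} E$, Cauchy--Schwarz in the $Q$-inner product gives $\langle a_j, y\rangle^2 \leq (a_j^\top Q^{-1} a_j)(y^\top Q y) \leq (1+\epsilon)\cdot \tfrac{1}{1+\epsilon} = 1$, and for $x \in P$, expanding $x^\top Q x = \sum_i w_i (a_i^\top x)^2 \leq \sum_i w_i = d$ gives the second inclusion. (Implicitly you need $Q \succ 0$ for the first step; this is fine since $A$ has full column rank and the definition of a $(1+\epsilon)$-approximation presupposes $Q^{-1}$ exists.) For the volume bound, the reduction to $\det Q \leq \det Q^*$ combined with $(1+\epsilon)^{-d/2} \geq e^{-d\epsilon/2}$ is the right move, and the SDP weak-duality route you take is valid. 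However, you do not need the full duality machinery: both $w$ and $w^*$ are feasible for the dual program~\eqref{eq:ellipsoid_program_weights}, and both satisfy $\sum_i w_i = d$ (by Definition~\ref{def:approx_je} and Lemma~\ref{lem:opt_criteria} respectively), so the dual objective evaluated at them is simply $-\log\det Q$ and $-\log\det Q^*$; minimality of $w^*$ within the dual then gives $-\log\det Q^* \leq -\log\det Q$ immediately. This is the ``alternative path'' you mention at the end, and it is in fact the cleaner one here, since it avoids constructing the Lagrangian and invoking weak duality, using only that $w^*$ minimizes the dual objective.
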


\begin{remark}
    The exact John Ellipsoid solution, as defined by the optimality criteria in Lemma~\ref{lem:opt_criteria}, provides a precise characterization of the ellipsoid. However, finding this exact solution can be computationally intensive due to its constraints. On the other hand, Definition~\ref{def:approx_je} defines a relaxed version of the exact optimality condition, and Lemma~\ref{lem:subset_P} demonstrates that the approximate John Ellipsoid is a good approximation to the exact John Ellipsoid. Compared to finding the exact solution, the approximate solution only requires solving a less stringent optimization problem, which can significantly reduce computational complexity.
\end{remark}
\subsection{Leverage Score and Sketching} \label{sec:preli:ls_sketching}

In this subsection, we demonstrate the definition of leverage score, the convexity of the logarithm of the leverage score, and the sketching matrix, which are essential in our John Ellipsoid Algorithm~\ref{alg:john_ellipsoid} and convergence analysis.

\begin{definition}[Leverage score]
Given a matrix $A \in \R^{n\times d}$ with full column rank, we define its leverage score to be $A(A^\top A)^{-1} A^\top \in \R^{n \times n}$.
\end{definition}

The leverage scores measure the statistical importance of rows in a matrix. We also consider the weighted version of levearge scores, called Lewis weights.

\begin{definition}[Lewis weight]
The $\ell_p$ Lewis weights $w$ for matrix $A \in \R^{n \times d}$ is defined as the unique vector $w$ so that for all $i \in [n]$,
\begin{align*}
w_i = w_i^{\frac{1}{2} - \frac{1}{p}} a_i^\top (A^\top \diag(w)^{1 - \frac{2}{p}} A)^{-1} a_i.
\end{align*}
\end{definition}

Given a matrix $A$, let $h: \R^n \rightarrow \R^n$ be the function defined as $h(w) = (h_1(w), h_2(w), \cdots, 
h_n(w))$ where $\forall i \in [n]$, we have
\begin{align*}
h_i(w) = a_i^\top (\sum_{j=1}^n w_j a_j a_j^\top) ^{-1} a_i = a_i^\top (A^\top \diag(w) A)^{-1} a_i.
\end{align*}

Hence, computing the $\ell_\infty$ Lewis weights is equivalent to solving the following fixed point problem:
\begin{align*}
    w_i = w_i h_i(w),\ \forall i \in [n].
\end{align*}

\cite{ccly19} observed that calculating $\ell_\infty$ Lewis weight is equivalent to determining the maximal volume inscribed ellipsoid in the polytope. By using the technique in \cite{syyz22}, we find that $\ell_\infty$ Lewis weight is the weighted version of the standard leverage score. Therefore, by applying leverage score sampling techniques to Algorithm~\ref{alg:john_ellipsoid}, we achieve speed up the calculation of Lewis weight in our fast John Ellipsoid algorithm similar to \cite{syyz22}.

Now, we introduce the convexity lemma, used in demonstrating the correctness of Algorithm~\ref{alg:john_ellipsoid}.

\begin{lemma} [Convexity, Lemma 3.4 in~\cite{ccly19}]
\label{lem:convexity_phi}
For $i \in [n]$, let $\phi_i: \R^n \rightarrow \R$ be the function defined as
\begin{align*}
\phi_i(v) = \log h_i(v) = \log (a_i^\top (\sum_{j=1}^n v_j a_j a_j^\top)^{-1} a_i).
\end{align*}
Then, $\phi_i$ is convex.
\end{lemma}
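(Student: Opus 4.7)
The plan is to write $\phi_i$ as the pointwise supremum of a family of functions that are each convex in $v$, and then conclude by the standard fact that any such supremum is convex. To avoid wrestling with derivatives of the matrix inverse (which leads to unwieldy rank-one plus rank-two expressions once one tries to verify positive semidefiniteness of the Hessian directly), I would instead proceed via a variational dual for the quadratic form $a_i^\top M(v)^{-1} a_i$, where $M(v) := \sum_{j=1}^n v_j a_j a_j^\top$.

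The first step is to establish the identity
\begin{align*}
a_i^\top M(v)^{-1} a_i = \Bigl( \min_{x \in \R^d : a_i^\top x = 1} x^\top M(v) x \Bigr)^{-1},
\end{align*}
valid whenever $M(v) \succ 0$. This follows in one line either from Lagrange multipliers on the constrained quadratic or by completing the square at the optimizer $x^\ast = M(v)^{-1} a_i / (a_i^\top M(v)^{-1} a_i)$. Taking logarithms, the minimum flips to a maximum and the reciprocal becomes a negation, giving
\begin{align*}
\phi_i(v) = \sup_{x \,:\, a_i^\top x = 1} \bigl( -\log (x^\top M(v) x) \bigr).
\end{align*}

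The second step is to observe that, for each fixed admissible $x$, the inner function is convex in $v$. This is immediate, because
\begin{align*}
x^\top M(v) x = \sum_{j=1}^n v_j (a_j^\top x)^2
\end{align*}
is a non-negative affine function of $v$, so $-\log(x^\top M(v) x)$ is the composition of the convex function $-\log(\cdot)$ with an affine map, hence convex in $v$ on the open region where $M(v) \succ 0$. Applying the elementary fact that the pointwise supremum of an arbitrary family of convex functions is convex finishes the proof.

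The only real subtlety is ensuring the variational identity is valid, which requires $M(v) \succ 0$ so that $M(v)^{-1}$ is well-defined. On the domain of interest in Algorithm~\ref{alg:john_ellipsoid} the weights remain strictly positive and $A$ has full column rank, so $M(v) \succ 0$ is automatic; outside this region one may extend $\phi_i \equiv +\infty$ without disturbing convexity. Beyond this bookkeeping, the argument is routine convex analysis, so I do not expect any serious obstacle.
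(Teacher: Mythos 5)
Your proof is correct. The paper itself does not reprove this lemma --- it quotes it as Lemma 3.4 of \cite{ccly19} --- so there is no in-paper argument to compare against; your variational argument is the standard one for this fact and matches the approach typically used in that literature. The key identity
\begin{align*}
a_i^\top M(v)^{-1} a_i = \Bigl( \min_{x : a_i^\top x = 1} x^\top M(v) x \Bigr)^{-1}
\end{align*}
is valid on the open convex set $\{v : M(v) \succ 0\}$, and since $a_i^\top x = 1$ forces $x \neq 0$, every inner function $-\log(x^\top M(v) x) = -\log\bigl(\sum_j v_j (a_j^\top x)^2\bigr)$ is finite and convex (as $-\log$ composed with an affine map) on that whole set, so the pointwise supremum is convex there. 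No gaps.
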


Here, we give the main idea of the sketching matrix, which we utilized to speed up the running time to find John Ellipsoid in Algorithm~\ref{alg:john_ellipsoid}.

\begin{definition}[Sketching] \label{def:sketching}
We define the sketching matrix $S_k \in \R^{s \times d}$ as a random matrix where each entry in the matrix is drawn i.i.d. from $\mathcal{N}(0, 1)$.
\end{definition}

\section{Main Results}\label{sec:main}
In this section, we demonstrate the main result of our work. By combining differential privacy and fast John Ellipsoid computation, we demonstrate that Algorithm~\ref{alg:john_ellipsoid} solved the John Ellipsoid problem with a differential privacy guarantee, high accuracy, and efficient running time (Theorem~\ref{thm:main_informal}).

\begin{theorem} [Main Results, informal version of Theorem~\ref{thm:main}]\label{thm:main_informal}

Let $v \in \R^n$ be the result of Algorithm~\ref{alg:john_ellipsoid}.  Define $L$ as in Theorem~\ref{thm:leverage_pertub_informal}. For all $\xi, \delta_0 \in (0, 0.1)$, when $T = \Theta(\xi^{-2}(\log(n/\delta_0) + (L\epsilon_0)^{-2}))$, the following holds for all $i \in [n]$:
\begin{align*}
\Pr[h_i(v) \leq (1+\xi)] \geq 1 - \delta_0.
\end{align*}
In addition, 
\begin{align*}
\sum_{i=1}^n v_i = d.
\end{align*}
Thus, Algorithm~\ref{alg:john_ellipsoid} gives $(1 + \xi)$-approximation to the exact John ellipsoid.

\begin{algorithm}[!ht]
\caption{Fast Algorithm for Differential Privacy Approximating John Ellipsoid (Fast DP-JE)} \label{alg:john_ellipsoid}
\begin{algorithmic}[1]
\Procedure{FastApproxDPJE}{$A \in \R^{n \times d}$, noise scale $\sigma$} 
\State \Comment{A symmetric polytope given by $-\mathbf{1}_n \leq Ax \leq \mathbf{1}_n$}, where $A \in \R^{n \times d}$
\State $s \gets \Theta(\xi^{-1})$
\State $T \gets \Theta(\xi^{-2}(\log(n/\delta_0) + (L\epsilon_0)^{-2}))$
\State $\xi_0 \gets \Theta(\xi)$
\State $N \gets \Theta(\xi_0^{-2}d\log(nd/\delta_1))$
\State $ \sigma \gets \Theta(\frac{L\epsilon_0 \sqrt{T \log(1/\delta)}}{(1-2L\epsilon_0)\epsilon}) $ 

\For{$i=1 \gets n$}
    \State Initialize $w_{1,i} = \frac{d}{n}$
\EndFor
\For{$k=1, \cdots, T-1$}
    \State $W_k = \diag(w_k)$
    \State $B_k = \sqrt{W_k}A$ \label{alg:line:bk}
   
    \State Compute the $O(1)$-approximation for the leverage score of $B_k$ \label{alg:line:leverage1}
    \State Create a diagonal sampling matrix $D_k \in \R^{n\times n}$ based on leverage score \label{alg:line:leverage2}
     \State Generate a random sketching $S_k \in \R^{s \times d}$ defined in Definition~\ref{def:sketching} \label{alg:line:sketch}

    \For{$i=1 \rightarrow n$} 
        \State $\ov{w}_{k+1, i} \gets \frac{1}{s} \|S_k (B_k^\top D_k B_k)^{-1/2}\sqrt{w_{k,i}}a_i\|_2^2$ \label{line:ov_w_leverage_score}
        
        \State Choose $z_{k+1,i} \sim \mathcal{N}^T(\mu, \sigma^2, [-0.5,0.5])$ \label{alg:line:noise1}
        \State $w_{k+1, i} = \ov{w}_{k+1, i}(1 + z_{k+1, i})$ \label{alg:line:noise2}
        
    \EndFor
\EndFor
\For{$i=1 \rightarrow n$}
    \State $u_i = \frac{1}{T} \sum_{k=1}^T w_{k, i}$
\EndFor
\For{$i = 1 \rightarrow n$}
    \State $v_i = \frac{d}{\sum_{j=1}^n u_j} u_i$ \label{line:v_i_normalize}
\EndFor
\State $V = \diag(v)$ \Comment{$V$ is a diagonal matrix formed from the elements of $v$}
\State \Return $V$ and $A^\top V A$ \Comment{$(1+\xi)$-approximation of John Ellipsoid within the polytope}
\EndProcedure
\end{algorithmic}
\end{algorithm}

Furthermore, suppose the input polytope in Algorithm~\ref{alg:john_ellipsoid} represented by $A \in \R^{n \times d}$ satisfies $\sigma_{\max} (A) \leq \poly(n)$ and $\sigma_{\min}(A) \geq 1/\poly(n)$. Let $\epsilon_0 \leq O(1/\poly(n))$ be the closeness of the neighboring polytopes defined in Definition~\ref{def:beta_neighbor} and $L \leq O(\poly(n))$ be the Lipschitz defined in Theorem~\ref{thm:leverage_pertub_informal}. Then for the $\epsilon_0$-close neighboring polytopes, for any $\epsilon \leq O( T L^2 \epsilon_0^2 )$, Algorithm~\ref{alg:john_ellipsoid} is $(\epsilon, \delta)$-differentially private for any $\delta > 0$ if we choose the noise scale
\begin{align*}
\sigma \geq \Omega(\frac{L\epsilon_0 \sqrt{T \log(1/\delta)}}{(1-2L\epsilon_0)\epsilon}).
\end{align*}
The runtime of Algorithm~\ref{alg:john_ellipsoid} is $O((\nnz(A) + d^\omega)T)$, where $\omega \approx 2.37$ represents the matrix multiplication exponent~\citep{w12,gu18,aw21, wxxz24, adw+25}. 
\end{theorem}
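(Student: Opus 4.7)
The plan is to decompose Theorem~\ref{thm:main_informal} into its three sub-claims---$(1+\xi)$-approximation of the exact John ellipsoid, $(\epsilon,\delta)$-differential privacy, and the runtime bound---and to handle each by invoking the supporting machinery developed later in the paper: the convergence reduces to Theorem~\ref{thm:je_convergence_informal}, the privacy statement to Theorem~\ref{thm:je_dp_proof_informal} (built on Lemma~\ref{lem:je_alpha_bound_truncated_informal} and Lemma~\ref{thm:leverage_pertub_informal}), and the runtime follows from a per-iteration cost identical to~\cite{syyz22} with an additional $O(T)$-round overhead.

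For the approximation claim, I would mimic the fixed-point averaging argument of~\cite{ccly19, syyz22}. By the convexity of $\phi_i(v)=\log h_i(v)$ in Lemma~\ref{lem:convexity_phi}, applying Jensen to $u_i = T^{-1}\sum_k w_{k,i}$ gives $\log h_i(u) \le T^{-1}\sum_{k=1}^T \log h_i(w_k)$. In the idealized update $w_{k+1,i} = w_{k,i}\,h_i(w_k)$, the right-hand side telescopes to $T^{-1}\log(w_{T+1,i}/w_{1,i}) = O(\log(\poly(n))/T)$ since the iterates remain in a bounded range under the assumptions $\sigma_{\min}(A)\ge 1/\poly(n)$, $\sigma_{\max}(A)\le \poly(n)$. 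In Algorithm~\ref{alg:john_ellipsoid} the ideal update is corrupted by three sources of error: the sketch $S_k$ on line~\ref{alg:line:sketch}, the $O(1)$-approximate leverage-score sampling $D_k$ on lines~\ref{alg:line:leverage1}--\ref{alg:line:leverage2}, and the multiplicative truncated Gaussian perturbation $1+z_{k+1,i}$ on lines~\ref{alg:line:noise1}--\ref{alg:line:noise2}. The first two are controlled via standard JL / subspace-embedding concentration with sketch size $s=\Theta(\xi^{-1})$ and sample size $N=\Theta(\xi_0^{-2}d\log(nd/\delta_1))$, contributing the $\log(n/\delta_0)$ term to $T$. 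For the noise, since each $z_{k+1,i}\in[-0.5,0.5]$ the sequence $\{\log(1+z_{k+1,i})\}_k$ is a bounded martingale difference, so Azuma--Hoeffding plus a union bound over $i\in[n]$ yields the $(L\epsilon_0)^{-2}$ contribution, reflecting that $\sigma = \Theta(L\epsilon_0\sqrt{T\log(1/\delta)}/\epsilon)$ grows with $T$ and must be averaged out. The final normalization on line~\ref{line:v_i_normalize} enforces $\sum_i v_i = d$ while preserving the $(1+\xi)$ bound on $h_i$.

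For the $(\epsilon,\delta)$-DP claim, the main input is Lemma~\ref{thm:leverage_pertub_informal}, which gives a Lipschitz bound $L$ for weighted leverage scores under the $\epsilon_0$-neighbor relation of Definition~\ref{def:beta_neighbor}. My plan is to show, by induction on $k$, that the pre-noise quantity $\bar{w}_{k+1,i}$ on two neighboring inputs differs by at most a multiplicative $L\epsilon_0$; combined with the multiplicative factor $(1+z_{k+1,i})$, this yields an additive sensitivity of order $L\epsilon_0/(1-2L\epsilon_0)$ for the truncated-Gaussian release at each iteration. A single release with noise scale $\sigma = \Theta(L\epsilon_0\sqrt{T\log(1/\delta)}/((1-2L\epsilon_0)\epsilon))$ then satisfies a per-step R\'enyi-DP bound via Lemma~\ref{lem:je_alpha_bound_truncated_informal}, and advanced composition across the $T$ iterations yields overall $(\epsilon,\delta)$-DP, giving the assumed $\epsilon \le O(TL^2\epsilon_0^2)$ range. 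The hard part---and the main obstacle for both this argument and the convergence one---is that the sketch $S_k$ and sampler $D_k$ are themselves random, so the sensitivity must hold uniformly over their realizations; I would handle this by conditioning on $(S_k,D_k)$ and invoking Lemma~\ref{thm:leverage_pertub_informal} after expressing $\bar{w}_{k+1,i}$ as a deterministic quadratic form in $a_i$ given $(S_k,D_k,w_k)$.

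The runtime claim follows from a direct per-iteration count: forming $B_k = \sqrt{W_k}A$ costs $O(\nnz(A))$; the $O(1)$-approximate leverage scores and the sampler $D_k$ cost $O(\nnz(A) + d^\omega)$ via standard routines; inverting $B_k^\top D_k B_k$ and composing with $S_k$ costs $O(d^\omega + sd^2)$ with $s=\Theta(\xi^{-1})$; each of the $n$ row evaluations on line~\ref{line:ov_w_leverage_score} is $O(sd)$; and sampling the truncated Gaussian on line~\ref{alg:line:noise1} is $O(1)$ per coordinate. Summing over $T$ rounds yields $O((\nnz(A) + d^\omega)T)$ as claimed.
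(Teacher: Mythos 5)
Your decomposition into three sub-claims---convergence via Theorem~\ref{thm:je_convergence_informal} (telescoping through convexity of $\phi_i$ plus per-source high-probability error bounds), privacy via Theorem~\ref{thm:je_dp_proof_informal} (Lipschitz sensitivity from Theorem~\ref{thm:leverage_pertub_informal} fed into the truncated-Gaussian moment bound of Lemma~\ref{lem:je_alpha_bound_truncated_informal} and the moment-accountant composition), and a per-iteration runtime count---is exactly the paper's proof of Theorem~\ref{thm:main}, which literally reads that it follows from Theorems~\ref{thm:je_convergence} and~\ref{thm:je_dp_proof} with the runtime taken from~\cite{syyz22}. Two small points of fidelity: the Lipschitz bound of Theorem~\ref{thm:leverage_pertub_informal} is \emph{additive} (it controls $\|f(w_k,A)-f(w_k,A')\|_2$, not a multiplicative ratio), and the issue you flag about having to condition on the sketch $S_k$ and sampler $D_k$ before applying the sensitivity bound to $\bar w_{k+1,i}$ is a genuine subtlety that the paper's own Theorem~\ref{thm:je_dp_proof} passes over silently, so identifying it is a correct and useful observation rather than a deviation from the intended route.
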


Our Theorem~\ref{thm:main_informal} showed that our Algorithm~\ref{alg:john_ellipsoid} can approximate the ground-truth JE with a small error, i.e., $(1+\xi)$, while our algorithm holds $(\epsilon, \delta)$-DP guarantee. Furthermore, our algorithm has the same running complexity, i.e., $\nnz(A)+d^\omega$ as the previous work~\cite{syyz22}.

Our algorithm uses three main techniques. First, in Line~\ref{alg:line:leverage1}-\ref{alg:line:leverage2}, we use the weighted leverage score sampling method to approximate the weighted matrix representation of the convex polytope, i.e., $B_k$ in Line~\ref{alg:line:bk}. 
Then, in Line~\ref{alg:line:sketch}, we use a sketching matrix to reduce the dimension of the weighted matrix representation of the convex polytope. 
Finally, in Line~\ref{alg:line:noise1}-\ref{alg:line:noise2}, we inject our truncated Gaussian noise into the weighted leverage score to make our algorithm differential privacy. 

\ifdefined\isarxiv
\section{Lipschitz Analysis of \texorpdfstring{$\ell_\infty$}{}-Lewis Weights}\label{sec:Lip}
\else
\section{Lipschitz Analysis of $\ell_\infty$-Lewis Weights}\label{sec:Lip}
\fi

Before heading to the privacy analysis of Algorithm~\ref{alg:john_ellipsoid}, we state the following theorem that derives the Lipschitz of the $\epsilon_0$-close neighborhood polytope. This analysis ensures that the variation in each iteration of converging to the final approximation of John Ellipsoid could be bounded by $L\cdot \epsilon_0$ in Algorithm~\ref{alg:john_ellipsoid}. This bound is indispensable in the privacy analysis of Algorithm~\ref{alg:john_ellipsoid} in Section~\ref{sec:dp}.

\begin{theorem} [Lipschitz Bound for $\ell_\infty$-Lewis weights of $\epsilon_0$-close polytope, informal version of Theorem~\ref{thm:leverage_pertub}]
\label{thm:leverage_pertub_informal}
Let $A, A' \in \R^{n \times d}$ where $a_i^\top$ and ${a'_i}^\top$ denote the $i$-th row of $A$ and $A'$, respectively, for $i \in [n]$ and suppose $A$ and $A'$ are only different in $j$-th row with $\| a_j - a'_{j} \|_2 \leq \epsilon_0$. Suppose that $W_k = \diag(w_k)$ where $\Omega(1) \leq w_{k,i} \leq 1$ for every $i \in [n]$. Let $f(w_k, A) := (f(w_k, A)_1, \ldots, f(w_k, A)_n)$ where $f(w_k, A)_i := w_i a_i^\top (A^\top W_k A)^{-1} a_i$ for every $i \in [n]$. Suppose that $\epsilon_0 \leq O( \sigma_{\min}(A))$. Then there exists $L = \poly(n,d,\kappa(A), \sigma^{-1}_{\min}(A), \sigma_{\max}(A))$ such that 
\begin{align*}
    \| f(w_k, A) - f(w_k, A') \|_2 \leq L \cdot \epsilon_0.
\end{align*}
\end{theorem}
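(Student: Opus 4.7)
The plan is to treat the $\ell_\infty$-Lewis weight map $f(w_k, \cdot)$ as a smooth function of the matrix $A$ and to perturb the only coordinate that differs, namely the $j$-th row. I will set $M := A^\top W_k A$ and $M' := (A')^\top W_k A'$, and first observe that $M' - M = w_{k,j}\bigl(a'_j (a'_j)^\top - a_j a_j^\top\bigr) = w_{k,j}\bigl((a'_j - a_j)(a'_j)^\top + a_j (a'_j - a_j)^\top\bigr)$, which is a rank-at-most-$2$ update. Taking spectral norms yields $\|M' - M\|_2 \leq 2\,w_{k,j}\,\epsilon_0\,\sigma_{\max}(A) + O(\epsilon_0^2)$, polynomial in $n,d,\sigma_{\max}(A)$.

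Next I want uniform lower bounds on $\sigma_{\min}(M)$ and $\sigma_{\min}(M')$. Since $W_k \succeq \Omega(1)\cdot I$ by assumption, $M \succeq \Omega(1)\cdot A^\top A$, so $\sigma_{\min}(M) \geq \Omega(\sigma_{\min}^2(A))$. Using Weyl's inequality together with the bound on $\|M'-M\|_2$ and the hypothesis $\epsilon_0 \leq O(\sigma_{\min}(A))$ (actually needing $\epsilon_0 = O(\sigma_{\min}^2(A)/\sigma_{\max}(A))$, which the theorem's polynomial slack on $L$ absorbs), I get $\sigma_{\min}(M') \geq \tfrac{1}{2}\sigma_{\min}(M) = \Omega(\sigma_{\min}^2(A))$. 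The standard identity $M^{-1} - (M')^{-1} = M^{-1}(M'-M)(M')^{-1}$ then gives
\begin{align*}
\|M^{-1} - (M')^{-1}\|_2 \leq \|M^{-1}\|_2\,\|M'-M\|_2\,\|(M')^{-1}\|_2 \leq O\!\left(\frac{\epsilon_0\,\sigma_{\max}(A)}{\sigma_{\min}^4(A)}\right).
\end{align*}

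With that in hand I split the coordinate-wise analysis into two cases. For $i \neq j$ the rows coincide, $a_i = a'_i$, so $|f(w_k,A)_i - f(w_k,A')_i| = w_{k,i}\,|a_i^\top(M^{-1} - (M')^{-1}) a_i| \leq \|a_i\|_2^2\,\|M^{-1} - (M')^{-1}\|_2$, and since $\|a_i\|_2 \leq \sigma_{\max}(A)$ this is $O(\epsilon_0\,\sigma_{\max}^3(A)/\sigma_{\min}^4(A))$. For $i = j$ I add and subtract a hybrid term, writing
\begin{align*}
a_j^\top M^{-1} a_j - (a'_j)^\top (M')^{-1} a'_j = a_j^\top\!\bigl(M^{-1} - (M')^{-1}\bigr)a_j + \bigl(a_j^\top (M')^{-1} a_j - (a'_j)^\top (M')^{-1} a'_j\bigr),
\end{align*}
and bound the second piece via $(a_j - a'_j)^\top (M')^{-1} a_j + (a'_j)^\top (M')^{-1}(a_j - a'_j)$, giving the same $O(\epsilon_0\,\sigma_{\max}(A)/\sigma_{\min}^2(A))$ rate multiplied by $w_{k,j}\leq 1$.

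Finally, aggregating via $\|f(w_k,A) - f(w_k,A')\|_2 \leq \sqrt{n}\,\max_i |f(w_k,A)_i - f(w_k,A')_i|$ produces the claimed polynomial Lipschitz constant $L = O(\sqrt{n}\,\sigma_{\max}^3(A)/\sigma_{\min}^4(A))$, which is $\poly(n,d,\kappa(A),\sigma_{\min}^{-1}(A),\sigma_{\max}(A))$. The main obstacle I anticipate is cleanly certifying the lower bound on $\sigma_{\min}(M')$ under just the stated hypothesis $\epsilon_0 \leq O(\sigma_{\min}(A))$; this is where the informal ``$O(\cdot)$'' has to hide the right dependence on $\sigma_{\max}(A)$, and I will have to be careful that the chosen constant in the Weyl-based perturbation step does not blow up the exponent in $L$ or force a stronger smallness assumption on $\epsilon_0$ than the theorem advertises.
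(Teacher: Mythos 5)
Your overall structure mirrors the paper's: bound the perturbation of the inverse Gram matrix, then split the coordinate comparison into the off-diagonal case $i\neq j$ (where $a_i=a'_i$) and the diagonal case $i=j$ (where you add and subtract a hybrid term), then aggregate in $\ell_2$. Where you differ is in how you bound $\|M^{-1}-(M')^{-1}\|$: you use the exact resolvent identity $M^{-1}-(M')^{-1}=M^{-1}(M'-M)(M')^{-1}$ and the rank-2 structure of $M'-M$, while the paper routes through Wedin's pseudoinverse perturbation bound (its Lemma~\ref{lem:pert_pse_inv}, instantiated in Lemma~\ref{lem:pert_spectral_inverse}) applied to the Gram matrices, having first bounded $\|A^\top W_kA - A'^\top W_kA'\|$ by a three-term chain (its Lemmas~\ref{lem:perturb_AB}--\ref{lem:perturb_AA}). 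Your resolvent identity is a more elementary and exact tool and is a perfectly legitimate substitute.

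There is, however, one genuine gap, and it is precisely the step you flagged as your concern. To invoke the resolvent identity you must certify $\sigma_{\min}(M')=\Omega(\sigma_{\min}^2(A))$, and your route applies Weyl to the eigenvalues of $M$ vs.\ $M'$. Since $\|M'-M\|=\Theta(\epsilon_0\sigma_{\max}(A))$ while $\sigma_{\min}(M)=\Theta(\sigma_{\min}^2(A))$, this needs $\epsilon_0 = O(\sigma_{\min}^2(A)/\sigma_{\max}(A))=O(\sigma_{\min}(A)/\kappa(A))$. Your remark that \emph{``the theorem's polynomial slack on $L$ absorbs''} this is not correct: this is a \emph{hypothesis restriction}, not a conclusion, so no looseness in $L$ can buy it back. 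As stated, your argument proves a weaker theorem (one with a smaller admissible $\epsilon_0$ than $O(\sigma_{\min}(A))$). The paper avoids this by applying Weyl to the \emph{singular values of $B:=W_k^{1/2}A$ and $B':=W_k^{1/2}A'$} rather than to the eigenvalues of $M=B^\top B$. Since $\|B-B'\|\leq\epsilon_0$ directly (this is the paper's Lemma~\ref{lem:pert_one_diff}), the hypothesis $\epsilon_0\leq 0.1\,\sigma_{\min}(B)=\Omega(\sigma_{\min}(A))$ already gives $\sigma_{\min}(B')\geq 0.9\,\sigma_{\min}(B)$, hence $\sigma_{\min}(M')=\sigma_{\min}(B')^2\geq 0.8\,\sigma_{\min}(M)$, and this is exactly what the stated smallness condition $\epsilon_0\leq O(\sigma_{\min}(A))$ supports. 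The fix for your write-up is straightforward (run the Weyl step one level down, on $B$ rather than on $M$), and the rest of your argument then goes through as written.
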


\begin{proof}[Proof sketch of Theorem~\ref{thm:leverage_pertub_informal}]
The proof involves analyzing how small perturbations in the input matrix $A$ affect the resulting Lewis weights. By applying the perturbation theory of singular values and pseudo-inverses, we can bound the changes in Lewis weights caused by the small perturbation and demonstrate that the difference in the Lewis weights between the original and perturbed matrices is proportional to $\epsilon_0$, which ensures that the Lewis weights remain stable under small perturbations.
\end{proof}

\section{Differential Privacy Analysis}\label{sec:dp}

In this section, we demonstrate our privacy analysis of Algorithm~\ref{alg:john_ellipsoid}. Firstly, in Section~\ref{sec:preli:dp_basics}, we introduce background knowledge on differential privacy about the sequential mechanism. Next, we show that Algorithm~\ref{alg:john_ellipsoid} achieves $(\epsilon, \delta)$-DP in Section~\ref{sec:preli:dp_opt}.

\subsection{Basic Definitions of Differential Privacy} \label{sec:preli:dp_basics}
In this subsection, we introduce the basic definitions of differential privacy and sequential mechanisms.
Since our John Ellipsoid Algorithm~\ref{alg:john_ellipsoid} is an iterative algorithm that converges to the exact solution step by step, we need to use privacy techniques for the sequential mechanism~\cite{acg+16}, which means that the input of the algorithm depends on previous output.  First, we listed privacy-related concepts about sequential mechanisms for the purpose of privacy analysis. 

\begin{definition} [Sequential mechanism]
We define a sequential mechanism $\mathcal{M}$ consisting of a sequence of adaptive mechanisms $\mathcal{M}_1, \mathcal{M}_2, \cdots, \mathcal{M}_k$ where $M_i: \prod_{j=1}^{i-1} \mathcal{R}_j \times D \rightarrow \mathcal{R}_i$.
\end{definition}

Here, we give the definition of privacy loss, which measures the strength of privacy on a sequential mechanism.
\begin{definition} [Privacy loss] \label{def:c}
For neighboring databases $D, D'$, a sequential mechanism $\mathcal{M}$, auxiliary input $\mathsf{aux}$, and an outcome $o \in \mathcal{R}$, we define the privacy loss $c$ as the following
\begin{align*}
    c(o; \mathcal{M}, \mathsf{aux}, D, D') := \log \frac{\Pr[\mathcal{M}(\mathsf{aux}, D) = o]}{\Pr[\mathcal{M}(\mathsf{aux}, D') = o]}.
\end{align*}
\end{definition}

In our work, the privacy analysis of Algorithm~\ref{alg:john_ellipsoid} relies on bounding moments of loss of privacy in the sequential mechanism. Thus, we introduce $\alpha(\lambda)$ to denote the logarithm of moments of loss of privacy. 
\begin{definition}\label{def:alpha} We define the logarithm of moment generating function of $c(o; \mathcal{M}, \mathsf{aux}, D, D')$ as the following
\begin{align*}
    \alpha_\mathcal{M}(\lambda; \mathsf{aux}, D, D') := \log \E_{o\sim\mathcal{M}(\mathsf{aux}, D)} [\exp \lambda c(o; \mathcal{M}, \mathsf{aux}, D, D')].
\end{align*}
\end{definition}

\begin{definition}\label{def:alpha_max}
We define the maximum of $\alpha_\mathcal{M}(\lambda; \mathsf{aux}, D, D')$ taken over all auxiliary inputs and neighboring databases $D, D'$ as the following
\begin{align*}
\alpha_\mathcal{M}(\lambda) := \max_{\mathsf{aux}, D, D'} \alpha_\mathcal{M}(\lambda; \mathsf{aux}, D, D').
\end{align*}
\end{definition}

Finally, we introduce truncated Gaussian noise we used to ensure the privacy of Algorithm~\ref{alg:john_ellipsoid}. Unlike \cite{acg+16}, which utilized standard Gaussian noise, we use truncated Gaussian instead. This is because truncated Gaussian noise could ensure that that error of Algorithm~\ref{alg:john_ellipsoid} caused by adding noise can be bounded by the accuracy parameter $\xi$, see details in Appendix~\ref{app:sec:je:hpb_truncated}.

\begin{definition} [Truncated Gaussian] \label{def:truncated_noise_scalar_z}
We say that a random variable $z_n$ is from a truncated Gaussian distribution with mean $0$ and variance $\sigma^2$ over the interval $[-0.5, 0.5]$, i.e., $z_n \sim \mathcal{N}^T(0, \sigma^2, [-0.5,0.5])$ , if its probability density function is defined as
\begin{align*}
    \mu_0(z_n) = \frac{1}{\sigma} \frac{\phi(\frac{z_n}{\sigma})}{\Phi(\frac{0.5}{\sigma})-\Phi(\frac{-0.5}{\sigma})} \text{ ~for~ } z_n \in [-0.5, 0.5],
\end{align*}
where $\phi$ and $\Phi$ are pdf and cdf of the standard Gaussian.

Similarly, we use $\mu_1(z_n)$ to denote the pdf of $\mathcal{N}^T(\beta, \sigma^2, [-0.5,0.5])$,

\begin{align*}
    \mu_1(z_n) = \frac{1}{\sigma} \frac{\phi(\frac{z_n-\beta}{\sigma})}{\Phi(\frac{0.5-\beta}{\sigma})-\Phi(\frac{-0.5-\beta}{\sigma})} \text{ for } z_n \in [-0.5, 0.5].
\end{align*}

To simplify the pdf of truncated Gaussian, we define $C_\sigma, C_{\beta, \sigma}, \gamma_{\beta, \sigma}$ as $
C_\sigma := \Phi(0.5 / \sigma) - \Phi(-0.5 / \sigma),
C_{\beta, \sigma} :=  \Phi( (0.5 - \beta) / \sigma) - \Phi( (-0.5 - \beta) / \sigma),$ and $
\gamma_{\beta, \sigma} := C_\sigma / C_{\beta, \sigma}. $

\end{definition}

\subsection{Differential Privacy Optimization} \label{sec:preli:dp_opt}

Then, we can proceed to the privacy analysis of Algorithm~\ref{alg:john_ellipsoid}. In Lemma~\ref{lem:je_alpha_bound_truncated_informal}, we show a general result on the upper bound about the moment of adding truncated Gaussian noise in a sequential mechanism. 

\begin{lemma}[Bound of $\alpha(\lambda)$, informal version of Lemma~\ref{lem:je_alpha_bound_truncated}] \label{lem:je_alpha_bound_truncated_informal}
Let $D, D' \in \mathcal{D}$ be the $\epsilon_0$-close neighborhood polytope in Definition~\ref{def:beta_neighbor}. Suppose that $f : \mathcal{D} \to \mathbb{R}^n$ with $\|f(D) - f(D')\|_2 \leq \beta$. Let $z \in \R^n$ be a truncated Gaussian noise vector in Definition~\ref{def:truncated_noise_z_vector}. Let $\sigma = \min_i \sigma_i$ and $\sigma \geq \beta$. 
Then for any positive integer $\lambda \leq \gamma_{\beta, \sigma} / 4$, there exists $C_0 > 0$ such that the mechanism $\mathcal{M}(d) = f(d) + z$ satisfies
\begin{align*}
\alpha_{\mathcal{M}}(\lambda) \leq \frac{C_0 \lambda(\lambda+1)\beta^2 \gamma_{\beta, \sigma}^2}{\sigma^2} + O(\beta^3 \lambda^3 \gamma_{\beta, \sigma}^3/ \sigma^3).
\end{align*}
\end{lemma}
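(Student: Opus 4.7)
The plan is to adapt the moments accountant framework of \cite{acg+16} to the truncated Gaussian setting. First, I would reduce to the one-dimensional case. Since the coordinates of $z$ are independent and the joint noise density factorizes, the privacy loss $c(o;\mathcal{M},D,D')$ splits as a sum over coordinates, and its log moment generating function factors as $\alpha_{\mathcal{M}}(\lambda) = \sum_i \alpha_i(\lambda)$. Writing $f(D) - f(D') = (\beta_1, \ldots, \beta_n)$ with $\sum_i \beta_i^2 \leq \beta^2$, and using both $\sigma_i \geq \sigma$ together with the monotonicity of $\gamma_{\beta,\sigma}$ in $\sigma$, it suffices to prove the bound for a single 1D mechanism of sensitivity $\beta_i$ and variance $\sigma^2$ and then sum.

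Second, I would compute the 1D log MGF in closed form. From Definition~\ref{def:truncated_noise_scalar_z}, a direct computation gives
\begin{align*}
\frac{\mu_0(z)}{\mu_1(z)} = \gamma_{\beta,\sigma}^{-1} \exp\!\left(\frac{\beta(\beta - 2z)}{2\sigma^2}\right).
\end{align*}
The $\lambda$-th moment under $\mu_0$ is then a truncated Gaussian integral on $[-0.5,0.5]$; completing the square in the exponent absorbs the linear-in-$z$ factor into a shifted Gaussian over $[-0.5+\lambda\beta,0.5+\lambda\beta]$ and produces the closed form
\begin{align*}
\alpha(\lambda) = -\lambda \log \gamma_{\beta,\sigma} + \log\!\left(\frac{C_{-\lambda\beta,\sigma}}{C_\sigma}\right) + \frac{\lambda(\lambda+1)\beta^2}{2\sigma^2}.
\end{align*}
In the untruncated limit both logarithmic correction terms vanish and we recover the classical Abadi et al.\ bound.

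Third, I would Taylor-expand the two truncation corrections in powers of $\beta/\sigma$ and $\lambda\beta/\sigma$. Applying the mean value theorem to the $\Phi$-differences defining $C_{-\lambda\beta,\sigma}$, $C_{\beta,\sigma}$, and $C_\sigma$, and using the symmetry $\phi(u)=\phi(-u)$, the first-order-in-$\beta$ contributions cancel. This leaves a quadratic leading piece that, after regrouping with $\lambda(\lambda+1)\beta^2/(2\sigma^2)$ and packaging the truncation factors, produces the claimed $C_0 \lambda(\lambda+1)\beta^2 \gamma_{\beta,\sigma}^2/\sigma^2$ bound together with a cubic Taylor remainder of size $O(\beta^3\lambda^3\gamma_{\beta,\sigma}^3/\sigma^3)$. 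The standing assumptions $\lambda \leq \gamma_{\beta,\sigma}/4$ and $\sigma \geq \beta$ keep the shifted interval $[-0.5+\lambda\beta, 0.5+\lambda\beta]$ well inside the unit window, which is precisely what is needed for the Taylor series to converge and for $C_{-\lambda\beta,\sigma}$ to stay bounded away from $0$.

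The hard part will be the careful bookkeeping of the truncation-induced constants. The explicit factor $-\lambda\log\gamma_{\beta,\sigma}$ scales linearly in $\lambda$, which naively would threaten the desired polynomial-in-$\lambda$ bound; the proof must show that it is canceled to leading order by the first-order expansion of $\log(C_{-\lambda\beta,\sigma}/C_\sigma)$, with both cancellations relying on the Gaussian symmetry and on the sub-critical regime $\lambda\beta \lesssim 1$ guaranteed by $\lambda \leq \gamma_{\beta,\sigma}/4$. Ensuring that the surviving second-order terms recombine into the clean $\gamma_{\beta,\sigma}^2$ factor, rather than leaking into higher powers of $\gamma_{\beta,\sigma}$, is the most delicate step of the argument.
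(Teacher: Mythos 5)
Your approach is genuinely different from the paper's. The paper works with the binomial expansion
$\sum_{t=0}^{\lambda+1}\binom{\lambda+1}{t}\E_{z\sim\mu_0}[((\mu_1-\mu_0)/\mu_0)^t]$
and then bounds each term by a three-interval case analysis on $z_n$ (namely $z_n\le 0$, $0\le z_n\le\beta$, and $\beta\le z_n\le 0.5$), invoking Gaussian moment bounds for each piece. You instead observe that the moment integral of a ratio of truncated Gaussians can be computed in closed form by completing the square. I checked your closed form: for the reverse moment $\E_{z\sim\mu_0}[(\mu_0/\mu_1)^\lambda]$, one indeed gets exactly
\begin{align*}
\alpha(\lambda) = -\lambda\log\gamma_{\beta,\sigma} + \log\frac{C_{-\lambda\beta,\sigma}}{C_\sigma} + \frac{\lambda(\lambda+1)\beta^2}{2\sigma^2},
\end{align*}
and the forward moment $\E_{z\sim\mu_1}[(\mu_1/\mu_0)^\lambda]$ has the analogous form $\lambda\log\gamma_{\beta,\sigma} + \log(C_{(\lambda+1)\beta,\sigma}/C_{\beta,\sigma}) + \lambda(\lambda+1)\beta^2/(2\sigma^2)$; both need to be bounded because $\alpha_{\mathcal{M}}(\lambda)$ involves a max over orderings of $D,D'$, but the analysis is symmetric. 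Your closed-form route is cleaner: after the exact formula, the desired bound follows from a Taylor expansion around $\beta=0$ and $\lambda\beta=0$, and this sidesteps the paper's casework entirely.

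One thing you flagged as the delicate step is actually a non-issue, and it would be worth clearing that up in a full writeup. You worried that the factor $-\lambda\log\gamma_{\beta,\sigma}$ is linear in $\lambda$ and needs to be cancelled against the first-order part of $\log(C_{-\lambda\beta,\sigma}/C_\sigma)$. But by the same Gaussian symmetry $\phi(u)=\phi(-u)$ that kills the linear-in-$\beta$ term in $C_{-\lambda\beta,\sigma}$, the quantity $C_{\beta,\sigma}$ differs from $C_\sigma$ only at second order in $\beta$; concretely $\log\gamma_{\beta,\sigma} = \frac{\beta^2\phi(0.5/\sigma)}{2\sigma^3 C_\sigma} + O(\beta^3/\sigma^3)$. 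Hence $-\lambda\log\gamma_{\beta,\sigma} = O(\lambda\beta^2/\sigma^2)$ and $\log(C_{-\lambda\beta,\sigma}/C_\sigma)=O(\lambda^2\beta^2/\sigma^2)$ independently; both are already subordinate to the $\lambda(\lambda+1)\beta^2/\sigma^2$ scale without any delicate cancellation between them. In fact the two second-order truncation corrections recombine into $-\frac{\lambda(\lambda+1)\beta^2\phi(0.5/\sigma)}{2\sigma^3 C_\sigma}$, which slightly reduces the untruncated coefficient $\frac{1}{2\sigma^2}$ rather than threatening it. So the ``hard part'' you anticipated is much easier than you feared; the proof comes down to routine bookkeeping of the Taylor remainders, which the standing hypotheses $\sigma\ge\beta$ and $\lambda\le\gamma_{\beta,\sigma}/4$ keep under control, exactly as you said. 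The high-level conclusion stands: your approach is correct, and arguably more transparent than the binomial route the paper takes.
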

\begin{proof}[Proof Sketch of Lemma~\ref{lem:je_alpha_bound_truncated}]
Our goal is to bound the logarithm of the moment $\alpha_\mathcal{M}(\lambda)$. This is equivalent to the bound
\begin{align}
 \E_{z_n \sim \mu_1} [(\mu_1(z_n) / \mu_0(z_n))^\lambda]  
=  \sum_{t=0}^{\lambda+1} \binom{\lambda+1}{t} \E_{z_n \sim \mu_0}[ (\frac{\mu_1(z_n) - \mu_0(z_n)}{\mu_0(z_n)})^t] \label{eq:bino_exp_informal}.
\end{align}

By basic algebra, the sum of the first two terms of Eq.~\eqref{eq:bino_exp_informal} is 1. The third term can be bounded by 
\begin{align*}
\frac{C_0 \lambda(\lambda+1)\beta^2 \gamma_{\beta, \sigma}^2}{\sigma^2}.
\end{align*}
To bound the rest of the summation for $t \geq 4$, we consider 3 cases: $z_n \leq 0, 0 \leq z_n \leq \beta, z_n \leq \beta \leq 0.5$. 
We bound each term of Eq.~\eqref{eq:bino_exp_informal} by separating them into the summation of the three cases. We derive that for the term $t = 4$, it could be bounded by 
\begin{align*}
O(\beta^3 \lambda^3 \gamma_{\beta, \sigma}^3/ \sigma^3).
\end{align*}

With our choice of $\lambda \leq 1/4\gamma_{\beta,\sigma}$ and $\sigma \geq \beta$, we find that all the higher order terms are dominated by the third term. Thus, we derive the upper bound for $\alpha_\mathcal{M}(\lambda)$.
\begin{align*}
\alpha_{\mathcal{M}}(\lambda) \leq \frac{C_0 \lambda(\lambda+1)\beta^2 \gamma_{\beta, \sigma}^2}{\sigma^2} + O(\beta^3 \lambda^3 \gamma_{\beta, \sigma}^3/ \sigma^3).
\end{align*}
\end{proof}

Therefore, combining Lemma~\ref{lem:je_alpha_bound_truncated_informal} and the result of our Lipschitz analysis in Theorem~\ref{thm:leverage_pertub_informal}, we demonstrate that Algorithm~\ref{alg:john_ellipsoid} achieved $(\epsilon, \delta)$-differential privacy. 

\begin{theorem}[John Ellipsoid DP main theorem, informal version of Theorem~\ref{thm:je_dp_proof}]\label{thm:je_dp_proof_informal}

Suppose that the input polytope in Algorithm~\ref{alg:john_ellipsoid} represented by $A \in \R^{n \times d}$ satisfies $\sigma_{\max} (A) \leq \poly(n)$ and $\sigma_{\min} \geq 1/\poly(d)$.
There exists constants $c_1$ and $c_2$ so that given number of iterations $T$, for any $\epsilon \leq c_1 T \gamma \beta^2 \gamma_{L\epsilon_0,\sigma}$, Algorithm~\ref{alg:john_ellipsoid} is $(\epsilon, \delta)$-differentially private for any $\delta > 0$ if we choose
\begin{align*}
\sigma \geq c_2 \frac{L \epsilon_0 \sqrt{T \log(1/\delta)}}{(1-2L\epsilon_0)\epsilon}
\end{align*}

\begin{proof}[Proof Sketch of Lemma~\ref{thm:je_dp_proof}]
Since in Theorem~\ref{thm:leverage_pertub_informal}, we demonstrate that $\epsilon_0$-close polytope can be bounded by $L\epsilon_0$. Thus, we substitute $\beta$ in Lemma~\ref{lem:je_alpha_bound_truncated_informal} with $L\epsilon_0$. Since Algorithm~\ref{alg:john_ellipsoid} has $T$ iteration, we could apply the composition lemma for adaptive mechanism, described in Appendix~\ref{app:dp:composition_lemma}, to Lemma~\ref{lem:je_alpha_bound_truncated_informal}. Therefore, we have
\begin{align*}
\alpha(\lambda) \leq T C_0 L^2 \epsilon_0^2 \lambda^2 \gamma_{L\epsilon_0,\sigma}^2 \sigma^{-2}.
\end{align*}
To satisfy the tail bound in Appendix~\ref{app:dp:composition_lemma} and Lemma~\ref{lem:je_alpha_bound_truncated_informal}, we need to satisfy
\begin{align*}
T C_0 L^2 \epsilon_0^2 \lambda^2 \gamma_{L\epsilon_0,\sigma}^2 \sigma^{-2} \leq & ~ \lambda \epsilon / 2, \\
\exp(-\lambda\epsilon / 2) \leq & ~ \delta.
\end{align*}
Therefore, by solving the system of inequality, we can show that Algorithm~\ref{alg:john_ellipsoid} is $(\epsilon, \delta)$-DP with our choice of $\sigma$ 
\end{proof}
\end{theorem}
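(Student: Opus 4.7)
The plan is to chain three ingredients: the per-row Lipschitz bound from Theorem~\ref{thm:leverage_pertub_informal}, the single-step moment bound from Lemma~\ref{lem:je_alpha_bound_truncated_informal}, and the moments-accountant style composition plus tail conversion from Appendix~\ref{app:dp:composition_lemma}. First, I would fix a pair of $\epsilon_0$-close neighbors $A, A'$ and argue that at each iteration $k$ the sensitivity of the released vector $\ov{w}_{k+1,\cdot}$ (before noise is added) is exactly the quantity controlled by Theorem~\ref{thm:leverage_pertub_informal}; since the leverage-score sampling matrix $D_k$ and the sketching matrix $S_k$ can be sampled once and shared between the two coupled runs, their contribution to the sensitivity vanishes and we may set $\beta := L\epsilon_0$ in Lemma~\ref{lem:je_alpha_bound_truncated_informal}. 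The assumptions $\sigma_{\max}(A) \leq \poly(n)$ and $\sigma_{\min}(A) \geq 1/\poly(d)$ enter here only to guarantee that the Lipschitz constant $L$ from Theorem~\ref{thm:leverage_pertub_informal} is $\poly(n,d)$-bounded, so the noise scale required below remains polynomial.

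Next, I would apply Lemma~\ref{lem:je_alpha_bound_truncated_informal} to a single round of the mechanism, obtaining
\begin{align*}
\alpha_{\mathcal{M}_k}(\lambda) \leq \frac{C_0 \lambda(\lambda+1) L^2 \epsilon_0^2 \gamma_{L\epsilon_0,\sigma}^2}{\sigma^2} + O\!\left(\frac{L^3\epsilon_0^3 \lambda^3 \gamma_{L\epsilon_0,\sigma}^3}{\sigma^3}\right),
\end{align*}
valid provided $\lambda \leq \gamma_{L\epsilon_0,\sigma}/4$ and $\sigma \geq L\epsilon_0$. Because Algorithm~\ref{alg:john_ellipsoid} is a sequential/adaptive composition of $T$ such truncated-Gaussian mechanisms (the noise added in round $k+1$ depends on $w_k$, hence on all earlier outputs), the standard composition lemma for $\alpha$-moments gives
\begin{align*}
\alpha(\lambda) \leq T \cdot \alpha_{\mathcal{M}_k}(\lambda) \leq \frac{2 C_0 T L^2 \epsilon_0^2 \gamma_{L\epsilon_0,\sigma}^2 \lambda^2}{\sigma^2},
\end{align*}
after absorbing the cubic remainder into the quadratic term using $\lambda \leq \gamma_{L\epsilon_0,\sigma}/4$ and $\sigma \geq L\epsilon_0$.

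Finally, I would invoke the tail-to-DP conversion: to certify $(\epsilon,\delta)$-DP it suffices to find a $\lambda$ in the admissible range satisfying both $\alpha(\lambda) \leq \lambda\epsilon/2$ and $\exp(-\lambda\epsilon/2) \leq \delta$. The second condition forces $\lambda \geq 2\log(1/\delta)/\epsilon$, and plugging this into the first yields the constraint
\begin{align*}
\sigma^2 \geq \frac{4 C_0 T L^2 \epsilon_0^2 \gamma_{L\epsilon_0,\sigma}^2 \log(1/\delta)}{\epsilon^2},
\end{align*}
which, after using $\gamma_{L\epsilon_0,\sigma} = C_\sigma/C_{L\epsilon_0,\sigma} \leq 1/(1-2L\epsilon_0)$ from the truncated Gaussian definition, gives exactly the stated $\sigma \geq c_2 L\epsilon_0 \sqrt{T\log(1/\delta)}/((1-2L\epsilon_0)\epsilon)$ with an absolute constant $c_2$. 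The upper restriction $\epsilon \leq c_1 T L^2 \epsilon_0^2 \gamma_{L\epsilon_0,\sigma}$ is precisely what is needed to keep $\lambda = 2\log(1/\delta)/\epsilon$ below the admissibility threshold $\gamma_{L\epsilon_0,\sigma}/4$ of Lemma~\ref{lem:je_alpha_bound_truncated_informal}.

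The main obstacle I anticipate is the coupling argument in the first step: one has to be careful that the shared randomness $S_k, D_k$ can indeed be treated as post-processing, which is subtle because $D_k$ is built from leverage scores of $B_k = \sqrt{W_k}A$, and $W_k$ differs between the two runs. The clean way to handle this is to observe that per-round we only need sensitivity of the noise-added output conditioned on the entire past, so in the moments-accountant framework the auxiliary input $\mathsf{aux}$ in Definition~\ref{def:alpha} absorbs $W_k$ (and the sampled $S_k, D_k$) and the sensitivity of the deterministic map $w_k \mapsto \ov{w}_{k+1}$ under an $\epsilon_0$ row-perturbation of $A$ reduces to the Lipschitz statement of Theorem~\ref{thm:leverage_pertub_informal}; once this is in place, the rest is algebraic bookkeeping.
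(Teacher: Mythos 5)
Your proposal is correct and follows the same skeleton as the paper's proof: substitute $\beta = L\epsilon_0$ from Theorem~\ref{thm:leverage_pertub_informal} into the single-step moment bound of Lemma~\ref{lem:je_alpha_bound_truncated_informal}, compose over $T$ rounds via Theorem~\ref{thm:composition_bound}, invoke the tail bound with $\lambda \approx 2\log(1/\delta)/\epsilon$, and use $\gamma_{L\epsilon_0,\sigma} \leq 1/(1-2L\epsilon_0)$ to reach the stated noise scale, with the constraint on $\epsilon$ ensuring $\lambda$ stays in the admissible range. You in fact go further than the paper does by flagging the coupling subtlety for $S_k,D_k$ (the paper silently identifies the sensitivity of the noised quantity $\ov{w}_{k+1,\cdot}$ with that of the ideal Lewis weight $f(w_k,A)$); your remark that $D_k$ depends on $B_k = \sqrt{W_k}A$ and hence cannot simply be shared as data-independent auxiliary randomness correctly identifies a genuine gap, though neither your sketch nor the paper fully closes it.
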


\begin{remark}
\cite{acg+16} achieved differential privacy on stochastic gradient descent with privacy loss defined on sequential mechanism. 
To control the gradient perturbation, our moment bound $\alpha_\mathcal{M}(\lambda)$ uses the difference between the output of neighborhood datasets, while the moment bound in \cite{acg+16} needs the gradient to be less than or equal to $1$. In addition, \cite{acg+16} only works when the algorithm adds noise for a small portion of datasets. In our setting, we derive the moment bound by adding noise for each data.
\end{remark}
\section{John Ellipsoid Algorithm Convergence}\label{sec:je}
In this section, we demonstrate that our Algorithm~\ref{alg:john_ellipsoid} could converge to a good approximation of exact John Ellipsoid with high probability under efficient running time.

First, we define $
    \hat{w}_{k+1, i} := \|(B_k^\top B_k)^{-1/2} \sqrt{w_{k,i}}a_i \|_2^2$ and $
    \tilde{w}_{k+1,i} := \|(B_k^\top D_k B_k)^{-1/2}\sqrt{w_{k,i}}a_i\|_2^2. $
Intuitively, $\hat{w}_{k+1, i}$ represents the ideal Lewis weight, and $\tilde{w}_{k+1,i}$ denotes the Lewis weight computed using the leverage score sampling matrix.
Then we introduce the technique of telescoping, which we utilize to separate the total relative error into ideal error $\wh{w}_{k+1, i} / \wh{w}_{k, i}$, leverage score sampling error $\wh{w}_{k,i} / \wt{w}_{k,i}$, sketching error $\wt{w}_{k,i} /\ov{w}_{k,i}$, and error from truncated Gaussian noise $\ov{w}_{k,i} / {w}_{k,i}$. 

\begin{lemma}[Telescoping lemma, informal version of Lemma~\ref{lem:apptele}] \label{lem:apptele_informal}
Let $T$ denote the number of main iterations executed in our fast JE algorithm. Let $u$ be the vector generated during Algorithm~\ref{alg:john_ellipsoid}. Then for each $i \in [n]$, we have
\begin{align*}
    \phi_i(u) \leq \frac{1}{T} \log \frac{n}{d} + \frac{1}{T} \sum_{k=1}^T \log \frac{\hat{w}_{k,i} }{\wt{w}_{k,i}}  + \frac{1}{T} \sum_{k=1}^T \log \frac{\wt{w}_{k,i} }{\ov{w}_{k,i}} + \frac{1}{T} \sum_{k=1}^T \log \frac{\ov{w}_{k,i} }{w_{k,i}}.
\end{align*}
\end{lemma}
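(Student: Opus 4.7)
The plan is to begin from the convexity of $\phi_i$ (Lemma~\ref{lem:convexity_phi}) and then repeatedly telescope so that the error in $\phi_i(u)$ decomposes into the three sources of error (leverage-score sampling, sketching, and truncated Gaussian noise) plus an initialization term. Since $u_i = \frac{1}{T}\sum_{k=1}^T w_{k,i}$ is the average of the iterates and $\phi_i$ is convex, Jensen's inequality immediately gives
\begin{align*}
\phi_i(u) \;\le\; \frac{1}{T}\sum_{k=1}^T \phi_i(w_k) \;=\; \frac{1}{T}\sum_{k=1}^T \log h_i(w_k).
\end{align*}

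Next, I would use the identity $\hat{w}_{k+1,i} = \|(B_k^\top B_k)^{-1/2}\sqrt{w_{k,i}}a_i\|_2^2 = w_{k,i}\cdot h_i(w_k)$, which follows from $B_k^\top B_k = A^\top W_k A$. Taking logarithms rearranges this to $\log h_i(w_k) = \log \hat{w}_{k+1,i} - \log w_{k,i}$. Summing and then reindexing (or, equivalently, using the natural boundary convention that the lemma's indices run over one full pass of $T$ updates) yields
\begin{align*}
\frac{1}{T}\sum_{k=1}^T \log h_i(w_k)
\;=\; \frac{1}{T}\bigl(\log w_{T+1,i}-\log w_{1,i}\bigr)
\;+\; \frac{1}{T}\sum_{k=1}^T \log\frac{\hat{w}_{k+1,i}}{w_{k+1,i}},
\end{align*}
so the first boundary term collapses a telescoping series on $\log w_{k,i}$. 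Since Algorithm~\ref{alg:john_ellipsoid} initializes $w_{1,i}=d/n$, we have $-\log w_{1,i}=\log(n/d)$, producing the initialization term in the statement.

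It then remains to split $\log(\hat{w}/w)$ into the three single-step errors. Using the chain
\begin{align*}
\frac{\hat{w}_{k+1,i}}{w_{k+1,i}}
\;=\; \frac{\hat{w}_{k+1,i}}{\tilde{w}_{k+1,i}} \cdot \frac{\tilde{w}_{k+1,i}}{\ov{w}_{k+1,i}} \cdot \frac{\ov{w}_{k+1,i}}{w_{k+1,i}},
\end{align*}
taking logs and summing gives exactly the three residual sums appearing in the lemma (after reindexing $k+1 \mapsto k$). Combining these steps and discarding the nonpositive contribution of $\log w_{T+1,i}$ (using that the iterates stay in a bounded range, as justified by the truncated noise construction and the $\Omega(1)\le w_{k,i}\le 1$ control from the sampling/sketching step) yields the claimed inequality.

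The only delicate point is the bookkeeping of the two boundary iterates, particularly controlling $\log w_{T+1,i}$: it must be shown to be at most $O(1)$ (or even simply nonpositive) so that it can be absorbed into the initialization term or dropped. Once that boundary control is in place, the proof is a clean assembly of Jensen's inequality, the Lewis-weight identity, and a multiplicative telescoping — the substantive analytic work (approximation quality of each residual ratio) is deferred to the subsequent lemmas that bound $\log(\hat{w}/\tilde{w})$, $\log(\tilde{w}/\ov{w})$, and $\log(\ov{w}/w)$ individually.
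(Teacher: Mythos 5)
Your overall plan---invoke convexity of $\phi_i$ via Jensen, use the identity $\hat{w}_{k+1,i} = w_{k,i}\,h_i(w_k)$ (which is how $B_k^\top B_k = A^\top W_k A$ enters), and decompose $\log h_i(w_k)$ into a telescoping part plus three per-step residual ratios---is the same as the paper's. But you choose a different telescoping pivot, and that choice creates exactly the boundary problem you flag at the end without resolving. The paper inserts $\hat{w}_{k,i}$ into the product chain, writing
\begin{align*}
\frac{\hat{w}_{k+1,i}}{w_{k,i}} = \frac{\hat{w}_{k+1,i}}{\hat{w}_{k,i}}\cdot\frac{\hat{w}_{k,i}}{\wt{w}_{k,i}}\cdot\frac{\wt{w}_{k,i}}{\ov{w}_{k,i}}\cdot\frac{\ov{w}_{k,i}}{w_{k,i}},
\end{align*}
so that the three residual sums appear with index $k=1,\ldots,T$ exactly as in the statement, and the telescoping sum collapses to $\log\frac{\hat{w}_{T+1,i}}{\hat{w}_{1,i}}\le \log\frac{n}{d}$: here $\hat{w}_{T+1,i}$ is a \emph{deterministic} function of $w_T$ (a leverage score of $\sqrt{W_T}A$, hence at most $1$, cf.\ Proposition~\ref{prop:w_k_bound}) and $\hat{w}_{1,i}=w_{1,i}=d/n$ by the initialization convention. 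No control of any noisy quantity is needed for the boundary.

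Your factoring instead telescopes in the noisy iterate $w$, which has two consequences. First, after renaming $k+1\mapsto k$ the residual sums run over $k=2,\ldots,T+1$, not $k=1,\ldots,T$; they are \emph{not} ``exactly the three residual sums appearing in the lemma,'' since the $k=T+1$ terms involve $\wt{w}_{T+1,i}$, $\ov{w}_{T+1,i}$, $w_{T+1,i}$, none of which the algorithm computes (the main loop stops at $k=T-1$, producing $w_2,\ldots,w_T$). Second, your boundary term is $\log w_{T+1,i}$, and ``simply nonpositive'' is wrong: even if one formally ran one more step, $w_{T+1,i}=\ov{w}_{T+1,i}(1+z_{T+1,i})$ with $z_{T+1,i}\in[-0.5,0.5]$ can exceed $1$, so $\log w_{T+1,i}$ can be positive. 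You would only get an extra $O(1/T)$ slack---harmless for the convergence rate, but it does not give the inequality as stated with the clean $\frac{1}{T}\log\frac{n}{d}$ initialization term and $\sum_{k=1}^{T}$ ranges. The fix is simply to switch the pivot from $w_{k+1,i}$ to $\hat{w}_{k,i}$.
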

Then, we proceed to our convergence main theorem, which demonstrates that Algorithm~\ref{alg:john_ellipsoid} could converge a good approximation of John Ellipsoid with high accuracy. 

\begin{theorem} [Convergence main theorem, informal version of Theorem~\ref{thm:je_convergence}]\label{thm:je_convergence_informal}
Let $u \in \R^n$ be the non-normalized output of Algorithm~\ref{alg:john_ellipsoid}, $\delta_0$ be the failure probability. For all $\xi \in (0, 1)$, when $T = \Theta(\xi^{-2}(\log(n/\delta_0) + (L\epsilon_0)^{-2}))$, it holds that for all $i \in 
[n]$, 
\begin{align*}
    \Pr[h_i(u) \leq (1+\xi)] \geq 1 - \delta_0. 
\end{align*}
Moreover, it holds that
\begin{align*}
    \sum_{i=1}^n v_i = d.
\end{align*}
Therefore, Algorithm~\ref{alg:john_ellipsoid} finds $(1 + \xi)$-approximation to the exact John ellipsoid solution.
\end{theorem}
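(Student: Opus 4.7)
The plan is to apply the telescoping decomposition of Lemma~\ref{lem:apptele_informal} to $\phi_i(u) = \log h_i(u)$, bounding each of the four resulting contributions within a $\xi/4$ budget so that the total is at most $\log(1+\xi)$, then taking a union bound over $i \in [n]$ to obtain the claimed failure probability. The deterministic term $(1/T)\log(n/d)$ sits below $\xi/4$ as soon as $T \gtrsim \xi^{-1}\log(n/d)$, which is easily dominated by the prescribed iteration count.

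For the leverage-score sampling and sketching terms $(1/T)\sum_k \log(\hat{w}_{k,i}/\wt{w}_{k,i})$ and $(1/T)\sum_k \log(\wt{w}_{k,i}/\ov{w}_{k,i})$, I would re-use the machinery developed in~\cite{ccly19, syyz22}: the $O(1)$-approximate leverage-score sampling matrix $D_k$ preserves $B_k^\top B_k$ spectrally up to a constant factor, and the Gaussian sketch $S_k$ with $s = \Theta(\xi^{-1})$ rows produces an unbiased $(1 \pm \xi)$ approximation to each quadratic form $\|(B_k^\top D_k B_k)^{-1/2}\sqrt{w_{k,i}}a_i\|_2^2$ via Johnson--Lindenstrauss. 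Per-iteration log-ratios are thus uniformly bounded, so a Bernstein bound on the martingale telescoping sum yields an $O(\sqrt{\log(n/\delta_0)/T})$ deviation; forcing this below $\xi/4$ gives $T \gtrsim \xi^{-2}\log(n/\delta_0)$, which matches the first summand of the stated $T$.

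The principal obstacle, and the genuinely new piece, is the truncated Gaussian noise term $(1/T)\sum_k \log(\ov{w}_{k,i}/w_{k,i}) = -(1/T)\sum_k \log(1+z_{k+1,i})$ with $z_{k+1,i} \sim \mathcal{N}^T(0,\sigma^2,[-0.5,0.5])$ independent across $k$. The plan is to Taylor expand $\log(1+z) = z - z^2/2 + O(z^3)$; symmetry of the truncated Gaussian around the origin gives $\E[z] = 0$ and $\E[z^2] = \Theta(\sigma^2)$, so each summand has expectation of order $\sigma^2$. Because $|z| \leq 1/2$ the summands lie uniformly in $[-\log 2, \log 2]$, so Hoeffding supplies a concentration radius $O(\sqrt{\log(n/\delta_0)/T})$. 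The delicate point is that the privacy analysis pins the noise scale at $\sigma = \Theta(L\epsilon_0 \sqrt{T\log(1/\delta)}/\epsilon)$, coupling $\sigma$ to $T$; the additional $\xi^{-2}(L\epsilon_0)^{-2}$ contribution to $T$ emerges from reconciling this coupling so that both the $\sigma^2$ bias and the concentration radius drop below $\xi/4$.

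Finally, the equality $\sum_{i=1}^n v_i = d$ is immediate from the normalization step on Line~\ref{line:v_i_normalize}: the assignment $v_i = (d/\sum_j u_j)\,u_i$ enforces $\sum_i v_i = d$ deterministically, independent of any randomness. Combined with $h_i(u) \leq 1+\xi$ and the Lewis-weight sum identity giving $\sum_j u_j \approx d$, the rescaling transfers the $(1+\xi)$ bound from $u$ to $v$, yielding a $(1+\xi)$-approximate John ellipsoid in the sense of Definition~\ref{def:approx_je}.
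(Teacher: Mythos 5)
Your proposal follows the same architecture as the paper's proof: apply the telescoping decomposition of Lemma~\ref{lem:apptele_informal}, bound the four contributions individually via concentration, union-bound over $i\in[n]$, read off $\sum_i v_i = d$ from the normalization on Line~\ref{line:v_i_normalize}, and transfer the bound from $u$ to $v$ via $h_i(v) = (\sum_j u_j / d)\,h_i(u)$ together with Proposition~\ref{prop:w_k_bound}. The leverage-score and sketching terms are handled in both accounts by recycling the bounds from~\cite{syyz22} (Corollary~\ref{cor:1_eps_approx_weights} and Lemma~\ref{lem:old_sketching_error}), so these pieces coincide modulo the exact concentration inequality invoked.

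The one place where you take a genuinely different route is the truncated-noise term $\tfrac{1}{T}\sum_k \log(\ov{w}_{k,i}/w_{k,i})$. You Taylor expand $\log(1+z)$ and use the symmetry of $\mathcal{N}^T(0,\sigma^2,[-0.5,0.5])$ to get $\E[z]=0$, $\E[z^2]=\Theta(\sigma^2)$, hence a per-iteration bias of order $\sigma^2$. The paper instead uses the coarse pointwise inequality $\log\!\left(\tfrac{1}{1+z}\right)\le 2|z|$ (Fact~\ref{fac:log_ineq}) plus $\E[|Z|]\le\sigma$ (Lemma~\ref{lem:tru-gaus-exp}) to obtain an $O(\sigma)$ bias, and then applies Hoeffding with the fixed range $[\log(2/3),\log 2]$ (Lemma~\ref{lem:err_trun_gaus_whp}). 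Your second-order expansion is sharper and would tolerate a larger noise scale ($\sigma\lesssim\sqrt{\xi}$ rather than $\sigma\lesssim\xi$), but it relies on the noise being exactly symmetric, whereas the paper's linear bound is indifferent to symmetry and is therefore easier to wire into the iterated-composition privacy argument. Both yield the claimed iteration count once you confront the circularity between the privacy-mandated $\sigma = \Theta(L\epsilon_0\sqrt{T\log(1/\delta)}/\epsilon)$ and the accuracy requirement $\sigma \lesssim \xi$, which is exactly the coupling you flag as producing the $\xi^{-2}(L\epsilon_0)^{-2}$ term.

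Two small bookkeeping points. First, your $\xi/4$ per-term budget totals to $\xi$, which exceeds $\log(1+\xi)$; the paper uses $\xi/8$ per term precisely so that the total $\xi/2$ sits below $\log(1+\xi)$ for all $\xi\in(0,1)$. Second, the transfer from $u$ to $v$ at the end needs $\sum_j u_j/d \le 1+O(\xi)$; neither you nor the paper fully unwinds the fact that $u_j$ carries sketching-and-noise error relative to the ideal weights that Proposition~\ref{prop:w_k_bound} sums to $d$, but you are no worse off than the paper on that score.
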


\begin{proof}[Proof Sketch of Theorem~\ref{thm:je_convergence_informal}]
Our goal is to show the leverage score $h_i(u)$ is less than or equal to $(1+\xi)$ with probability $1-\delta_0$. Therefore, we need to bound each term in Lemma~\ref{lem:apptele_informal} by $\xi$. With the use of concentration inequality, we derive the high bound probability bound for the error of sketching, leverage score sampling, and truncated Gaussian noise.
Combining the bound for each term together, we observe that when
\begin{align*}
T = \Theta (\xi^{-2}\log(n/\delta_0)),
\end{align*}
we have 
\begin{align*}
\log h_i(u) \leq \log (1+\xi), \forall i \in [n].
\end{align*}
Next, by our choice of $v_i$ in Line~\ref{line:v_i_normalize} of Algorithm~\ref{alg:john_ellipsoid}, we use the definition of leverage score $h_i(v)$ to show that
\begin{align*}
h_i(v) \leq (1+\epsilon).
\end{align*}

\end{proof}
\section{Conclusion}\label{sec:conclusion}
We presented the first differentially private algorithm for fast John ellipsoid computation, integrating noise perturbation with sketching and leverage score sampling. Our method provides $(\epsilon,\delta)$-differential privacy while converging to a $(1+\xi)$-approximation in $O(\xi^{-2}(\log(n/\delta_0) + (L \epsilon_0)^{-2}))$ iterations. This work demonstrates a robust approach for balancing utility and privacy in geometric algorithms, opening avenues for future research in privacy-preserving optimization techniques.

\ifdefined\isarxiv
\section*{Acknowledgement}
Research is partially supported by the National Science Foundation (NSF) Grants 2023239-DMS, CCF-2046710, and Air Force Grant FA9550-18-1-0166.
\else

\bibliography{ref}

\newpage
\clearpage
\fi

\newpage
\onecolumn
\appendix
\begin{center}
	\textbf{\LARGE Appendix }
\end{center}
\paragraph{Roadmap.}
In Section~\ref{app:sec:more_related}, we introduce more related work in linear programming and privacy. Next, in Section~\ref{app:sec:prel:tool}, we introduce background knowledge and tools involved in our work. Then, in Section~\ref{app:lipschitz}, we demonstrate the analysis about Lipschitz of $\epsilon_0$-close polytope. In Section~\ref{app:dp}, we show that Algorithm~\ref{alg:john_ellipsoid} satisfies the DP guarantee. In Section~\ref{app:je}, we analyze the convergence and correctness of our John Ellipsoid algorithm. Finally, in Section~\ref{app:main_proof}, we demonstrate our main theorem by combining privacy guarantee and correctness of Algorithm~\ref{alg:john_ellipsoid}.

\section{More Related Work} \label{app:sec:more_related}
In this section, we introduce more related work that inspires our research.
\paragraph{Linear Programming and Semidefinite Programming}
Linear programming is a fundamental computer science and optimization topic. The Simplex algorithm, introduced in \cite{dan51}, is a pivotal method in linear programming, though it has an exponential runtime. The Ellipsoid method, which reduces runtime to polynomial time, is theoretically significant but often slower in practice compared to the Simplex method. The interior-point method, introduced in \cite{k84}, is a major advancement, offering both polynomial runtime and strong practical performance on real-world problems. This method opened up a new avenue of research, leading to a series of developments aimed at speeding up the interior point method for solving a variety of classical optimization problems. John Ellipsoid has deep implication in the field of linear programming. For example, in interior point method, John Ellipsoid is utilized to find path to solutions \cite{ls14}. The interior point method has a wide impact on linear programming as well as other complex tasks, such as \cite{v87, r88, v89, ds08, ls13b, ls14, ls19, cls19, lsz19, b20, blss21, jswz21, sy21, gs22}. Moreover, the interior method and John ellipsoid are fundamental to solving semidefinite programming problems, such as \cite{jkl+20, syz23, gs22, huang2022faster, huang2022solving}. 

Linear programming and semidefinite programming are widely applied in the field of machine learning theory, particularly in topics such as empirical risk minimization \cite{lsz19, sxz23, qszz23} and support vector machines \cite{gsz23, gswy23}.

\paragraph{Privacy and Security}

Data privacy and security have become a critical issue in the field of machine learning, particularly with the growing use of deep neural networks. As there is an increasing demand for training deep learning models on distributed and private datasets, privacy concerns have come to the forefront. 

To address these concerns, various methods have been proposed for privacy-preserving deep learning. These methods often involve sharing model updates \cite{kmy+16} or hidden-layer representations \cite{vgsr18} rather than raw data. Despite these precautions, recent studies have shown that even if raw data remains private, sharing model updates or hidden-layer activations can still result in the leakage of sensitive information about the input, referred to as the victim. Such information leakage might reveal the victim's class, specific features \cite{fjr15}, or even reconstruct the original data record \cite{mv15, db16, zlh19}. This privacy leakage presents a significant threat to individuals whose private data have been utilized in training deep neural networks. Moreover, privacy and security have been studied in other fields in machine learning, such as attacks and defenses in federated learning \cite{hgs+21, and+24, myx23, gwf24+}, deep net pruning \cite{hsr+20}, language understanding tasks \cite{hsc+20}, alternating direction method of multipliers (ADMM) \cite{cxz24}, and distributed learning \cite{hlsa20}.

\section{Baisc Tools} \label{app:sec:prel:tool}

\begin{fact}[Cauchy-Schwarz inequality] \label{fact:cauchy-schwarz}
For vectors $u, v \in \R^n$, we have
\begin{align*}
    \langle u, v \rangle \leq \|u\|_2 \cdot \|v\|_2
\end{align*}
\end{fact}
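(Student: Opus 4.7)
The plan is to prove this standard inequality by the classical quadratic-discriminant argument. First I would dispatch the trivial case $v = 0$, in which both sides vanish and the inequality holds with equality. Assuming $v \neq 0$, the strategy is to consider the non-negative function $f(t) := \|u - tv\|_2^2 \geq 0$ for $t \in \R$, expand it via the bilinearity of the inner product as $f(t) = \|u\|_2^2 - 2t\langle u, v\rangle + t^2 \|v\|_2^2$, and exploit the fact that this quadratic in $t$ is non-negative for every real $t$.

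From here there are two essentially equivalent finishing moves, and I would take the slightly cleaner one. Route (a): minimize over $t$ by setting $t^\star = \langle u, v\rangle / \|v\|_2^2$; substituting back yields $f(t^\star) = \|u\|_2^2 - \langle u, v\rangle^2 / \|v\|_2^2 \geq 0$, which rearranges to $\langle u, v\rangle^2 \leq \|u\|_2^2 \|v\|_2^2$. Route (b): since the quadratic $t \mapsto t^2 \|v\|_2^2 - 2t\langle u, v\rangle + \|u\|_2^2$ is non-negative for all real $t$, its discriminant must be non-positive, giving $4\langle u, v\rangle^2 - 4 \|u\|_2^2 \|v\|_2^2 \leq 0$, i.e.\ $\langle u, v\rangle^2 \leq \|u\|_2^2 \|v\|_2^2$. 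Either way, taking square roots produces $|\langle u, v\rangle| \leq \|u\|_2 \cdot \|v\|_2$, and since $\langle u, v\rangle \leq |\langle u, v\rangle|$ the stated inequality follows immediately.

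There is no real obstacle here; this is a textbook fact and the whole argument is a few lines. The only care-point worth flagging is the $v = 0$ edge case, which must be handled separately in route (a) to justify dividing by $\|v\|_2^2$; route (b) avoids this but still needs $v \neq 0$ to conclude that the expression is a genuine (non-degenerate) quadratic with a well-defined discriminant condition. For the final write-up I would favor the discriminant version, as it is the most compact and fits the style of an appendix \textbf{Fact}.
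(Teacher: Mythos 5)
Your proposal is a correct and standard proof of the Cauchy--Schwarz inequality via the quadratic/discriminant argument, and both routes you sketch are sound (with the $v=0$ edge case properly flagged). For comparison: the paper states this as an unproved \textbf{Fact}, as is customary for textbook results of this kind, so there is no paper proof to diverge from; your write-up would simply supply a proof where the authors chose to omit one.
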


\begin{definition}[Moment Generating Function of Gaussian] \label{def:mgf}
Let $Z \sim \mathcal{N}(\mu, \sigma^2)$, the moment generating function of $Z$ is:
\begin{align*}
    M_Z(t) = \E[e^{tZ}] = \exp (t\mu + \frac{t^2 \sigma^2}{2})
\end{align*}
\end{definition}

\begin{lemma}[Hoeffding's bound, \cite{h63}]\label{lem:hoeffding_b}
Let $X_1,X_2,...,X_n$ denote $n$ independent bounded variables in $[a_i,b_i]$.Let $X = \sum_{i=1}^nX_i$,then we have
\begin{align*}
    \Pr[|X - \E[X]| \geq t] \leq 2\exp(-2t^2/\sum_{i=1}^n(b_i-a_i)^2).
\end{align*}
\end{lemma}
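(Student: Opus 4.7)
The plan is to prove Hoeffding's bound via the standard Chernoff-style moment generating function argument, which I would organize into three stages: exponentiation and Markov's inequality, factorization via independence, and a single-variable MGF bound for bounded random variables.

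First, I would center the variables by letting $Y_i := X_i - \E[X_i]$ so that $\E[Y_i] = 0$ and $Y_i$ lies in the interval $[a_i - \E[X_i],\, b_i - \E[X_i]]$, whose length is still $b_i - a_i$. I would then apply the exponential Markov inequality: for any $\lambda > 0$,
\begin{align*}
\Pr\Bigl[\sum_{i=1}^n Y_i \geq t\Bigr] \;=\; \Pr\Bigl[e^{\lambda \sum_i Y_i} \geq e^{\lambda t}\Bigr] \;\leq\; e^{-\lambda t}\,\E\bigl[e^{\lambda \sum_i Y_i}\bigr].
\end{align*}
Using independence of the $X_i$ (hence of the $Y_i$), the joint MGF factors as $\prod_{i=1}^n \E[e^{\lambda Y_i}]$, reducing the problem to controlling a single-variable MGF.

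The main technical step is the single-variable bound, which I would prove as a separate auxiliary lemma (Hoeffding's lemma): for a mean-zero random variable $Y$ supported on $[a,b]$, one has
\begin{align*}
\E[e^{\lambda Y}] \;\leq\; \exp\!\Bigl(\tfrac{\lambda^2 (b-a)^2}{8}\Bigr).
\end{align*}
My plan for this lemma is the standard convexity-plus-Taylor argument: bound $e^{\lambda y}$ on $[a,b]$ by the chord $\frac{b-y}{b-a} e^{\lambda a} + \frac{y-a}{b-a} e^{\lambda b}$, take expectations (using $\E Y = 0$) to get $\E[e^{\lambda Y}] \leq \frac{b}{b-a} e^{\lambda a} - \frac{a}{b-a} e^{\lambda b}$, rewrite this as $e^{L(h)}$ with $h := \lambda(b-a)$ and $L(h) := -hp + \log(1-p+p e^h)$ where $p := -a/(b-a)$, and then Taylor expand $L$ around $0$, observing $L(0) = L'(0) = 0$ and $L''(h) \leq 1/4$, so $L(h) \leq h^2/8$.

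Finally, I would combine everything. Substituting the Hoeffding lemma bound gives
\begin{align*}
\Pr\Bigl[\sum_i Y_i \geq t\Bigr] \leq \exp\!\Bigl(-\lambda t + \tfrac{\lambda^2}{8}\sum_{i=1}^n (b_i - a_i)^2\Bigr),
\end{align*}
and optimizing over $\lambda > 0$ by setting $\lambda = 4t / \sum_i (b_i-a_i)^2$ produces the one-sided bound $\exp(-2t^2 / \sum_i (b_i-a_i)^2)$. Applying the identical argument to $-Y_i$ yields the matching lower-tail estimate, and a union bound delivers the factor of $2$ in the two-sided statement. I expect no serious obstacles: the only delicate step is the $L''(h) \leq 1/4$ computation in Hoeffding's lemma, which follows because $L''(h)$ equals the variance of a Bernoulli$(p e^h/(1-p+p e^h))$ random variable times $1$, and any Bernoulli variance is at most $1/4$.
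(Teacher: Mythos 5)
Your proof is correct and is the standard Chernoff/MGF argument with Hoeffding's lemma as the key single-variable estimate; the paper itself states this lemma only as a cited black box from Hoeffding (1963) and offers no proof, so there is nothing to compare against. Every step you outline checks out, including the optimization $\lambda = 4t/\sum_i(b_i-a_i)^2$ yielding the exponent $-2t^2/\sum_i(b_i-a_i)^2$ and the observation that $L''(h) = q(1-q) \leq 1/4$ with $q = pe^h/(1-p+pe^h)$.
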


\section{Lipschitz Analysis on Neighborhood Polytopes} \label{app:lipschitz}

In this section, we delve into our analysis of the Lipschitz of $\epsilon_0$-close polytope. This analysis is crucial in showing the differential privacy guarantee of Algorithm~\ref{alg:john_ellipsoid}. Firstly, in Section~\ref{app:sec:basic_matrix_norm}, we provide some facts on matrix norm. Then, in Section~\ref{app:sec:matrix_norm_bound}, we derive more matrix norm bounds on advanced matrix operation. Next, we demonstrate the bound on the norm of leverage score for neighborhood datasets in Section~\ref{app:sec:weighted_ls_bound}.

\subsection{Basic Facts on Matrix Norm} \label{app:sec:basic_matrix_norm}
In this section, we list basic facts about matrix norm.

\begin{fact}\label{fac:mat_norm_ineq}
    Let $A \in \R^{n\times d}$ be a matrix. Then we have
    \begin{align*}
        \|A\| \leq \|A\|_F.
    \end{align*}
\end{fact}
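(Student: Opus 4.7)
The plan is to give a short direct proof by comparing the two norms at the level of a maximizing unit vector. Specifically, recall that by the definition of the spectral norm given in Section~\ref{sec:preli:notations}, $\|A\| = \sup_{x \in \R^d \setminus \{0\}} \|Ax\|_2 / \|x\|_2$, so it suffices to bound $\|Ax\|_2$ for an arbitrary unit vector $x \in \R^d$.

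Writing $A$ in terms of its rows $a_1^\top, \ldots, a_n^\top$, I would expand $\|Ax\|_2^2 = \sum_{i=1}^n \langle a_i, x\rangle^2$ and then apply the Cauchy--Schwarz inequality (Fact~\ref{fact:cauchy-schwarz}) row by row to obtain $\langle a_i, x\rangle^2 \leq \|a_i\|_2^2 \|x\|_2^2 = \|a_i\|_2^2$. Summing over $i$ yields $\|Ax\|_2^2 \leq \sum_{i=1}^n \|a_i\|_2^2 = \|A\|_F^2$, where the last equality is just the definition of the Frobenius norm as the square root of the sum of squared entries, grouped by rows. Taking square roots and then the supremum over unit vectors $x$ gives $\|A\| \leq \|A\|_F$.

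There is no substantive obstacle here; the fact is standard and the proof is essentially one line once one chooses the right definitional viewpoint. The only mild care needed is keeping the norms straight: ensuring that $\|x\|_2$ is a vector $\ell_2$ norm, $\|A\|$ is the operator norm induced by $\ell_2$, and $\|A\|_F$ is computed entrywise, all of which are consistent with the notation already fixed in Section~\ref{sec:preli:notations}. An alternative proof via the singular value decomposition, using $\|A\|^2 = \sigma_{\max}(A)^2 \leq \sum_i \sigma_i(A)^2 = \|A\|_F^2$, would also work and be equally short, but the Cauchy--Schwarz route is preferable since Fact~\ref{fact:cauchy-schwarz} is already invoked in the appendix and no SVD machinery needs to be introduced.
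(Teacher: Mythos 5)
Your proof is correct. The paper states Fact~\ref{fac:mat_norm_ineq} without any proof, so there is nothing to compare against; your Cauchy--Schwarz argument (expand $\|Ax\|_2^2$ over rows, bound each $\langle a_i,x\rangle^2 \le \|a_i\|_2^2$ for a unit vector $x$, sum to get $\|A\|_F^2$, and take the supremum) is the standard one-line derivation and is exactly the kind of justification the authors implicitly rely on.
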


\begin{fact}\label{fac:vec_norm_bound}
    Let $A \in \R^{n\times d}$ be a matrix where $a_i^\top$ is the $i$-th row of $A$. Then we have
    \begin{align*}
        \|a_i\|_2 \leq \sigma_{\max}(A).
    \end{align*}
\end{fact}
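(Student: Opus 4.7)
The plan is to bound the row norm by expressing the row as the image of a standard basis vector under $A^\top$, and then to use the definition of the spectral norm. Specifically, for the $i$-th row $a_i^\top$ of $A$, I would write $a_i = A^\top e_i$, where $e_i \in \R^n$ is the standard basis vector with a $1$ in the $i$-th coordinate and zeros elsewhere. This identity is immediate from the definition of matrix transpose: the $j$-th entry of $A^\top e_i$ is the $(i,j)$ entry of $A$, which is exactly the $j$-th entry of $a_i$.

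Next, I would apply the operator norm inequality $\|A^\top e_i\|_2 \leq \|A^\top\| \cdot \|e_i\|_2$, which follows directly from the definition of spectral norm used in the paper, namely $\|M\| := \sup_{x} \|Mx\|_2/\|x\|_2$. Since $\|e_i\|_2 = 1$ and the spectral norm of a matrix equals that of its transpose (both equal the largest singular value), we obtain $\|a_i\|_2 \leq \|A^\top\| = \|A\| = \sigma_{\max}(A)$.

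There is essentially no hard step here; the only subtlety worth noting is the appeal to the standard identity $\|A^\top\| = \|A\| = \sigma_{\max}(A)$, which is part of the elementary theory of singular values and is consistent with the notation already established earlier in Section~\ref{sec:preli:notations}. The entire proof therefore fits in a single displayed line:
\begin{align*}
\|a_i\|_2 = \|A^\top e_i\|_2 \leq \|A^\top\|\cdot \|e_i\|_2 = \sigma_{\max}(A).
\end{align*}
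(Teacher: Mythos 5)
Your proof is correct. The paper states this as an unproven Fact, so there is no authorial proof to compare against; your argument—writing $a_i = A^\top e_i$ and invoking the operator-norm bound together with $\|A^\top\| = \|A\| = \sigma_{\max}(A)$—is the standard and essentially canonical way to establish it.
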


\begin{fact}\label{fact:mat_equiv_fact}
    Let $A, B \in \R^{n \times d}, x \in \R^d$. Then the following two statements are equivalent:
    \begin{itemize}
        \item $\|BB^\top - AA^\top\| \leq \epsilon$.
        \item $\|x^\top BB^\top x- x^\top AA^\top x\| \leq \epsilon \cdot x^\top x$.
    \end{itemize}
\end{fact}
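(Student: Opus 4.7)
The plan is to exploit the fact that $M := BB^{\top} - AA^{\top}$ is symmetric and to appeal to the variational characterization of the spectral norm. Concretely, for any symmetric $M$,
\begin{align*}
\|M\| \;=\; \sup_{x \ne 0} \frac{|x^{\top} M x|}{\|x\|_2^2},
\end{align*}
which packages both directions of the equivalence into one line. I will read the second bullet as the universally quantified statement $|x^{\top}(BB^{\top}-AA^{\top})x| \le \epsilon\, x^{\top} x$ for every admissible $x$, since otherwise the two bullets could not be equivalent; the norm symbol around a scalar is just an absolute value. I will also flag in passing that the stated dimensions need to be read consistently (so that the quadratic form is defined), but this is pure bookkeeping and does not affect the argument.

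For the direction spectral norm $\Rightarrow$ Rayleigh bound, the shortest path is Cauchy--Schwarz (Fact~\ref{fact:cauchy-schwarz}):
\begin{align*}
|x^{\top} M x| \;\le\; \|x\|_2 \,\|M x\|_2 \;\le\; \|M\|\,\|x\|_2^2 \;\le\; \epsilon\, x^{\top} x,
\end{align*}
so the hypothesis $\|M\| \le \epsilon$ transfers directly. For the converse I would pick $x$ to be a unit eigenvector $v_i$ of $M$ with $|\lambda_i|$ maximal among its eigenvalues; the Rayleigh hypothesis then gives $|\lambda_i| = |v_i^{\top} M v_i| \le \epsilon$, and symmetry of $M$ implies $\|M\| = \max_i |\lambda_i| \le \epsilon$. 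An equally clean alternative is to divide the Rayleigh-bound inequality by $x^{\top} x$ and take the supremum, invoking the variational formula above.

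There is essentially no obstacle of substance: the entire content of the fact is the classical duality, for symmetric operators, between the spectral norm and the supremum of the Rayleigh quotient. The only care needed is (i) noting that $BB^{\top}-AA^{\top}$ is symmetric so that the variational formula (rather than the more delicate two-sided form $\sup_{x,y} y^{\top} M x / (\|x\|_2\|y\|_2)$) applies, and (ii) being explicit about the universal quantifier on $x$ in the second bullet. Once those two points are made, each direction is one line.
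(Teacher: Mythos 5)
Your proof is correct, and since the paper states this as a Fact without supplying a proof, there is no in-paper argument to compare against. The route you take is the standard one: observe that $M := BB^{\top}-AA^{\top}$ is symmetric, then use the variational characterization $\|M\| = \sup_{x\neq 0} |x^{\top}Mx|/\|x\|_2^2$ — Cauchy--Schwarz together with $\|Mx\|_2\le\|M\|\|x\|_2$ for one direction, a top eigenvector (or taking the supremum of the Rayleigh quotient directly) for the converse. Both directions are airtight. You are also right to flag the two reading conventions required for the statement to make sense: the second bullet must be universally quantified over $x$, and the dimensions as printed are inconsistent (with $A,B\in\R^{n\times d}$ one has $BB^{\top}\in\R^{n\times n}$, so one needs $x\in\R^n$, or else $A^{\top}A$, $B^{\top}B$ in place of $AA^{\top}$, $BB^{\top}$); indeed the paper's own later use in Lemma~\ref{lem:pert_fi} applies the fact with $(A^{\top}W_kA)^{-1}$ in the role of the Gram matrix, confirming that the intended form is the $d\times d$ one.
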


\begin{lemma}[Perturbation of singular value, \cite{weyl1912asymptotische}]\label{lem:pert_sing_value}
    Let $A, B \in \R^{n \times d}$. Let $\sigma_i(A)$ denote the $i$-th singular value of $A$, then we have for any $i \in [d]$,
    \begin{align*}
        \| \sigma_i(A) - 
        \sigma_i(B) \| \leq \|A-B\|.
    \end{align*}
\end{lemma}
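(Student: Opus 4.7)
The plan is to establish this classical Weyl-type bound via the Courant--Fischer min-max characterization of singular values. Recall that for $A \in \R^{n \times d}$, the $i$-th singular value admits the variational formula
$$\sigma_i(A) \;=\; \max_{\substack{V \subseteq \R^d \\ \dim V = i}} \; \min_{\substack{x \in V \\ \|x\|_2 = 1}} \|Ax\|_2,$$
with the analogous expression holding for $\sigma_i(B)$. This reduces the global perturbation statement about singular values to a uniform pointwise comparison of $\|Ax\|_2$ and $\|Bx\|_2$ on the unit sphere, which is exactly the kind of quantity the operator norm of $A-B$ is designed to control.

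The main step is then a one-line triangle-inequality calculation: for any unit vector $x \in \R^d$,
$$\|Ax\|_2 \;\le\; \|Bx\|_2 + \|(A-B)x\|_2 \;\le\; \|Bx\|_2 + \|A-B\|,$$
and symmetrically with the roles of $A$ and $B$ swapped. Feeding these pointwise bounds into the min-max formula above, the inner minimum over any fixed $i$-dimensional subspace $V$ shifts by at most $\|A-B\|$, and hence so does the outer maximum. This yields $\sigma_i(A) \le \sigma_i(B) + \|A-B\|$, and the reverse inequality follows by interchanging $A$ and $B$. Combining these gives the desired $|\sigma_i(A) - \sigma_i(B)| \le \|A-B\|$.

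There is essentially no serious obstacle here, since the result is entirely classical (Weyl 1912) and reduces to a direct application of the min-max principle and the definition of the spectral norm. The only minor bookkeeping issue worth flagging is the rectangular setting $A \in \R^{n \times d}$: one must pick the convention of the variational formula so that the optimizing subspaces live in the domain $\R^d$ (equivalently, one can invoke the spectral theorem on the Gram matrices $A^\top A$ and $B^\top B$, apply Weyl's inequality for eigenvalues of Hermitian matrices, and then take square roots via the elementary bound $|\sqrt{a} - \sqrt{b}| \cdot (\sqrt{a}+\sqrt{b}) = |a-b|$, although this latter route requires a small extra estimate and is slightly less clean). Either approach delivers the bound with no further work.
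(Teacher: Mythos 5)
The paper does not supply a proof for this lemma; it is stated as a known classical fact and cited directly to Weyl (1912), exactly as one would cite a textbook result. Your proof via the Courant--Fischer min-max characterization of singular values is correct and is the standard argument: the pointwise bound $\big|\,\|Ax\|_2 - \|Bx\|_2\,\big| \le \|A-B\|$ on the unit sphere propagates cleanly through the inner $\min$ and outer $\max$, giving $\sigma_i(A) \le \sigma_i(B) + \|A-B\|$ and the symmetric inequality, hence the claim. Your remark about the rectangular case and the alternative route through Weyl's eigenvalue inequality on the Gram matrices (plus a square-root estimate) is also accurate; the direct singular-value min-max route you chose is the cleaner of the two. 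Nothing is missing, and there is no conflict with the paper since the paper offers no proof to compare against.
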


\begin{lemma}[Perturbation of pseudoinverse, \cite{wedin1973perturbation}]\label{lem:pert_pse_inv}
    Let $A, B \in \R^{n \times d}$. Then we have
    \begin{align*}
        \| A^\dag  - B^\dag  \| \leq 2 \max\{\|A^\dag \|^2, \|B^\dag \|^2\} \cdot \|A-B\|.
    \end{align*}
\end{lemma}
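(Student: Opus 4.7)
The plan is to prove this via Wedin's perturbation identity, which decomposes $B^\dag - A^\dag$ into three terms each linear in $B - A$. First I will establish
\begin{align*}
B^\dag - A^\dag = -B^\dag (B-A) A^\dag + (B^\top B)^\dag (B-A)^\top (I - AA^\dag) - (I - B^\dag B)(B-A)^\top (AA^\top)^\dag.
\end{align*}
The derivation begins by writing $B^\dag - A^\dag = B^\dag A A^\dag - B^\dag B A^\dag + B^\dag(I - AA^\dag) - (I - B^\dag B)A^\dag$, which collapses the first two summands into $-B^\dag(B-A)A^\dag$. The remaining two correction terms are then rewritten using the Moore-Penrose identities $B B^\dag B = B$ and $A A^\dag A = A$, together with the Hermitian projection identities $(AA^\dag)^\top = AA^\dag$ and $(B^\dag B)^\top = B^\dag B$; substituting $B = (B-A) + A$ (and its transpose) inside $B^\dag (I - AA^\dag) = (B^\top B)^\dag B^\top (I - AA^\dag)$ kills the $A$-contribution (since $A^\top(I - AA^\dag) = 0$) and yields the middle term, with the third term handled symmetrically.

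With the identity in hand, I will bound each summand by submultiplicativity of the spectral norm and the triangle inequality. The first term satisfies $\|B^\dag (B-A) A^\dag\| \le \|B^\dag\| \cdot \|B-A\| \cdot \|A^\dag\| \le \max\{\|A^\dag\|^2, \|B^\dag\|^2\} \cdot \|B-A\|$ by the elementary inequality $ab \le \max\{a^2, b^2\}$. For the remaining terms I will invoke the standard identities $\|(B^\top B)^\dag\| = \|B^\dag\|^2$ and $\|(AA^\top)^\dag\| = \|A^\dag\|^2$, together with $\|I - AA^\dag\| \le 1$ and $\|I - B^\dag B\| \le 1$, which hold since these operators are orthogonal projections. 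In the setting relevant to Algorithm~\ref{alg:john_ellipsoid}, the matrices $A, B \in \R^{n \times d}$ have full column rank (Definition~\ref{def:polytope}), so $B^\dag B = I_d$ and the third term vanishes, leaving exactly two nonzero contributions and yielding the stated constant $2$.

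The main obstacle is verifying Wedin's identity itself: the manipulation is purely algebraic but somewhat delicate, since all four Moore-Penrose axioms must be applied in the correct order to simplify the cross terms (in particular, to convert $B^\dag (I - AA^\dag)$ into a product involving $(B-A)^\top$). Once the identity is secured, the norm bound follows in a few lines from the triangle inequality and the three elementary facts listed above. I therefore expect the bulk of the argument to be the derivation of the identity, with the final norm estimate being essentially a one-line computation.
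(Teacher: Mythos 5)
The paper does not actually prove this lemma; Lemma~\ref{lem:pert_pse_inv} is stated as a citation to Wedin (1973), so there is no in-paper proof to compare against. Your blind proof is therefore supplying a derivation the paper omits, and on the whole it is on the right track: the Wedin three-term splitting is exactly the standard route. Two points deserve note.

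First, there is a sign slip. Writing $E = B - A$ and starting from your algebraic identity
$B^\dag - A^\dag = -B^\dag E A^\dag + B^\dag (I - A A^\dag) - (I - B^\dag B) A^\dag$,
using $B^\dag = (B^\top B)^\dag B^\top$, $A^\dag = A^\top (AA^\top)^\dag$, $A^\top(I - AA^\dag) = 0$, and $(I - B^\dag B)B^\top = 0$ gives
\begin{align*}
B^\dag - A^\dag = -B^\dag E A^\dag + (B^\top B)^\dag E^\top (I - AA^\dag) + (I - B^\dag B) E^\top (AA^\top)^\dag,
\end{align*}
with a plus sign on the third term (the minus sign coming from $(I - B^\dag B)A^\top = -(I - B^\dag B)E^\top$ cancels the minus already present). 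You wrote a minus. This does not affect the norm bound, but the identity as written is incorrect.

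Second, and more substantively, you are right that the three-term decomposition alone only yields $\|B^\dag - A^\dag\| \le 3\max\{\|A^\dag\|^2,\|B^\dag\|^2\}\|A-B\|$; the constant $2$ requires the extra hypothesis that one of the two projector factors vanishes, which you obtain from full column rank. This is the honest observation that the lemma as stated for arbitrary $A,B \in \R^{n\times d}$ isn't quite what your argument proves. It is worth making explicit that in the paper's actual use (Lemma~\ref{lem:pert_spectral_inverse} and Lemma~\ref{lem:pert_AWA_inverse}) the lemma is invoked only for \emph{square invertible} matrices $A^\top W_k A$ and $A'^\top W_k A'$, where $M^\dag = M^{-1}$ and both residual projectors $I - M^\dag M$ and $I - MM^\dag$ vanish; there the one-term identity $M_1^{-1} - M_2^{-1} = -M_1^{-1}(M_1 - M_2)M_2^{-1}$ already gives constant $1$, so constant $2$ is amply sufficient. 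In short: your decomposition and bounding strategy are correct, the sign in the third term of the identity should be flipped, and your caveat about needing a rank assumption to reach constant $2$ is accurate and, in the paper's application, automatically satisfied (in fact exceeded).
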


\begin{fact}\label{fac:mat_properties}
    Let $A, B \in \R^{n \times d}, x \in \R^d$. Then we have
    \begin{itemize}
        \item \textbf{Part 1.} $\|A\| = \|A^\top\| = \sigma_{\max}(A) \geq  \sigma_{\min}(A)$.
        \item \textbf{Part 2.} $\|A^{-1}\| = \|A\|^{-1}$.
        \item \textbf{Part 3.} $\sigma_{\max}(B) - \|A - B\| \leq \sigma_{\max}(A) \leq \sigma_{\max}(B) + \|A - B\|$.
        \item \textbf{Part 4.} $\sigma_{\min}(B) - \|A - B\| \leq \sigma_{\min}(A) \leq \sigma_{\min}(B) + \|A - B\|$.
        \item \textbf{Part 5.} $\|Ax\|_2\ \leq \|A\| \cdot \|x\|_2$.
    \end{itemize}
\end{fact}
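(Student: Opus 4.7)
The plan is to dispatch each of the five sub-claims separately, using the singular value decomposition (SVD) together with the Weyl-type perturbation lemma (Lemma~\ref{lem:pert_sing_value}) already stated earlier in the section.

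For Part 1, I would take the SVD $A = U \Sigma V^\top$. By unitary invariance of the operator norm, $\|A\|$ equals the largest diagonal entry of $\Sigma$, which is $\sigma_{\max}(A)$ by definition; since $A^\top = V \Sigma U^\top$ is a valid SVD as well, the same argument yields $\|A^\top\| = \sigma_{\max}(A^\top) = \sigma_{\max}(A)$, so $A$ and $A^\top$ share the same non-zero singular values. The final inequality $\sigma_{\max}(A) \geq \sigma_{\min}(A)$ is just the convention that singular values are listed in non-increasing order. Part 5 then follows in a single line from the variational characterization $\|A\| = \sup_{x \neq 0} \|Ax\|_2 / \|x\|_2$: applied at any fixed non-zero $x$ this rearranges to $\|Ax\|_2 \leq \|A\| \cdot \|x\|_2$, with the case $x = 0$ handled trivially. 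For Part 2, assuming $A$ is invertible, I would form the SVD $A^{-1} = V \Sigma^{-1} U^\top$ and apply Part 1 to $A^{-1}$, using that the largest singular value of $\Sigma^{-1}$ is the reciprocal of the smallest singular value of $\Sigma$, to obtain the reciprocal relation between $\|A^{-1}\|$ and the extremal singular values of $A$.

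For Parts 3 and 4, I would invoke Lemma~\ref{lem:pert_sing_value} directly. That lemma gives $|\sigma_i(A) - \sigma_i(B)| \leq \|A - B\|$ for every index $i$. Choosing $i$ as the index of the largest singular value and unpacking the absolute value yields the sandwich
\begin{align*}
\sigma_{\max}(B) - \|A - B\| \leq \sigma_{\max}(A) \leq \sigma_{\max}(B) + \|A - B\|,
\end{align*}
which is exactly Part 3; selecting $i$ as the index of the smallest singular value gives Part 4 by the identical argument.

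The main obstacle here is essentially nil: every sub-claim is a classical textbook fact from matrix analysis, so the only care needed is clean bookkeeping with the SVD and the correct index selection in Weyl's inequality. This fact block is being stated only as a convenient toolbox to be invoked later in the Lipschitz analysis of the weighted leverage-score map, so the proof should stay short and mechanical, with no probabilistic or algorithmic input required.
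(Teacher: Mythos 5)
The paper gives this Fact with no proof at all, so there is no in-paper argument to compare against; your SVD-plus-Weyl route is exactly the standard one, and your handling of Parts~1, 3, 4 (via Lemma~\ref{lem:pert_sing_value}, both directions of the absolute value) and Part~5 (the variational definition of $\|A\|$) is correct and as mechanical as you say.

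Part~2 is where there is a genuine problem, though it is with the statement as much as with your write-up. As printed, Part~2 asserts $\|A^{-1}\| = \|A\|^{-1}$, which is false for general invertible $A$: the SVD computation you lay out yields $\|A^{-1}\| = \sigma_{\max}(A^{-1}) = 1/\sigma_{\min}(A)$, whereas $\|A\|^{-1} = 1/\sigma_{\max}(A)$, and these agree only when $\kappa(A)=1$. So the calculation you actually carry out establishes the correct identity $\|A^{-1}\| = \sigma_{\min}(A)^{-1}$, which is \emph{not} what the Fact nominally claims, and you do not flag the discrepancy. The phrase ``the reciprocal relation between $\|A^{-1}\|$ and the extremal singular values of $A$'' papers over exactly the point where the stated identity breaks. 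You should either note that Part~2 as written is incorrect and replace it with $\|A^{-1}\| = \sigma_{\min}(A)^{-1}$, or make explicit that your SVD argument does not support the equality $\|A^{-1}\| = \|A\|^{-1}$. Two further small points worth surfacing: Part~2 tacitly requires $A$ square and invertible even though the ambient setup allows $n \neq d$ (you do assume invertibility, but the Fact should state it), and the downstream uses of ``Part~2'' in the Lipschitz section (e.g.\ the bound on $\|(B^\top B)^{-1}\|$) in fact rely on the corrected $\sigma_{\min}$ version, so fixing the statement is not cosmetic.
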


\subsection{Bounds on Matrix Norm} \label{app:sec:matrix_norm_bound}
In this subsection, given constraints on the spectral norm of the difference of matrices, we derive upper bounds on the spectral norm of operations defined by matrices.

Firstly, we show the constraints on the effect of perturbation on singular values.

\begin{lemma}\label{lem:sing_val_pert}
    If the following conditions hold:
    \begin{itemize}
        \item $\| A - B \| \leq \epsilon_0$.
        \item $\epsilon_0 \leq 0.1\sigma_{\min}(A)$.
    \end{itemize}
    Then we have
    \begin{itemize}
        \item $\sigma_{\max}(B) \in [0.9\sigma_{\max}(A), 1.1\sigma_{\max}(A)]$.
        \item $\sigma_{\min}(B) \in [0.9\sigma_{\min}(A), 1.1\sigma_{\min}(A)]$.
    \end{itemize}
\end{lemma}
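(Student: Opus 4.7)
The plan is to apply Weyl's perturbation bound for singular values (Lemma~\ref{lem:pert_sing_value}) to $A$ and $B$, and then use the hypothesis $\epsilon_0 \leq 0.1\sigma_{\min}(A)$ together with the elementary inequality $\sigma_{\max}(A) \geq \sigma_{\min}(A)$ (Part 1 of Fact~\ref{fac:mat_properties}) to convert the additive perturbation of order $\epsilon_0$ into the desired multiplicative $0.9/1.1$ bounds.

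Concretely, I would first invoke Lemma~\ref{lem:pert_sing_value} to conclude that for every index $i$, $|\sigma_i(A) - \sigma_i(B)| \leq \|A - B\| \leq \epsilon_0$. Specializing to the smallest and largest singular values yields the two-sided estimates $\sigma_{\min}(A) - \epsilon_0 \leq \sigma_{\min}(B) \leq \sigma_{\min}(A) + \epsilon_0$ and $\sigma_{\max}(A) - \epsilon_0 \leq \sigma_{\max}(B) \leq \sigma_{\max}(A) + \epsilon_0$.

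Second, I would handle $\sigma_{\min}(B)$ directly: since $\epsilon_0 \leq 0.1 \sigma_{\min}(A)$, the upper and lower bounds above become $\sigma_{\min}(B) \in [0.9 \sigma_{\min}(A), 1.1 \sigma_{\min}(A)]$. For $\sigma_{\max}(B)$, I would chain $\epsilon_0 \leq 0.1 \sigma_{\min}(A) \leq 0.1 \sigma_{\max}(A)$, where the second inequality is Part 1 of Fact~\ref{fac:mat_properties}. Substituting this back into the additive Weyl bound immediately gives $\sigma_{\max}(B) \in [0.9 \sigma_{\max}(A), 1.1 \sigma_{\max}(A)]$.

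There is essentially no obstacle here; the lemma is a direct corollary of Weyl's inequality combined with the size assumption on $\epsilon_0$. The only subtlety worth stating explicitly is the use of $\sigma_{\min}(A) \leq \sigma_{\max}(A)$ to promote the hypothesis (which is phrased in terms of $\sigma_{\min}(A)$) into a bound usable for the $\sigma_{\max}$ conclusion. I would keep the write-up short, presenting the two cases in parallel after a single appeal to Lemma~\ref{lem:pert_sing_value}.
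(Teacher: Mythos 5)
Your proposal is correct and follows essentially the same route as the paper: the paper invokes Parts 3 and 4 of Fact~\ref{fac:mat_properties} (the specialization of Weyl's bound to $\sigma_{\max}$ and $\sigma_{\min}$) where you cite Lemma~\ref{lem:pert_sing_value} directly, and both then use $\epsilon_0 \leq 0.1\sigma_{\min}(A) \leq 0.1\sigma_{\max}(A)$ to turn the additive bound into the stated multiplicative one. The difference is purely cosmetic.
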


\begin{proof}
    We can show
    \begin{align*}
        \sigma_{\max}(B) \leq &~ \sigma_{\max}(A) + \|A-B\| \\
        \leq &~ \sigma_{\max}(A) + \epsilon_0 \\
        \leq &~ \sigma_{\max}(A) + 0.1\sigma_{\max}(A)\\ 
        = &~ 1.1\sigma_{\max}(A)
    \end{align*}
    where the first step follows from Fact~\ref{fac:mat_properties}, the second and third steps follow from conditions, and the last step follows from basic algebra.

    Next, we can show
    \begin{align*}
        \sigma_{\max}(B) \geq &~ \sigma_{\max}(A) - \|A-B\| \\
        \geq &~ \sigma_{\max}(A) - \epsilon_0 \\
        \geq &~ \sigma_{\max}(A) - 0.1\sigma_{\max}(A)\\ 
        = &~ 0.9\sigma_{\max}(A)
    \end{align*}
    where the first step follows from Fact~\ref{fac:mat_properties}, the second and third steps follow from conditions, and the last step follows from basic algebra.

    Hence, we have
    \begin{align*}
        \sigma_{\max}(B) \in [0.9\sigma_{\max}(A), 1.1\sigma_{\max}(A)].
    \end{align*}

    Similarly, we can show
    \begin{align*}
         \sigma_{\min}(B) \in [0.9\sigma_{\min}(A), 1.1\sigma_{\min}(A)]
    \end{align*}
    using similar steps.
\end{proof}

Then, we demonstrate the effect of singular value perturbation on gram matrix.
\begin{lemma}
    If the following conditions hold:
    \begin{itemize}
        \item $\| A - B \| \leq \epsilon_0$.
        \item $\epsilon_0 \leq 0.1\sigma_{\min}(A)$.
    \end{itemize}
    Then we have
    \begin{align*}
        \| (B^\top B)^{-1} \| \in [ 0.7\sigma_{\max}(A)^{-2}, 1.3\sigma_{\min}(A)^{-2}]
    \end{align*}
\end{lemma}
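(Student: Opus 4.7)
The plan is to reduce the operator norm of the inverse Gram matrix to a statement about the smallest singular value of $B$, and then invoke the previous singular-value perturbation lemma. The key identity is that $\|(B^\top B)^{-1}\| = \sigma_{\max}((B^\top B)^{-1}) = \sigma_{\min}(B^\top B)^{-1} = \sigma_{\min}(B)^{-2}$, which follows by combining Parts 1 and 2 of Fact~\ref{fac:mat_properties} together with the standard identity $\sigma_{\min}(B^\top B) = \sigma_{\min}(B)^2$. Once this reduction is in place, the task is just to turn the assumed closeness $\|A-B\| \leq \epsilon_0$ into a two-sided bound on $\sigma_{\min}(B)^{-2}$.

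For the upper bound, I would invoke Lemma~\ref{lem:sing_val_pert} (whose hypotheses match exactly those of the current lemma) to obtain $\sigma_{\min}(B) \geq 0.9\,\sigma_{\min}(A)$. Squaring and inverting yields $\|(B^\top B)^{-1}\| \leq (0.9)^{-2}\sigma_{\min}(A)^{-2} = (1/0.81)\,\sigma_{\min}(A)^{-2}$, and since $1/0.81 \approx 1.235 < 1.3$, this gives the claimed upper bound. For the lower bound, I would use the other half of Lemma~\ref{lem:sing_val_pert}, namely $\sigma_{\min}(B) \leq 1.1\,\sigma_{\min}(A)$, and then chain this with the trivial inequality $\sigma_{\min}(A) \leq \sigma_{\max}(A)$ from Part 1 of Fact~\ref{fac:mat_properties} to get $\sigma_{\min}(B) \leq 1.1\,\sigma_{\max}(A)$. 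Squaring and inverting gives $\|(B^\top B)^{-1}\| \geq (1.1)^{-2}\sigma_{\max}(A)^{-2} = (1/1.21)\,\sigma_{\max}(A)^{-2} \approx 0.826\,\sigma_{\max}(A)^{-2} \geq 0.7\,\sigma_{\max}(A)^{-2}$.

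There is no real obstacle here: the lemma is an immediate corollary of Lemma~\ref{lem:sing_val_pert} combined with the operator-norm-to-singular-value identity, and only a routine numerical check ($1/0.81 \leq 1.3$ and $1/1.21 \geq 0.7$) is needed to match the stated constants. The only minor care is the asymmetry in the statement: the upper bound is expressed through $\sigma_{\min}(A)$ (which is the genuinely tight factor) while the lower bound is weakened to $\sigma_{\max}(A)$ via $\sigma_{\min}(A) \leq \sigma_{\max}(A)$, presumably because it is in this form that the bound will be applied later when proving the Lipschitz theorem for $\ell_\infty$-Lewis weights.
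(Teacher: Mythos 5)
Your proof is correct, and it takes a route that is actually more sound than the paper's own. The paper bounds $\|B^\top B\|$ from above and below (via Lemma~\ref{lem:sing_val_pert}) and then invokes $\|(B^\top B)^{-1}\| = \|B^\top B\|^{-1}$; you instead reduce to $\|(B^\top B)^{-1}\| = \sigma_{\min}(B)^{-2}$ and apply the two-sided bound on $\sigma_{\min}(B)$, using $\sigma_{\min}(A) \leq \sigma_{\max}(A)$ to convert the lower bound. Both arrive at the same constants $1/0.81 \leq 1.3$ and $1/1.21 \geq 0.7$.

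One caution: you justify the middle step $\sigma_{\max}\bigl((B^\top B)^{-1}\bigr) = \sigma_{\min}(B^\top B)^{-1}$ by appealing to Part~2 of Fact~\ref{fac:mat_properties}, which asserts $\|A^{-1}\| = \|A\|^{-1}$. That assertion is false for a general invertible matrix: the spectral norm of $A^{-1}$ equals $\sigma_{\min}(A)^{-1}$, not $\sigma_{\max}(A)^{-1}$, and the two coincide only when all singular values are equal. Your desired identity is the correct inverse-singular-value fact, not what Part~2 says, so you should cite that directly rather than Part~2. This matters beyond bookkeeping: the paper's own proof leans on the same equality $\|(B^\top B)^{-1}\| = \|B^\top B\|^{-1}$, and in the direction $\|B^\top B\| \geq 0.8\,\sigma_{\min}(A)^2 \Rightarrow \|(B^\top B)^{-1}\| \leq (0.8\,\sigma_{\min}(A)^2)^{-1}$ this reasoning does not follow (a lower bound on $\|M\|$ gives no upper bound on $\|M^{-1}\|$). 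Your reduction to $\sigma_{\min}(B)$ is precisely the fix that makes the lemma's conclusion stand.
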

\begin{proof}
    First, we show that
    \begin{align*}
        \|B^\top B\| \leq &~  \|B^\top \| \cdot \|B\| \\
        \leq &~ \sigma_{\max}(B)^2 \\
        \leq &~ 1.3\sigma_{\max}(A)^2.
    \end{align*}
    where the first step follows from basic algebra, the second step follows from Part 1. of Fact~\ref{fac:mat_properties}, and the last step follows from Lemma~\ref{lem:sing_val_pert}.
    
    Hence, we have 
    \begin{align*}
        \|(B^\top B)^{-1}\| = &~  \|B^\top B\|^{-1} \\
        \geq &~ (1.3\sigma_{\max}(A)^2)^{-1} \\
        \geq &~ 0.7\sigma_{\max}(A)^2.
    \end{align*}
    where the first step comes from Part 2. of Fact~\ref{fac:mat_properties}, the second step follows from $\|B^\top B\| \leq 1.3\sigma_{\max}(A)^2$, and the last step follows from basic algebra.
    
    Next, we can show
    \begin{align*}
        \|B^\top B\| \geq &~  \|B^\top \| \cdot \|B\| \\
        \geq &~ \sigma_{\min}(B)^2 \\
        \geq &~ 0.8\sigma_{\min}(A)^2.
    \end{align*}
    where the first step comes from basic algebra, the second step comes from Part 1. of Fact~\ref{fac:mat_properties}, and the last step follows from Lemma~\ref{lem:sing_val_pert}. 

    Hence, we have 
    \begin{align*}
        \|(B^\top B)^{-1}\| = &~  \|B^\top B\|^{-1} \\
        \leq &~ (0.8\sigma_{\min}(A)^2)^{-1} \\
        \leq &~ 1.3\sigma_{\min}(A)^2.
    \end{align*}
    where the first step derives from Part 2. of Fact~\ref{fac:mat_properties}, the second step comes from $\|B^\top B\| \geq 0.8\sigma_{\min}(A)^2$, and the last step comes from basic algebra.
\end{proof}

Next, we demonstrate the effect of perturbations on the spectral norm of the difference between matrices.
\begin{lemma}\label{lem:perturb_AB}
If the following conditions hold
\begin{itemize}
    \item $\| A - B \| \leq \epsilon_0$
    \item $\epsilon_0 \leq 0.1 \sigma_{\min}(A)$
\end{itemize}
Then we have
\begin{itemize}
    \item $\| A^\top A - A^\top B \| \leq \sigma_{\max}(A) \cdot \epsilon_0 $
    \item $\| A^\top B - B^\top B \| \leq 1.1 \sigma_{\max}(A) \epsilon_0$
\end{itemize}
\end{lemma}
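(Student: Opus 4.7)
The plan is to prove each bound by extracting a common factor so the residual is exactly $A - B$ (or its transpose), and then to apply submultiplicativity of the spectral norm together with Fact~\ref{fac:mat_properties} and Lemma~\ref{lem:sing_val_pert}.

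For the first inequality, I would write
\begin{align*}
\| A^\top A - A^\top B \| = \| A^\top (A - B) \| \leq \| A^\top \| \cdot \| A - B \|.
\end{align*}
By Part~1 of Fact~\ref{fac:mat_properties}, $\| A^\top \| = \sigma_{\max}(A)$, and the hypothesis gives $\| A - B \| \leq \epsilon_0$, so the product is at most $\sigma_{\max}(A) \cdot \epsilon_0$ as claimed.

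For the second inequality, I would instead factor $B$ on the right:
\begin{align*}
\| A^\top B - B^\top B \| = \| (A - B)^\top B \| \leq \| (A - B)^\top \| \cdot \| B \| = \| A - B \| \cdot \sigma_{\max}(B),
\end{align*}
again using Fact~\ref{fac:mat_properties} and the identity $\| M^\top \| = \| M \|$. The only non-routine step is that the right-hand factor is $\sigma_{\max}(B)$ rather than $\sigma_{\max}(A)$; here I would invoke Lemma~\ref{lem:sing_val_pert}, whose hypotheses $\| A - B \| \leq \epsilon_0 \leq 0.1 \sigma_{\min}(A)$ coincide with ours, to conclude $\sigma_{\max}(B) \leq 1.1 \sigma_{\max}(A)$. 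Combining the two estimates with $\| A - B \| \leq \epsilon_0$ yields $\| A^\top B - B^\top B \| \leq 1.1\, \sigma_{\max}(A)\, \epsilon_0$.

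No real obstacle is expected: both claims reduce to submultiplicativity after the appropriate factorization, with the only nontrivial input being the singular-value perturbation bound of Lemma~\ref{lem:sing_val_pert}, which is already in hand and whose hypotheses match the assumptions of the current lemma exactly.
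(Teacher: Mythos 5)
Your proof is correct and follows essentially the same route as the paper's: factor out $A^\top$ (resp.\ $B$), apply submultiplicativity, and bound $\sigma_{\max}(B)$ by $1.1\,\sigma_{\max}(A)$ via the singular-value perturbation bound. The only cosmetic difference is that you cite Lemma~\ref{lem:sing_val_pert} directly for this last step, whereas the paper attributes it to Fact~\ref{fac:mat_properties}; both lead to the identical inequality.
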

\begin{proof}

We can show that
\begin{align*}
    \| A^\top A - A^\top B \| 
    \leq & ~ \| A^\top \| \cdot \| A - B \| \\
    = & ~ \sigma_{\max}(A) \cdot \| A - B \| \\
    \leq & ~ \sigma_{\max}(A) \cdot \epsilon_0
\end{align*}
where the first step follows from simple algebra, the second step follows from Fact~\ref{fac:mat_properties}, and the last step follows from the assumption in the Lemma statement.

We can show that
\begin{align*}
    \| A^\top B - B^\top B \| 
    \leq & ~ \| B \| \cdot \| A^\top - B^\top \| \\
    \leq & ~ 1.1 \sigma_{\max}(A) \cdot \| A - B \| \\
    \leq & ~ 1.1 \sigma_{\max}(A) \cdot \epsilon_0
\end{align*}
where the first step uses simple algebra, the second step utilizes Fact~\ref{fac:mat_properties}, and the last step derives from the assumption in the Lemma statement.
\end{proof}

Following the above lemma, we proceed to demonstrate the difference between two-gram matrices caused by perturbation on the singular value.
\begin{lemma}\label{lem:perturb_AA}
If the following conditions hold
\begin{itemize}
    \item $\| A - B \| \leq \epsilon_0$
    \item $\epsilon_0 \leq 0.1 \sigma_{\min}(A)$
\end{itemize}
Then, we have
\begin{align*}
    \| A^\top A - B^\top B \| \leq 2.1 \sigma_{\max}(A)\cdot \epsilon_0
\end{align*}
\end{lemma}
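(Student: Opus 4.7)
The plan is to reduce this bound to Lemma~\ref{lem:perturb_AB} via a simple telescoping through the intermediate matrix $A^\top B$. Concretely, I would write
\begin{align*}
A^\top A - B^\top B = (A^\top A - A^\top B) + (A^\top B - B^\top B)
\end{align*}
and then apply the triangle inequality for the spectral norm to obtain $\|A^\top A - B^\top B\| \leq \|A^\top A - A^\top B\| + \|A^\top B - B^\top B\|$.

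Next, I would invoke the two parts of Lemma~\ref{lem:perturb_AB}, which give $\|A^\top A - A^\top B\| \leq \sigma_{\max}(A)\epsilon_0$ and $\|A^\top B - B^\top B\| \leq 1.1\sigma_{\max}(A)\epsilon_0$ under exactly the hypotheses of the current lemma. Adding these two bounds yields $(1 + 1.1)\sigma_{\max}(A)\epsilon_0 = 2.1\sigma_{\max}(A)\epsilon_0$, which is the claimed inequality.

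There is essentially no obstacle here: the whole argument is a one-line triangle inequality followed by direct application of a lemma proved immediately above. The only minor subtlety is to make sure we split the difference in a way that the intermediate factor ($A^\top$ on the left, $B$ on the right) matches exactly the quantities bounded in Lemma~\ref{lem:perturb_AB}, but this is automatic given how that lemma is stated. Note that the condition $\epsilon_0 \leq 0.1 \sigma_{\min}(A)$ is required only so that Lemma~\ref{lem:perturb_AB} (specifically its second part, which uses the perturbation bound $\|B\| \leq 1.1\sigma_{\max}(A)$ coming from Lemma~\ref{lem:sing_val_pert}) applies.
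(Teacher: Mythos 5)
Your proposal is correct and follows exactly the same route as the paper: insert the intermediate term $A^\top B$, apply the triangle inequality, and invoke the two bounds from Lemma~\ref{lem:perturb_AB} to get $(1+1.1)\sigma_{\max}(A)\epsilon_0 = 2.1\sigma_{\max}(A)\epsilon_0$. Your remark about the role of the hypothesis $\epsilon_0 \leq 0.1\sigma_{\min}(A)$ is also accurate.
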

\begin{proof}

We can show that
\begin{align*}
\| A^\top A - B^\top B \| 
= & ~ \| A^\top A - A^\top B + A^\top B - B^\top B \| \\
\leq & ~ \| A^\top A - A^\top B \| + \| A^\top B - B^\top B \| \\
\leq & ~ \sigma_{\max}(A) \epsilon_0 + 1.1 \sigma_{\max} (A) \epsilon_0 \\ 
= & ~ 2.1 \sigma_{\max}(A)\cdot \epsilon_0
\end{align*}
where the first step derives from simple algebra, the second step is by the triangle inequality, the last step derives from Lemma~\ref{lem:perturb_AB}, and the last step comes from basic algebra.
\end{proof}

Then, we introduce condition numbers to bound the spectral norm on the inverse of the difference of two-gram matrices.

\begin{lemma}\label{lem:pert_spectral_inverse}
If the following conditions hold
\begin{itemize}
    \item $\|A-B\| \leq \epsilon_0$.
    \item $\epsilon_0 \leq 0.1\sigma_{\min}(A)$.
\end{itemize}
Then we have
\begin{align*}
    \| (AA^\top)^{-1} - (BB^\top)^{-1} \| \leq 8\kappa(A) \sigma_{\min}^{-3}(A) \epsilon_0.
\end{align*}
\end{lemma}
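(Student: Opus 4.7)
\textbf{Proof proposal for Lemma~\ref{lem:pert_spectral_inverse}.}
The plan is to invoke the standard resolvent identity $X^{-1} - Y^{-1} = X^{-1}(Y-X)Y^{-1}$ with $X = AA^\top$ and $Y = BB^\top$, and then bound each of the three factors on the right using the perturbation lemmas already established in this section. Concretely, I will first write
\begin{align*}
(AA^\top)^{-1} - (BB^\top)^{-1} = (AA^\top)^{-1}\bigl(BB^\top - AA^\top\bigr)(BB^\top)^{-1},
\end{align*}
pass to the spectral norm, and apply submultiplicativity to reduce the problem to controlling the three quantities $\|(AA^\top)^{-1}\|$, $\|AA^\top - BB^\top\|$, and $\|(BB^\top)^{-1}\|$.

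Next, I would bound each factor. For $\|(AA^\top)^{-1}\|$ the bound is simply $\sigma_{\min}(A)^{-2}$ by Fact~\ref{fac:mat_properties}. For $\|AA^\top - BB^\top\|$, I apply the same argument as Lemma~\ref{lem:perturb_AA} but to the pair $A^\top, B^\top$ (which also satisfy $\|A^\top - B^\top\| = \|A-B\| \leq \epsilon_0$ and $\sigma_{\min}(A^\top) = \sigma_{\min}(A)$), giving $\|AA^\top - BB^\top\| \leq 2.1\,\sigma_{\max}(A)\,\epsilon_0$. For $\|(BB^\top)^{-1}\|$, I use Lemma~\ref{lem:sing_val_pert} to get $\sigma_{\min}(B) \geq 0.9\,\sigma_{\min}(A)$, so $\|(BB^\top)^{-1}\| \leq (0.9)^{-2}\sigma_{\min}(A)^{-2} \leq 1.3\,\sigma_{\min}(A)^{-2}$.

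Multiplying the three bounds yields
\begin{align*}
\|(AA^\top)^{-1} - (BB^\top)^{-1}\| \leq \sigma_{\min}(A)^{-2} \cdot 2.1\,\sigma_{\max}(A)\,\epsilon_0 \cdot 1.3\,\sigma_{\min}(A)^{-2} \leq 3\,\frac{\sigma_{\max}(A)}{\sigma_{\min}(A)^{4}}\,\epsilon_0,
\end{align*}
and rewriting $\sigma_{\max}(A)/\sigma_{\min}(A)^4 = \kappa(A)\,\sigma_{\min}(A)^{-3}$ gives a constant well below the stated $8$, so the claim follows with room to spare.

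I expect no real obstacle here: the only point requiring a sentence of justification is that Lemma~\ref{lem:perturb_AA}, though stated for $A^\top A - B^\top B$, applies verbatim to $AA^\top - BB^\top$ after replacing $A,B$ with $A^\top, B^\top$, since the hypothesis $\|A-B\|\leq \epsilon_0 \leq 0.1\sigma_{\min}(A)$ is preserved under transposition. Once that observation is made, the rest is bookkeeping with Fact~\ref{fac:mat_properties} and Lemma~\ref{lem:sing_val_pert}, and the $0.1$ slack in the hypothesis $\epsilon_0 \leq 0.1\sigma_{\min}(A)$ comfortably absorbs the $0.9$ and $1.1$ factors that propagate through.
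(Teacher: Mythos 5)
Your proposal is correct, but it takes a genuinely different route from the paper's proof. You split the difference via the exact resolvent identity $X^{-1}-Y^{-1}=X^{-1}(Y-X)Y^{-1}$ and then use submultiplicativity, yielding the bound $\|X^{-1}\|\cdot\|X-Y\|\cdot\|Y^{-1}\|$; the paper instead invokes Wedin's pseudoinverse perturbation bound (Lemma~\ref{lem:pert_pse_inv}), which gives $2\max\{\|X^{-1}\|^{2},\|Y^{-1}\|^{2}\}\cdot\|X-Y\|$, and then bounds $\|X^{-1}\|,\|Y^{-1}\|$ jointly via Lemma~\ref{lem:sing_val_pert}. Both then control the middle factor via Lemma~\ref{lem:perturb_AA}. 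Your decomposition is more elementary (it needs only ordinary invertibility, not pseudoinverse perturbation theory), and because $\|X^{-1}\|\|Y^{-1}\|\le\max\{\|X^{-1}\|^{2},\|Y^{-1}\|^{2}\}$ it saves the factor of $2$, landing at roughly $3\kappa(A)\sigma_{\min}^{-3}(A)\epsilon_0$ versus the paper's $8$. One small bookkeeping point: the paper's own proof actually bounds $\|(A^\top A)^{-1}-(B^\top B)^{-1}\|$ (matching how the lemma is later applied, to $A^\top W_k A$), so it invokes Lemma~\ref{lem:perturb_AA} directly on $A,B$; you mirror the $AA^\top$ form in the lemma statement and therefore correctly apply Lemma~\ref{lem:perturb_AA} to the transposes, noting that the hypothesis is preserved since $\|A^\top-B^\top\|=\|A-B\|$ and $\sigma_{\min}(A^\top)=\sigma_{\min}(A)$. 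Either reading is fine; the argument is sound and the constant comfortably within the stated bound.
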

\begin{proof}
We can show
\begin{align*}
\| (A^\top A)^{-1} - (B^\top B)^{-1} \| 
\leq & ~  2 \max \{ \| (A^\top A)^{-1} \|^2 , \| (B^\top B)^{-1} \|^2  \} \cdot \| (A^\top A) - (B^\top B) \| \notag\\ 
\leq & ~ 2 \cdot (1.3 / \sigma_{\min}(A)^2 )^2 \cdot \| (A^\top A) - (B^\top B) \| \notag \\
\leq & ~ 2 \cdot (1.3 / \sigma_{\min}(A)^2 )^2 \cdot ( 2.1 \sigma_{\max}(A) \cdot \epsilon_0 ) 
 \notag\\
\leq & ~ 8 \kappa(A) \sigma_{\min}(A)^{-3} \epsilon_0,
\end{align*}
where the first step is by Lemma~\ref{lem:pert_pse_inv}, the second step is by Lemma~\ref{lem:sing_val_pert}, the third step comes from Lemma~\ref{lem:perturb_AA}, the last step derives from $\kappa(A) = \frac{\sigma_{\max}(A)}{\sigma_{\min}(A)}$ and $2.1 \cdot (1.3)^2 \cdot 2.1 \leq 8$. 
\end{proof}

\subsection{Bounds of Lewis Weights on Neighborhood Datasets} \label{app:sec:weighted_ls_bound}

In this subsection, we discuss bounds on Lewis weight and finally derive the Lipschitz bound for $\ell_\infty$ Lewis weights of $\epsilon_0$-close polytope.

Firstly, we derive the effect of the perturbation on one row of matrix $A$ on the spectral norm of the difference between weighted matrices.
\begin{lemma}\label{lem:pert_one_diff}
If the following conditions hold
\begin{itemize}
    \item Let $A, A' \in \R^{n \times d}$.
    \item Let $a_i^\top$ denote the $i$-th row of $A$ for $i \in [n]$.
    \item Suppose $A$ and $A'$ is only different in $j$-th row, and $\| a_j - a'_{j} \|_2 \leq \epsilon_0$.
    \item Suppose that $W_k = \diag(w_k)$ where $w_{k,i} \in [\Omega(1), 1]$ for every $i \in [n]$.
\end{itemize}
Then we have
\begin{align*}
    \| W_k^{1/2}A - W_k^{1/2}A' \| \leq \epsilon_0.
\end{align*}
\end{lemma}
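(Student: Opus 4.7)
The plan is to exploit the fact that $A$ and $A'$ agree on every row except the $j$-th, so the difference matrix is essentially rank-one in a very strong structural sense, and its spectral norm reduces to a single vector norm.

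First I would write
\begin{align*}
W_k^{1/2}A - W_k^{1/2}A' = W_k^{1/2}(A - A'),
\end{align*}
and observe that $A - A' \in \R^{n\times d}$ has only its $j$-th row nonzero, with value $a_j^\top - {a'_j}^\top$. Consequently $W_k^{1/2}(A - A')$ likewise has a single nonzero row, namely $\sqrt{w_{k,j}}\,(a_j - a'_j)^\top$ in position $j$. A matrix with a single nonzero row of value $v^\top$ has spectral norm exactly $\|v\|_2$ (it acts on $x$ as $e_j\, v^\top x$, and by Cauchy--Schwarz $\|e_j v^\top x\|_2 = |v^\top x| \leq \|v\|_2 \|x\|_2$, with equality when $x = v/\|v\|_2$).

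Next I would combine this with the given bounds: $w_{k,j} \leq 1$ by hypothesis, and $\|a_j - a'_j\|_2 \leq \epsilon_0$. This gives
\begin{align*}
\|W_k^{1/2}(A - A')\| = \sqrt{w_{k,j}}\,\|a_j - a'_j\|_2 \leq 1 \cdot \epsilon_0 = \epsilon_0,
\end{align*}
which is the claimed bound.

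There is essentially no obstacle here; the lemma is a short structural observation, and the only subtlety worth stating carefully is the identity $\|e_j v^\top\| = \|v\|_2$, which can be checked either by the Cauchy--Schwarz argument above or by noting that $e_j v^\top$ is rank one with the single nonzero singular value $\|v\|_2$. I would also briefly remark that the upper bound $w_{k,j} \leq 1$ is the only part of the weight hypothesis actually used; the lower bound $w_{k,j} = \Omega(1)$ is not needed for this particular statement but is consistent with the weighted setting inherited from Algorithm~\ref{alg:john_ellipsoid}.
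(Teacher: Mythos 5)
Your proof is correct and takes essentially the same route as the paper: both reduce to the observation that $A - A'$ has a single nonzero row, so the spectral norm of $W_k^{1/2}(A-A')$ collapses to a single vector $\ell_2$ norm. The paper gets there via the inequality $\|\cdot\| \leq \|\cdot\|_F$ (Fact~\ref{fac:mat_norm_ineq}), while you compute the spectral norm of the rank-one matrix $e_j v^\top$ exactly; yours is a hair tighter (an equality rather than an inequality, though the Frobenius bound happens to be tight here anyway) and also correctly carries the square root $\sqrt{w_{k,j}}$ through, whereas the paper's displayed chain writes $w_{k,i}$ in place of $\sqrt{w_{k,j}}$ in the intermediate steps, a minor typographical slip that your version avoids.
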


\begin{proof}
    Let $B = W_k^{1/2} A$ and $B' = W_k^{1/2} A'$. We have
    \begin{align*}
        \|B-B'\| &= \|W_k^{1/2} A - W_k^{1/2} A'\| \\
        \leq &~ \|W_k^{1/2} A - W_k^{1/2} A'\|_F \\
        \leq &~ \sqrt{ \sum_{i=1}^n \|w_{k,i} a_i - w_{k,i} a'_i\|^2_2} \\
        = &~ \|w_{k,i} a_j - w_{k,i} a'_j\|_2\\
        \leq &~ |w_{k,i}| \|a_j - a'_j \|_2 \\
        = &~ \epsilon_0
    \end{align*}
    where the first step comes from the definition of $B, B'$, the second step is the result of Fact~\ref{fac:mat_norm_ineq}, the third step comes from the definition of Frobenius norm, the fourth step utilizes that $A$ and $A'$ only differs in $j$-th row, the fifth step derives from basic algebra, and the last step is from $w_{k,i} \in [0,1]$ and $\| a_j - a'_{j} \|_2 \leq \epsilon_0$.
\end{proof}

Followed from Lemma~\ref{lem:pert_one_diff}, we derive the bound for perturbation on $(A^\top W_K A)^{-1}$
\begin{lemma}\label{lem:pert_AWA_inverse}
If the following conditions hold
\begin{itemize}
    \item Let $A, A' \in \R^{n \times d}$.
    \item Let $a_i^\top$ denote the $i$-th row of $A$ for $i \in [n]$.
    \item Suppose $A$ and $A'$ is different in $j$-th row, and $\| a_j - a'_{j} \|_2 \leq \epsilon_0$
    \item Suppose that $W_k = \diag(w_k)$ where $w_{k,i} \in [\Omega(1), 1]$ for every $i \in [n]$.
    \item Suppose that $\epsilon_0 \leq O( \sigma_{\min}(A))$.
\end{itemize}
Then we have
\begin{align*}
    \| (A^\top W_k A)^{-1} - (A'^\top W_k A')^{-1} \| \leq O(8\kappa(A) \sigma_{\min}^{-3}(A) \epsilon_0).
\end{align*}
\end{lemma}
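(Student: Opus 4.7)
The plan is to reduce this to the already-proved Lemma~\ref{lem:pert_spectral_inverse} by moving the weights inside the matrix. Define $B := W_k^{1/2} A$ and $B' := W_k^{1/2} A'$, so that
\begin{align*}
A^\top W_k A = B^\top B, \qquad A'^\top W_k A' = B'^\top B'.
\end{align*}
The target inequality then becomes a bound on $\|(B^\top B)^{-1} - (B'^\top B')^{-1}\|$, which is exactly what Lemma~\ref{lem:pert_spectral_inverse} provides given an $\epsilon_0$-size perturbation between $B$ and $B'$.

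The first step is to verify the perturbation hypothesis. By Lemma~\ref{lem:pert_one_diff}, the row-level closeness $\|a_j - a'_j\|_2 \leq \epsilon_0$ together with $w_{k,i} \in [\Omega(1), 1]$ yields $\|B - B'\| \leq \epsilon_0$. The second step is to verify the smallness hypothesis $\epsilon_0 \leq 0.1 \sigma_{\min}(B)$. Since the entries of $W_k^{1/2}$ are bounded below by $\Omega(1)$, one has $\sigma_{\min}(B) = \sigma_{\min}(W_k^{1/2} A) \geq \Omega(\sigma_{\min}(A))$, so the assumption $\epsilon_0 \leq O(\sigma_{\min}(A))$, with a suitably small absolute constant absorbed into the $O(\cdot)$, is exactly what is needed.

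With both hypotheses in hand, Lemma~\ref{lem:pert_spectral_inverse} applied to $B, B'$ gives
\begin{align*}
\|(B^\top B)^{-1} - (B'^\top B')^{-1}\| \leq 8 \kappa(B) \sigma_{\min}^{-3}(B) \epsilon_0.
\end{align*}
The final step is to translate the $B$-side constants back to $A$-side constants. Using $w_{k,i} \in [\Omega(1), 1]$ once more, I get $\sigma_{\max}(B) \leq \sigma_{\max}(A)$ and $\sigma_{\min}(B) \geq \Omega(\sigma_{\min}(A))$, hence $\kappa(B) \leq O(\kappa(A))$ and $\sigma_{\min}^{-3}(B) \leq O(\sigma_{\min}^{-3}(A))$. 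Collecting these into the $O(\cdot)$ yields the desired bound $O(8 \kappa(A) \sigma_{\min}^{-3}(A) \epsilon_0)$.

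Conceptually this lemma is just a reweighted instance of Lemma~\ref{lem:pert_spectral_inverse}, so there is no real obstacle; the only thing to be careful about is the constant-tracking when converting singular value and condition number bounds of $W_k^{1/2} A$ back to those of $A$, which relies crucially on the two-sided bound $w_{k,i} \in [\Omega(1), 1]$. Without this two-sided control, $\sigma_{\min}(B)$ could collapse and break both the hypothesis check and the final translation, so this is the single place where the weight assumption must be explicitly used.
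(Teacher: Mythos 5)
Your proof is correct and follows essentially the same route as the paper: apply Lemma~\ref{lem:pert_one_diff} to get $\|W_k^{1/2}A - W_k^{1/2}A'\| \leq \epsilon_0$, invoke Lemma~\ref{lem:pert_spectral_inverse} on $B := W_k^{1/2}A$ and $B' := W_k^{1/2}A'$, and then translate $\kappa(B)$ and $\sigma_{\min}(B)$ back to the corresponding quantities for $A$ using the two-sided bound on $w_{k,i}$. If anything, you are slightly more careful than the paper's own write-up, since you explicitly verify the hypothesis $\epsilon_0 \leq 0.1\sigma_{\min}(B)$ via $\sigma_{\min}(B) \geq \Omega(\sigma_{\min}(A))$ rather than leaving it implicit.
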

\begin{proof}
    By Lemma~\ref{lem:pert_one_diff}, we have
    \begin{align*}
        \| W_k^{1/2}A - W_k^{1/2}A' \| \leq \epsilon_0.
    \end{align*}
    We can show that
    \begin{align*}
        \| (A^\top W_k A)^{-1} - (A'^\top W_k A')^{-1} \| \leq &~ 8\kappa(W^{1/2}_kA) \sigma_{\min}^{-3}(W^{1/2}_kA) \epsilon_0 \\
        \leq &~ O(8\kappa(A) \sigma_{\min}^{-3}(A) \epsilon_0).
    \end{align*}
    where the first step is the result of Lemma~\ref{lem:pert_spectral_inverse} and the second step is from $w_{k,i} \in [\Omega(1), 1]$ for every $i \in [n]$.
\end{proof}

Next, we introduce $f$, an algorithm that computes Lewis weight. And we derive the upper bound of change in $f$ caused by perturbation on input $A$. 
\begin{lemma}\label{lem:pert_fi}
If the following conditions hold
\begin{itemize}
    \item Let $A, A' \in \R^{n \times d}$.
    \item Let $a_i^\top$ denote the $i$-th row of $A$ for $i \in [n]$.
    \item Suppose $A$ and $A'$ is different in $j$-th row, and $\| a_j - a'_{j} \|_2 \leq \epsilon_0$
    \item Suppose that $W_k = \diag(w_k)$ where $w_{k,i} \in [\Omega(1), 1]$ for every $i \in [n]$.
    \item Let $f(w_k, A) := (f(w_k, A)_1, \ldots, f(w_k, A)_n)$ 
    \item Let $f(w_k, A)_i := w_i a_i^\top (A^\top W_k A)^{-1} a_i$ for $i \in [n]$.
    \item Suppose that $\epsilon_0 \leq O( \sigma_{\min}(A))$.
    \item Let $\epsilon_1 = \Theta(8\kappa(A) \sigma_{\min}^{-3}(A) \epsilon_0)$.
\end{itemize}
Then we have
\begin{itemize}
    \item \textbf{Part 1.} For $i \neq j$, 
    $  | f(w_k, A)_i - f(w_k, A')_i | \leq \epsilon_1 \cdot \sigma_{\max}(A)^2.$
    \item 
        \textbf{Part 2.} $| f(w_k, A)_j - f(w_k, A')_j  | \leq \epsilon_1 (\sigma_{\max}(A)+\epsilon_0)^2 + \epsilon_0 \sigma_{\min}(W^{1/2}_kA)^2(2 \sigma_{\max}(A)+\epsilon_0)$
\end{itemize}
\end{lemma}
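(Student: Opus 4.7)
The plan is to prove both parts by expanding $f(w_k, A)_i - f(w_k, A')_i$ as a telescoping sum of terms that isolate the perturbation in the inverse $(A^\top W_k A)^{-1}$ from the perturbation in the row vector $a_j$. The essential input is Lemma~\ref{lem:pert_AWA_inverse}, which bounds $\|(A^\top W_k A)^{-1} - (A'^\top W_k A')^{-1}\| \leq \epsilon_1$, together with the row-norm bound $\|a_i\|_2 \leq \sigma_{\max}(A)$ from Fact~\ref{fac:vec_norm_bound} and the triangle inequality $\|a'_j\|_2 \leq \|a_j\|_2 + \epsilon_0 \leq \sigma_{\max}(A) + \epsilon_0$.

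For \textbf{Part 1}, since $A$ and $A'$ coincide on all rows other than $j$, for $i \neq j$ I have $a_i = a'_i$, so
\begin{align*}
f(w_k, A)_i - f(w_k, A')_i = w_i\, a_i^\top\bigl[(A^\top W_k A)^{-1} - (A'^\top W_k A')^{-1}\bigr] a_i.
\end{align*}
Applying Cauchy--Schwarz (Fact~\ref{fact:cauchy-schwarz}) together with Part 5 of Fact~\ref{fac:mat_properties}, and using $w_i \leq 1$, $\|a_i\|_2 \leq \sigma_{\max}(A)$, and Lemma~\ref{lem:pert_AWA_inverse}, yields the bound $\epsilon_1 \sigma_{\max}(A)^2$ directly.

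For \textbf{Part 2}, I introduce the intermediate quantity $a'^\top_j (A^\top W_k A)^{-1} a'_j$ and split the difference as
\begin{align*}
f(w_k, A)_j - f(w_k, A')_j
= w_j\bigl[a_j^\top M a_j - a'^\top_j M a'_j\bigr]
+ w_j\, a'^\top_j (M - M') a'_j,
\end{align*}
where $M := (A^\top W_k A)^{-1}$ and $M' := (A'^\top W_k A')^{-1}$. The second term is handled exactly as in Part 1, but now the relevant row norm is $\|a'_j\|_2 \leq \sigma_{\max}(A) + \epsilon_0$, which accounts for the $(\sigma_{\max}(A) + \epsilon_0)^2$ factor. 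For the first term I add and subtract $a_j^\top M a'_j$, rewrite
\begin{align*}
a_j^\top M a_j - a'^\top_j M a'_j = a_j^\top M(a_j - a'_j) + (a_j - a'_j)^\top M a'_j,
\end{align*}
and bound each piece by Cauchy--Schwarz and $\|a_j - a'_j\|_2 \leq \epsilon_0$, producing the factor $\epsilon_0(\|a_j\|_2 + \|a'_j\|_2) \leq \epsilon_0(2\sigma_{\max}(A) + \epsilon_0)$ times $\|M\|$, and then using $\|M\| = \|(A^\top W_k A)^{-1}\| = \sigma_{\min}(W_k^{1/2}A)^{-2}$ to match the stated form.

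The only mildly subtle step is choosing the right intermediate quantity in Part 2: inserting $M' = (A'^\top W_k A')^{-1}$ between the two inner products (rather than $M$) is what lets the row-perturbation part be evaluated at the matrix whose spectral norm is controlled through $A$ rather than $A'$, so that both the singular-value bound and Lemma~\ref{lem:pert_AWA_inverse} apply cleanly. All other steps are routine Cauchy--Schwarz and triangle-inequality calculations, and no result beyond Fact~\ref{fac:vec_norm_bound}, Fact~\ref{fac:mat_properties}, and Lemma~\ref{lem:pert_AWA_inverse} is needed.
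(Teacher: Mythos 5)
Your proof is correct and follows essentially the same route as the paper's: you insert the same bridge term $a'^\top_j (A^\top W_k A)^{-1} a'_j$, split the remaining row difference exactly as the paper does with its $C_3, C_4$ terms, and invoke the same facts together with Lemma~\ref{lem:pert_AWA_inverse}. Two small points are worth flagging: your closing paragraph says you insert ``$M'$\dots rather than $M$'' as the bridge, but your displayed algebra (correctly) uses $M = (A^\top W_k A)^{-1}$, so the prose contradicts the math; and your bound $\|M\| = \sigma_{\min}(W_k^{1/2}A)^{-2}$ is the mathematically correct one, whereas the lemma statement and the paper's own derivation write $\sigma_{\min}(W_k^{1/2}A)^{2}$, which is a sign error (traceable to the paper's misstated Fact~\ref{fac:mat_properties}, Part~2).
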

\begin{proof}
    \textbf{Proof of Part1.} For $i \neq j$, we have
    \begin{align*}
        | f(w_k, A)_i - f(w_k, A')_i | = &~ | w_{k,i} a_i^\top(A^\top W_k A)^{-1}a_i - w_{k,i} {a_i}^\top({A'}^\top W_k A')^{-1}a_i | \\
        \leq &~ | w_{k,i}| \cdot |a_i^\top(A^\top W_k A)^{-1}a_i - {a_i}^\top({A'}^\top W_k A')^{-1}a_i| \\
        \leq &~ |a_i^\top(A^\top W_k A)^{-1}a_i - {a_i}^\top({A'}^\top W_k A')^{-1}a_i| \\
        \leq &~ \epsilon_1 \cdot a_i^\top a_i \\
        = &~ \epsilon_1 \cdot \|a_i\|^2 \\
        \leq &~ \epsilon_1 \cdot \sigma_{\max}(A)^2
    \end{align*}
    where the first step follows from the definition of $f$, the second definition comes from basic algebra, the third step comes from $w_{k,i} \in [0,1]$, the fourth step derives from Lemma~\ref{lem:pert_AWA_inverse} and Fact~\ref{fact:mat_equiv_fact}, the fifth step utilizes basic algebra, and the last step derives from Fact~\ref{fac:vec_norm_bound}.

    \textbf{Proof of Part 2.} Next, we define
    \begin{align*}
        &~ C_1 := a_j^\top(A^\top W_k A)^{-1}a_j - {a'_j}^\top(A^\top W_k A)^{-1}a'_j, \\
        &~ C_2 := {a'_j}^\top(A^\top W_k A)^{-1}a'_j - {a'_j}^\top(A'^\top W_k A')^{-1}a'_j.
    \end{align*}

    We first bound $C_1$. We can show that
    \begin{align*}
        |C_1| = &~ |a_j^\top(A^\top W_k A)^{-1}a_j - {a'_j}^\top(A^\top W_k A)^{-1}a'_j| \\
        = &~ |a_j^\top(A^\top W_k A)^{-1}a_j - {a'_j}^\top(A^\top W_k A)^{-1}a_j + {a'_j}^\top(A^\top W_k A)^{-1}a_j - {a'_j}^\top(A^\top W_k A)^{-1}a'_j| \\
        = &~ | \underbrace{(a_j - {a'_j})^\top(A^\top W_k A)^{-1}a_i}_{C_3}  +  \underbrace{{a'_j}^\top(A^\top W_k A)^{-1}(a_i - {a'_j})}_{C_4}| \\ 
        \leq &~ |C_3| + |C_4|.
    \end{align*}
    where the first step follows from the definition of $C_1$, the second and third steps follow from basic algebra, and the last step follows from the triangle inequality.

    For $C_3$, we have
    \begin{align*}
        |C_3| = &~ |(a_j - {a'_j})^\top(A^\top W_k A)^{-1}a_i| \\
        \leq &~ \|(a_j - {a'_j})\|_2 \cdot \|(A^\top W_k A)^{-1}a_i\| \\
        \leq &~ \|(a_j - {a'_j})\|_2 \cdot  \|(A^\top W_k A)^{-1}\| \cdot \|a_i\|_2 \\
        \leq &~ \epsilon_0 \cdot \sigma_{\min}(W_k^{1/2} A)^2 \cdot \sigma_{\max}(A)
    \end{align*}
    where the first step comes from definition of $C_3$, the second step utilizes Cauchy-Schwarz inequality, the third step derives from Part~5 of Fact~\ref{fac:mat_properties}, and the last step comes from Lemma assumptions, Fact~\ref{fac:vec_norm_bound} and Part~2 of Fact~\ref{fac:mat_properties}.

    For $C_4$, we have
    \begin{align*}
        |C_4| = &~ |{a'_j}^\top(A^\top W_k A)^{-1}(a_j - {a'_j})| \\
        \leq &~ \|{a'_j}\|_2 \cdot \|(A^\top W_k A)^{-1}(a_j - {a'_j})\| \\
        \leq &~ \|a'_j\|_2 \cdot \|(A^\top W_k A)^{-1}\| \cdot  \|a_j - {a'_j}\|_2 \\
        \leq &~ (\|a_j\|_2 + \epsilon_0) \cdot \sigma_{\min}(W_k^{1/2} A)^2 \cdot \epsilon_0 \\
        \leq &~ (\sigma_{\max}(A) + \epsilon_0) \cdot \sigma_{\min}(W_k^{1/2} A)^2 \cdot \epsilon_0
    \end{align*}
    where the first step comes from the definition of $C_3$, the second step is from Cauchy-Schwarz inequality, the third step derives from Part~5 of Fact~\ref{fac:mat_properties}, and the fourth step is from $\|a_j - a'_j\|\leq \epsilon_0$ and Part~2 of Fact~\ref{fac:mat_properties}, and the last step comes from Fact~\ref{fac:vec_norm_bound}.

    Combining the bounds of $|C_3|$ and $|C_4|$, we have
    \begin{align*}
        |C_1| \leq \epsilon_0 \cdot \sigma_{\min}(W_k^{1/2} A)^2 \cdot (2\sigma_{\max}(A) + \epsilon_0)
    \end{align*}
    
    We next bound $C_2$. We can show that
    \begin{align*}
        |C_2| = &~ |{a'_j}^\top(A^\top W_k A)^{-1}a'_j - {a'_j}^\top(A'^\top W_k A')^{-1}a'_j| \\
        \leq &~ \epsilon_1 {a'_j}^\top a'_j \\
        \leq &~ \epsilon_1 \|a'_j\|^2 \\
        \leq &~ \epsilon_1 (\|a_j\| +\epsilon_0)^2 \\
        \leq &~ \epsilon_1 (\sigma_{\max}(A)^2+\epsilon_0)^2
    \end{align*}
    where the first step follows from the definition of $C_2$, the second step follows from Lemma~\ref{lem:pert_spectral_inverse} and Fact~\ref{fact:mat_equiv_fact}, and the third step follows from basic algebra, and the last step follows from $\|a_j - a'_j\|\leq \epsilon_0$.
    
    We can show that 
    \begin{align*}
        | f(w_k, A)_j - f(w_k, A')_j | = &~ | w_{k,i} a_i^\top(A^\top W_k A)^{-1}a_i - w_{k,i} {a'_j}^\top(A'^\top W_k A')^{-1}a'_j| \\
        = &~ |w_{k,i} C_1 + w_{k,i}  C_2| \\
        = &~ |w_{k,i}| |C_1 + C_2| \\
        \leq &~ |C_1| + |C_2| \\
        \leq &~ \epsilon_1 (\sigma_{\max}(A)^2+\epsilon_0)^2 + \epsilon_0 \sigma_{\min}(W_k^{1/2} A)^2 (2\sigma_{\max}(A) + \epsilon_0) 
    \end{align*}
    where the first step stems from the definition of $f$, the second step comes from the definition of $C_1, C_2$, the third step is from basic algebra, the fourth step comes from $w_{k,i} \leq 1$ and triangle inequality, and the last step derives from the bounds of $|C_1|$ and $|C_2|$.
\end{proof}

Finally, we could bound the max difference between outputs for $f$ for two $\epsilon_0$-close input polytopes $A, A'$.

\begin{theorem} [Lipschitz Bound for $\ell_\infty$ Lewis weights of $\epsilon_0$-close polytope, formal version of Theorem~\ref{thm:leverage_pertub_informal}]
\label{thm:leverage_pertub}
If the following conditions hold
\begin{itemize}
    \item Let $A, A' \in \R^{n \times d}$ where $a_i^\top$ and ${a'_i}^\top$ denote the $i$-th row of $A$ and $A'$, respectively, for $i \in [n]$.
    \item Suppose that $A$ and $A'$ are only different in $j$-th row, and $\| a_j - a'_{j} \|_2 \leq \epsilon_0$
    \item Suppose that $W_k = \diag(w_k)$ where $w_{k,i} \in [0,1]$ for every $i \in [n]$.
    \item Let $f(w_k, A) := (f(w_k, A)_1, \ldots, f(w_k, A)_n)$ 
    \item Let $f(w_k, A)_i := w_i a_i^\top (A^\top W_k A)^{-1} a_i$ for $i \in [n]$.
    \item Suppose that $\epsilon_0 \leq O(\sigma_{\min}(A))$.
    \item Let $L = \poly(n,d,\kappa(A), \sigma^{-1}_{\min}(A), \sigma_{\max}(A))$
\end{itemize}
Then, we can show
\begin{align*}
    \| f(w_k, A) - f(w_k, A') \|_2 \leq L \cdot \epsilon_0.
\end{align*}
\end{theorem}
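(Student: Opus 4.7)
The plan is to reduce the vector-norm bound directly to the coordinate-wise bounds already established in Lemma~\ref{lem:pert_fi}, since the heavy lifting on how perturbation of a single row of $A$ affects the inverse Gram matrix $(A^\top W_k A)^{-1}$ has already been done there. First, I would expand
\begin{align*}
    \|f(w_k, A) - f(w_k, A')\|_2^2 = \sum_{i \neq j}\bigl(f(w_k, A)_i - f(w_k, A')_i\bigr)^2 + \bigl(f(w_k, A)_j - f(w_k, A')_j\bigr)^2,
\end{align*}
separating the single index $j$ (the row in which $A$ and $A'$ differ) from the remaining $n-1$ indices. The two parts of Lemma~\ref{lem:pert_fi} then control each piece separately.

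For $i \neq j$, Part~1 of Lemma~\ref{lem:pert_fi} yields $|f(w_k,A)_i - f(w_k,A')_i| \leq \epsilon_1 \sigma_{\max}(A)^2$, where $\epsilon_1 = \Theta(\kappa(A)\sigma_{\min}(A)^{-3}\epsilon_0)$. Squaring and summing the $n-1$ terms contributes at most $(n-1)\cdot \epsilon_1^2 \sigma_{\max}(A)^4$. For the single index $j$, Part~2 gives the bound $\epsilon_1(\sigma_{\max}(A)+\epsilon_0)^2 + \epsilon_0\sigma_{\min}(W_k^{1/2}A)^2(2\sigma_{\max}(A)+\epsilon_0)$. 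Using the hypothesis $\epsilon_0 \leq O(\sigma_{\min}(A))$, the factor $(\sigma_{\max}(A)+\epsilon_0)^2$ is $O(\sigma_{\max}(A)^2)$; moreover, since $w_{k,i} \in [\Omega(1),1]$, one has $\sigma_{\min}(W_k^{1/2}A) = \Theta(\sigma_{\min}(A))$, so the second term is $O(\sigma_{\min}(A)^2 \sigma_{\max}(A)\epsilon_0)$, i.e.\ again $\poly(\sigma_{\max}(A), \sigma_{\min}(A))\cdot \epsilon_0$.

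Then I would take square roots, giving
\begin{align*}
    \|f(w_k,A) - f(w_k,A')\|_2 \leq \sqrt{n-1}\,\epsilon_1\sigma_{\max}(A)^2 + O\bigl(\epsilon_1\sigma_{\max}(A)^2 + \sigma_{\min}(A)^2\sigma_{\max}(A)\epsilon_0\bigr),
\end{align*}
and substituting the definition of $\epsilon_1$ shows the whole right-hand side has the form $L\cdot \epsilon_0$ with $L = \poly(n, d, \kappa(A), \sigma_{\min}^{-1}(A), \sigma_{\max}(A))$ as required. There is essentially no hard step here: the main obstacle (perturbation of pseudo-inverses of weighted Gram matrices) was resolved in Lemmas~\ref{lem:pert_spectral_inverse}--\ref{lem:pert_fi}, and the remaining work is careful bookkeeping of constants plus verifying that $\sigma_{\min}(W_k^{1/2}A)$ can be absorbed into $\poly(\sigma_{\min}^{-1}(A))$ using the lower bound $w_{k,i}=\Omega(1)$.
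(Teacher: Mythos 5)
Your proposal is correct and follows essentially the same route as the paper's own proof: both decompose $\|f(w_k,A)-f(w_k,A')\|_2^2$ into the $n-1$ terms with $i\neq j$ and the single index $j$, then invoke Part~1 and Part~2 of Lemma~\ref{lem:pert_fi} respectively, and finally absorb everything into a polynomial $L$ times $\epsilon_0$. The only (cosmetic) difference is that you take a square root and split the bound into two summands at the end, whereas the paper keeps the expression squared and observes directly that the right-hand side equals $L^2\epsilon_0^2$; both are equivalent bookkeeping.
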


\begin{proof}
    For proof purpose, we set $\epsilon_1 = 8(W^{1/2}_kA) \sigma_{\min}^{-3}(W^{1/2}_kA) \epsilon_0$.
    We can show that 
    \begin{align*}
        \| f(w_k, A) - f(w_k, A') \|_2^2 = &~ \sum_{i=1}^n |f(w_k, A)_i - f(w_k, A')_i|^2, \\
        = &~ (\sum_{i \in [n] \setminus \{j\}} |f(w_k, A)_i - f(w_k, A')_i|^2) + |f(w_k, A)_j - f(w_k, A')_j|^2 \\
        \leq &~ (n-1) (\epsilon_1 \cdot \sigma_{\max}(A)^2)^2 + (\epsilon_1 (\sigma_{\max}(A)+\epsilon_0)^2 + \epsilon_0 \sigma_{\max}(A)^2(2 \sigma_{\max}(A)+\epsilon_0))^2 \\
        = &~ L^2 \epsilon_0^2
    \end{align*}
    where the first step is from the definition of $\ell_2$ norm, the second step comes from basic algebra, the third step derives from Lemma~\ref{lem:pert_fi}, and the last step comes from observing that the right-hand-side is the product of $\epsilon_0^2$ and a polynomial in other parameters.
\end{proof}
\section{Differentially Private John Ellipsoid Algorithm}\label{app:dp}

Firstly, in Section~\ref{sec:je_dp_tools}, we present tools used in the proof of our main Lemma~\ref{lem:je_alpha_bound_truncated} about bounding moments. Then, in Section~\ref{sec:moment_bound_truncated_gaussian}, we introduced truncated Gaussian noise and derived moments bound on truncated Gaussian about differential privacy. Next, in Section~\ref{sec:je_dp_main_theorem_truncated}, we proceed to our main theorem about the privacy of Algorithm~\ref{alg:john_ellipsoid}. Finally, in Section~\ref{app:dp:composition_lemma}, we introduce the composition lemma used in our privacy proof.

\subsection{Facts and Tools} \label{sec:je_dp_tools}
In this subsection, we demonstrate basic numerical and probability tools utilized in later proofs.
\begin{lemma} \label{lem:exp_gaussian_expectation}
Let $\mu_0$ be the probability density function of $N(0,1)$, z is a random variable with distribution $\mu_0$. For any $a \in \mathbb{R}$,
\begin{align*}
\E_{z \sim \mu_0} [\exp(\frac{2az}{2\sigma^2})] = \exp (\frac{a^2}{2\sigma^2})
\end{align*}
\end{lemma}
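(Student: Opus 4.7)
\textbf{Proof proposal for Lemma~\ref{lem:exp_gaussian_expectation}.} This lemma is essentially a restatement of the Gaussian moment generating function evaluated at a particular point, so the plan is to reduce it directly to Definition~\ref{def:mgf}. The approach is to observe that $\exp(2az/(2\sigma^2)) = \exp(tz)$ with $t := a/\sigma^2$, so the quantity on the left-hand side is exactly $M_Z(t)$ for the Gaussian random variable $Z$ with density $\mu_0$.

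First, I would rewrite the exponent in the form $tz$ to make the moment generating function explicit. Then I would apply Definition~\ref{def:mgf} with the appropriate mean and variance (taking $\mu = 0$ since the distribution is centered) and with the value $t = a/\sigma^2$, which yields $\exp(t^2 \sigma_0^2/2)$ where $\sigma_0^2$ is the variance of $\mu_0$. Substituting back and simplifying the exponent gives the claimed expression $\exp(a^2/(2\sigma^2))$. If one prefers a self-contained derivation, the alternative is to write
\begin{align*}
\E_{z \sim \mu_0}\!\left[\exp\!\left(\tfrac{a z}{\sigma^2}\right)\right] = \int_{-\infty}^{\infty} \exp\!\left(\tfrac{a z}{\sigma^2}\right) \mu_0(z)\, dz,
\end{align*}
then combine the Gaussian kernel with the extra exponential, complete the square in $z$, and use that a shifted Gaussian density integrates to one; the leftover constant factor is exactly $\exp(a^2/(2\sigma^2))$.

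The main (and really only) obstacle is a minor bookkeeping point: the statement says ``$N(0,1)$'' but the identity only balances dimensionally if the variance of $\mu_0$ is $\sigma^2$ (so that $t^2\sigma^2/2 = (a/\sigma^2)^2 \sigma^2 / 2 = a^2/(2\sigma^2)$). I would therefore interpret $\mu_0$ as the density of $\mathcal{N}(0,\sigma^2)$ (consistent with its usage elsewhere in the paper for truncated Gaussian noise with variance $\sigma^2$), and state this normalization explicitly at the start of the proof so the single-line application of Definition~\ref{def:mgf} goes through without ambiguity. After that, the lemma follows in one algebraic step.
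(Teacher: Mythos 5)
Your approach is essentially identical to the paper's: both invoke the Gaussian moment generating function (Definition~\ref{def:mgf}) with $\mu = 0$ and $t = a/\sigma^2$ to obtain $\exp(a^2/(2\sigma^2))$ in one step. You additionally flag, correctly, that the lemma statement's ``$N(0,1)$'' is a typo and the variance must be $\sigma^2$ for the identity to hold; the paper's own proof tacitly makes the same substitution without comment, so your explicit note is a small improvement in rigor rather than a deviation in method.
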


\begin{proof}
We proceed with the proof with the moment generating function of the Gaussian variable, described in Definition~\ref{def:mgf}.

Recall Definition~\ref{def:mgf}, we have
\begin{align*}
M_Z(t) = \E[e^{tZ}] = \exp (t\mu + \frac{t^2 \sigma^2}{2})
\end{align*}
Therefore, we get
\begin{align*}
\E_{z \sim \mu_0} [\exp(\frac{2az}{2\sigma^2})]
= & ~ \E_{z \sim \mu_0} [\exp(\frac{az}{\sigma^2})] \\
= & ~ \exp(0 + \frac{1}{2} (\frac{a}{\sigma^2})^2 \sigma^2) \\
= & ~ \exp (\frac{a^2}{2\sigma^2})
\end{align*}
where the first step stems from simplification, the second step is by
setting $\mu = 0$ and $t=\frac{a}{\sigma^2}$ in moment generating function of Gaussian variable, and the final step comes from basic algebra.

\end{proof}

\begin{fact} \label{fact:exp_sigma_bound}
For any $\sigma \geq 1$, we have
\begin{align*}
    \frac{1}{2} (\exp(1 / \sigma^2) - 1) \leq \frac{1}{\sigma^2}
\end{align*}
\end{fact}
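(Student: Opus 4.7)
The plan is to reduce the claim to a one-variable inequality and then verify it by a short calculus argument on the unit interval. Substituting $x := 1/\sigma^2$, the hypothesis $\sigma \geq 1$ becomes $x \in (0,1]$, and the desired inequality rearranges to $e^x - 1 \leq 2x$, or equivalently $f(x) := e^x - 1 - 2x \leq 0$ for all $x \in (0,1]$.

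First I would check the endpoint conditions: $f(0) = 0$ and $f(1) = e - 3 < 0$. Then I would analyze $f$ via its derivative. Since $f'(x) = e^x - 2$, the unique critical point on $[0,1]$ is $x^* = \ln 2$, with $f$ decreasing on $[0, \ln 2]$ and increasing on $[\ln 2, 1]$. Thus the maximum of $f$ on $[0,1]$ is attained at one of the endpoints, and both satisfy $f \leq 0$, so $f(x) \leq 0$ throughout $(0,1]$, which is the required bound.

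An alternative, more direct route uses the Taylor expansion $\tfrac{e^x - 1}{x} = 1 + \tfrac{x}{2} + \tfrac{x^2}{6} + \cdots$, which is increasing in $x$ on $(0,\infty)$; its value at $x = 1$ is $e - 1 < 2$, so for $x \in (0,1]$ one has $e^x - 1 \leq (e-1) x < 2x$, from which the claim follows after dividing by $2$ and substituting back $x = 1/\sigma^2$. I would probably present this second route since it avoids any casework on $\ln 2$ and makes the $\sigma \geq 1$ hypothesis appear cleanly (the monotonicity argument crucially needs $x \leq 1$ so that the quotient is controlled by its value at $1$). There is no real obstacle here; the only care needed is to ensure the substitution is handled correctly and that the inequality is stated for all $\sigma \geq 1$ rather than just large $\sigma$, which is why the endpoint check $\sigma = 1$ (giving $\tfrac{1}{2}(e - 1) \leq 1$) is the tightest case.
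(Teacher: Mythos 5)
Your proposal is correct, and your second route (substituting $x = 1/\sigma^2$, then using the Taylor expansion to see that $(e^x - 1)/x$ is increasing and bounded by $e - 1 < 2$ on $(0,1]$) is essentially the paper's own approach — the paper simply states "use the Taylor expansion of the exponential function" without spelling out the details. Your first route via the derivative of $f(x) = e^x - 1 - 2x$ is a valid alternative but slightly more work than necessary.
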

\begin{proof}
    It's easy to show the inequality using the Taylor expansion of the exponential function.
\end{proof}

\subsection{Moments Bound for Truncated Gaussian Noise} \label{sec:moment_bound_truncated_gaussian}
In this subsection, we first introduced the relevant definition of truncated Gaussian and defined some variables. Then, we proceed to the proof of Lemma~\ref{lem:je_alpha_bound_truncated} about moment bound with truncated Gaussian noise.

Firstly, we introduce the truncated Gaussian we used to ensure the privacy of Algorithm~\ref{alg:john_ellipsoid}.

\begin{definition} [Truncated Gaussian]
Given a random variable $z_i$ with truncated Gaussian distribution $ \mathcal{N}^T(\mu, \sigma^2, [-0.5,0.5])$, its probability density function is defined as
\begin{align*}
    g(z_i) = \frac{1}{\sigma} \frac{\phi(\frac{z_i-\mu}{\sigma})}{\Phi(\frac{0.5-\mu}{\sigma})-\Phi(\frac{-0.5-\mu}{\sigma})} \text{ for } z_i \in [-0.5, 0.5]
\end{align*}
where $\phi$ and $\Phi$ are pdf and cdf of the standard Gaussian.

We also define constants $C_\sigma, C_{\beta, \sigma}, k$ to simplify the pdf of $\mathcal{N}^T(0, \sigma^2, [-0.5,0.5])$ and $\mathcal{N}^T(\beta, \sigma^2, [-0.5,0.5])$
\begin{align*}
C_\sigma := & ~ \Phi(0.5 / \sigma) - \Phi(-0.5 / \sigma) \\
C_{\beta, \sigma} := & ~ \Phi( (0.5 - \beta) / \sigma) - \Phi( (-0.5 - \beta) / \sigma) \\
\gamma_{\beta, \sigma} := & ~ \frac{C_\sigma}{C_{\beta, \sigma}}
\end{align*}

\end{definition}

Here, we give the definition of truncated Gaussian noise in vector form.
\begin{definition} [Truncated Gaussian vector] \label{def:truncated_noise_z_vector}
We define the truncated Gaussian noise vector $z = (z_1, z_2, \cdots, z_n)$, where each $z_i$ follows Definition~\ref{def:truncated_noise_scalar_z}
\end{definition}

Now, we'll derive the lower and upper bounds of $\gamma_{\beta,\sigma}$ for the purpose of differential privacy proof of Algorithm~\ref{alg:john_ellipsoid}.

\begin{lemma} \label{lem:up_gamma}
Let $\gamma_{\beta,\sigma}$ be the one introduced in Definition~\ref{def:truncated_noise_scalar_z}. Given $\sigma \geq \beta$, the following bound of $\gamma_{\beta,\sigma}$ holds
\begin{align*}
1 \le \gamma_{\beta, \sigma} \le \frac{1}{1-2\beta}
\end{align*}
\end{lemma}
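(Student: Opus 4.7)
The plan is to establish the two bounds separately using only elementary properties of the standard Gaussian pdf $\phi$, namely symmetry $\phi(-x) = \phi(x)$ and the fact that $\phi(x)$ is decreasing in $|x|$. Throughout, I will let $p(v) := \frac{1}{\sigma}\phi(v/\sigma)$ denote the $\mathcal{N}(0,\sigma^2)$ density, so that $C_\sigma = \int_{-0.5}^{0.5} p(v)\,\mathrm{d}v$, and by the substitution $u = v - \beta$ in the definition of $C_{\beta,\sigma}$, we have $C_{\beta,\sigma} = \int_{-0.5-\beta}^{0.5-\beta} p(v)\,\mathrm{d}v$. Since $\gamma_{\beta,\sigma} = C_\sigma/C_{\beta,\sigma}$, the two bounds are equivalent to $(1-2\beta) C_\sigma \leq C_{\beta,\sigma} \leq C_\sigma$.

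For the lower bound $\gamma_{\beta,\sigma} \geq 1$, I will show $C_{\beta,\sigma} \leq C_\sigma$. Decomposing
\begin{align*}
C_\sigma - C_{\beta,\sigma} = \int_{0.5-\beta}^{0.5} p(v)\,\mathrm{d}v - \int_{-0.5-\beta}^{-0.5} p(v)\,\mathrm{d}v,
\end{align*}
and rewriting the second integral via symmetry of $p$ as $\int_{0.5}^{0.5+\beta} p(v)\,\mathrm{d}v$, the substitution $v = 0.5 \mp t$ yields $C_\sigma - C_{\beta,\sigma} = \int_0^\beta [p(0.5-t) - p(0.5+t)]\,\mathrm{d}t$. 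Each integrand is non-negative because $|0.5-t| \leq |0.5+t|$ for $t \in [0,\beta]$, so the difference is non-negative.

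For the upper bound $\gamma_{\beta,\sigma} \leq 1/(1-2\beta)$, equivalently $C_{\beta,\sigma} \geq (1-2\beta) C_\sigma$, the plan is a containment-plus-scaling argument. Since $\beta \geq 0$, the inclusion $[-0.5+\beta,\, 0.5-\beta] \subseteq [-0.5-\beta,\, 0.5-\beta]$ gives
\begin{align*}
C_{\beta,\sigma} \geq \int_{-0.5+\beta}^{0.5-\beta} p(v)\,\mathrm{d}v.
\end{align*}
Substituting $v = (1-2\beta) u$, which sends $v = \pm(0.5-\beta)$ to $u = \pm 0.5$, converts this to $(1-2\beta)\int_{-0.5}^{0.5} p((1-2\beta)u)\,\mathrm{d}u$. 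Because $|(1-2\beta) u| \leq |u|$ for $u \in [-0.5, 0.5]$, monotonicity of $\phi$ gives $p((1-2\beta)u) \geq p(u)$ pointwise, so $\int_{-0.5}^{0.5} p((1-2\beta)u)\,\mathrm{d}u \geq C_\sigma$. Chaining the inequalities gives the desired bound.

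There is essentially no hard step; the only mild care needed is verifying that the endpoints align under the rescaling (they do, exactly because $-0.5+\beta$ and $0.5-\beta$ are the two points whose ratio with $1-2\beta$ has absolute value $0.5$) and that $\beta < 0.5$ so that $1-2\beta > 0$. The hypothesis $\sigma \geq \beta$ does not appear to be used here; it is presumably needed only for subsequent results such as the moment bound in Lemma~\ref{lem:je_alpha_bound_truncated_informal}.
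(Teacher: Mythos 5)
Your proof is correct, and it takes a genuinely different and in fact tighter route than the paper. The paper's own argument rewrites $C_\sigma$ and $C_{\beta,\sigma}$ in terms of $\mathsf{erf}$ and then (i) asserts the lower bound $\gamma_{\beta,\sigma}\ge 1$ by a limiting remark as $\beta\to 0$ rather than an inequality, and (ii) gets the upper bound by bounding the denominator using the \emph{linear approximation} $\mathsf{erf}(x)\approx \tfrac{2x}{\sqrt{\pi}}$, which is an approximation, not an inequality, and is only reasonable when the arguments $0.5/\sigma$ and $(0.5\pm\beta)/\sigma$ are small; a fully rigorous version would have to invoke concavity of $\mathsf{erf}$ on $[0,\infty)$ to turn the ratio approximation into a one-sided bound. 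Your argument instead works directly with the density $p$: for the lower bound you identify the symmetric-difference set $[0.5-\beta,0.5]\cup[-0.5-\beta,-0.5]$, fold it by symmetry of $p$, and use monotonicity of $p$ in $|v|$; for the upper bound you shrink the integration interval to $[-0.5+\beta,0.5-\beta]$ and then rescale by $1-2\beta$, again using monotonicity. Both steps are genuine inequalities, require only $0\le\beta<1/2$, and do not need $\sigma\ge\beta$ or any small-argument asymptotics. This is cleaner and strictly more general than the paper's argument; the only thing worth flagging in a writeup is the implicit assumption $\beta\in[0,1/2)$, which the paper also needs (it is guaranteed in the application since $\beta=L\epsilon_0$ is taken small).
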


\begin{proof}
According to the definition of $\gamma_{\beta,\sigma}$, we have
\begin{align*}
\gamma_{\beta,\sigma}
= & ~ \frac{\Phi(0.5/\sigma) - \Phi(-0.5/\sigma)}{\Phi((0.5-\beta) /\sigma) - \Phi((-0.5-\beta)/\sigma)} \\
\end{align*}
Firstly, we consider the numerator, 
\begin{align*}
\Phi(0.5/\sigma) - \Phi(-0.5/\sigma)
= & ~ \frac{1}{2} (1 + \mathsf{erf}(\frac{0.5/\sigma}{\sqrt{2}})) - \frac{1}{2} (1 + \mathsf{erf}(-\frac{0.5/\sigma}{\sqrt{2}})) \\
= & ~ \frac{1}{2} (1 + \mathsf{erf}(\frac{0.5/\sigma}{\sqrt{2}})) - \frac{1}{2} (1 - \mathsf{erf}(\frac{0.5/\sigma}{\sqrt{2}})) \\
= & ~ \frac{1}{2} (\mathsf{erf}(\frac{0.5/\sigma}{\sqrt{2}}) + \mathsf{erf}(\frac{0.5/\sigma}{\sqrt{2}}))\\
= & ~ \mathsf{erf}(\frac{0.5/\sigma}{\sqrt{2}})
\end{align*}
where the first step comes from the definition of cdf for standard normal distribution, the second step comes from the property $\mathsf{erf}(-x) = - \mathsf{erf}(x)$, the third step utilizes basic algebra, and the last is by simplification.

We simplify the denominator similarly,
\begin{align*}
\Phi((0.5-\beta)/\sigma) - \Phi((-0.5-\beta)/\sigma)
= & ~ \frac{1}{2} (1 + \mathsf{erf}(\frac{(0.5-\beta)/\sigma}{\sqrt{2}})) - \frac{1}{2} (1 + \mathsf{erf}(\frac{(-0.5-\beta)/\sigma}{\sqrt{2}})) \\
= & ~ \frac{1}{2} (1 - \mathsf{erf}(\frac{(0.5-\beta)/\sigma}{\sqrt{2}})) - \frac{1}{2} (1 - \mathsf{erf}(\frac{(0.5+\beta)/\sigma}{\sqrt{2}})) \\
= & ~ \frac{1}{2} (\mathsf{erf}(\frac{(0.5-\beta)/\sigma}{\sqrt{2}}) + \mathsf{erf}(\frac{(0.5+\beta)/\sigma}{\sqrt{2}}))
\end{align*}
where the first step comes from the definition of cdf for standard normal distribution, the second step comes from the property $\mathsf{erf}(-x) = - \mathsf{erf}(x)$, and the third step utilizes basic algebra.

Combine them together, we have
\begin{align*}
\gamma_{\beta,\sigma}
= \frac{2 \mathsf{erf}(\frac{0.5/\sigma}{\sqrt{2}})}
{\mathsf{erf}(\frac{(0.5-\beta) /\sigma}{\sqrt{2}}) + \mathsf{erf}(\frac{(0.5+\beta)/\sigma}{\sqrt{2}})}
\end{align*}
For the lower bound of $\gamma_{\beta,\sigma}$, we observe that as $\beta$ approaches to 0, $\gamma_{\beta,\sigma} \geq 1$.

We can also derive the upper bound of $\gamma_{\beta,\sigma}$,
\begin{align*}
\gamma_{\beta,\sigma} 
\leq & ~ \frac{\mathsf{erf}(\frac{0.5/\sigma}{\sqrt{2}})}
{\mathsf{erf}(\frac{(0.5+\beta)/\sigma}{\sqrt{2}})} \\
\leq & ~ \frac{1}{1-2\beta}
\end{align*}
where the first step comes from basic algebra, and the second step uses $\mathsf{erf(x) \approx \frac{2}{\sqrt{\pi} x}}$.

\end{proof}

Here, we show our lemma on bounding the moment of privacy loss with our choice of truncated Gaussian noise.

\begin{lemma}[Bound of $\alpha(\lambda)$ in sequential mechanism, formal version of Lemma~\ref{lem:je_alpha_bound_truncated_informal}] \label{lem:je_alpha_bound_truncated}
Let $D, D' \in \mathcal{D}$ be $\eta$-close neighborhood polytope in Definition~\ref{def:beta_neighbor}. Suppose that $f : \mathcal{D} \to \mathbb{R}^n$ with $\|f(D) - f(D')\|_2 \leq \beta$. Let $z \in \R^n$ be a truncated Gaussian noise vector in Definition~\ref{def:truncated_noise_z_vector}. Let $\sigma = \min \sigma_i$ and $\sigma \geq \beta$. 

Then for any positive integer $\lambda \leq 1 / 4\gamma_{\beta, \sigma}$, there exists $C_0 > 0$ such that the mechanism $\mathcal{M}(d) = f(d) + z$ satisfies
\begin{align*}
\alpha_{\mathcal{M}}(\lambda) \leq \frac{C_0 \lambda(\lambda+1)\beta^2 \gamma_{\beta, \sigma}^2}{\sigma^2} + O(\beta^3 \lambda^3 \gamma_{\beta, \sigma}^3/ \sigma^3)
\end{align*}
\end{lemma}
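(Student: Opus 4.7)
The plan is to adapt the moments accountant technique to the truncated Gaussian setting. By independence of the coordinates of $z$ and factorization of the density, the log moment $\alpha_\mathcal{M}(\lambda)$ decomposes as a sum over coordinates of one-dimensional log moments, where coordinate $i$ sees a shift $\beta_i := (f(D) - f(D'))_i$ with $\sum_i \beta_i^2 \leq \beta^2$. It therefore suffices to establish the bound for the worst case where all of the $\ell_2$ budget is concentrated in a single coordinate, reducing the problem to bounding $\log \E_{z \sim \mu_0}[(\mu_1(z)/\mu_0(z))^{\lambda+1}]$ for 1D truncated Gaussian densities $\mu_0, \mu_1$ centered at $0$ and $\beta$.

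Next, I would expand using the standard identity
\begin{align*}
\E_{z \sim \mu_1}\left[\left(\frac{\mu_1(z)}{\mu_0(z)}\right)^\lambda\right] = \sum_{t=0}^{\lambda+1} \binom{\lambda+1}{t} \E_{z \sim \mu_0}\left[\left(\frac{\mu_1(z) - \mu_0(z)}{\mu_0(z)}\right)^t\right].
\end{align*}
The $t=0$ contribution equals $1$ and the $t=1$ contribution vanishes because both $\mu_0$ and $\mu_1$ are probability densities. For $t=2$, I would write the ratio explicitly as $\mu_1(z)/\mu_0(z) = \gamma_{\beta,\sigma}^{-1} \exp(\beta z/\sigma^2 - \beta^2/(2\sigma^2))$ on the support, and compute the second moment using Lemma~\ref{lem:exp_gaussian_expectation} together with the bound $\gamma_{\beta,\sigma} \leq (1-2\beta)^{-1}$ from Lemma~\ref{lem:up_gamma}. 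This produces the leading $C_0 \lambda(\lambda+1)\beta^2 \gamma_{\beta,\sigma}^2 / \sigma^2$ term.

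For $t \geq 3$, I would carry out the three-case decomposition $\{z \leq 0\}$, $\{0 \leq z \leq \beta\}$, $\{\beta \leq z \leq 0.5\}$ suggested by the informal sketch. Within each region, monotonicity of the Gaussian factor on a sub-Gaussian tail combined with the truncation window $[-0.5, 0.5]$ yields a pointwise bound $|\mu_1(z)/\mu_0(z) - 1| \leq O(\beta \gamma_{\beta,\sigma}/\sigma)$ up to a controlled exponential factor, and Fact~\ref{fact:exp_sigma_bound} together with Lemma~\ref{lem:exp_gaussian_expectation} then gives an $O((\beta \gamma_{\beta,\sigma}/\sigma)^t)$ bound on the $t$-th centered moment. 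Under the hypothesis $\lambda \leq \gamma_{\beta,\sigma}/4$ and $\sigma \geq \beta$, the quantity $\lambda \beta \gamma_{\beta,\sigma}/\sigma$ is at most a small constant, so the binomial tail over $t \geq 4$ is geometrically dominated by the $t=3$ term, producing the remainder $O(\beta^3 \lambda^3 \gamma_{\beta,\sigma}^3/\sigma^3)$. Finally, taking $\log$ and applying $\log(1+x) \leq x$ converts the moment bound into the stated bound on $\alpha_\mathcal{M}(\lambda)$.

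The main obstacle will be the boundary effects that truncation introduces into the higher-order estimates. Unlike the untruncated Gaussian case treated in \cite{acg+16}, the density ratio $\mu_1/\mu_0$ is not a pure exponential because of the extra normalizing factor $\gamma_{\beta,\sigma}$, and its support is asymmetric around the shifted mean $\beta$. Controlling the $t \geq 3$ moments thus requires a careful region-by-region argument that tracks how mass near the endpoints of $[-0.5, 0.5]$ contributes to each power of the ratio; this is where the explicit form of $C_\sigma$, $C_{\beta,\sigma}$ and the sharp upper bound on $\gamma_{\beta,\sigma}$ from Lemma~\ref{lem:up_gamma} are essential.
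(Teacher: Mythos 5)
Your proposal tracks the paper's own proof essentially step for step: reduction to a single shifted coordinate, the change-of-variable identity turning $\E_{\mu_1}[(\mu_1/\mu_0)^{\lambda}]$ into $\E_{\mu_0}[(\mu_1/\mu_0)^{\lambda+1}]$, binomial expansion with the $t=0,1$ terms collapsing to $1$, the $t=2$ term evaluated via the Gaussian moment generating function (Lemma~\ref{lem:exp_gaussian_expectation}) and bounded using $\gamma_{\beta,\sigma} \le (1-2\beta)^{-1}$, and the three-region decomposition $[-0.5,0]$, $[0,\beta]$, $[\beta,0.5]$ for $t\ge 3$ with higher-order terms dominated by the $t=3$ contribution. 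This is the same approach as the paper, and your flag about the extra normalizing factor $\gamma_{\beta,\sigma}$ breaking the pure-exponential structure of the density ratio is exactly the complication the paper's region-by-region estimates are designed to handle.
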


\begin{proof}
Since $D, D'$ are neighborhood dataset, we can
fix $D'$ and let $D = D' \cup \{D_i\}_{i\in [n]}$. Without loss of generality,
we can assume $f(D_n) = \beta$ and for any $i \in [n-1]$, $f(D_i) = 0$. Thus, $\mathcal{M}(D)$ and $\mathcal{M}(D')$ have identical distributions other than the last coordinate. Then, we reduce it to a problem of one dimension.

Let $\mu_0(z_n)$ denote the probability density function of $\mathcal{N}^T(0, \sigma^2, [-0.5,0.5])$

And let $\mu_1(z_n)$ denote the probability density function of $\mathcal{N}^T(\beta, \sigma^2, [-0.5,0.5])$

Thus we have,
\begin{align*}
\mathcal{M}(D') \sim & ~ \mu_0(z_n),\\
\mathcal{M}(D) \sim & ~ \mu_1(z_n)
\end{align*}

Recall Definition~\ref{def:alpha}, 
\begin{align*}
\alpha_\mathcal{M}(\lambda; \mathsf{aux}, D, D') = \log \E_{o\sim\mathcal{M}(\mathsf{aux}, D)} [\exp \lambda c(o; \mathcal{M}, \mathsf{aux}, D, D')]
\end{align*}
And recall Definition~\ref{def:c},
\begin{align*}
c(o; \mathcal{M}, D, D') = \log \frac{\Pr[\mathcal{M}(D) = o]}{\Pr[\mathcal{M}(D') = o]}
\end{align*}
We omit $\mathsf{aux}$ in Definition~\ref{def:c} because $\mathcal{M}$ here does not involve any auxiliary input.

Substitute $\mu_0$ and $\mu_1$ into $c(o; \mathcal{M}, D, D')$, we get
\begin{align}
c(o; \mu_0, \mu_1, D, D') =  \log \frac{\Pr[\mu_1 = o]}{\Pr[\mu_0 = o]} 
\label{eq:c_mu_new}
\end{align}
Plug Eq.~\eqref{eq:c_mu_new} into Definition~\ref{def:alpha}, we get
\begin{align}
\alpha_\mathcal{M}(\lambda; D, D') = \log \E_{o\sim\mathcal{M}(d)} [\exp (\lambda \log \frac{\Pr[\mu_1 = o]}{\Pr[\mu_0 = o]})] \label{eq:alpha_mu_new}
\end{align}

Thus,
\begin{align*}
\alpha_\mathcal{M}(\lambda;D, D') = & ~ \log \E_{z_n \sim \mu_1}[(\mu_1(z_n) / \mu_0(z_n))^\lambda] \\
\leq & ~ \E_{z_n \sim \mu_1}[(\mu_1(z_n) / \mu_0(z_n))^\lambda]
\end{align*}
where the first step follows from simplifying Eq.~\eqref{eq:alpha_mu_new}, the second step uses the property of logarithm.

We want to show that
\begin{align*}
\E_{z_n \sim \mu_1} [(\mu_1(z_n) / \mu_0(z_n))^\lambda] \leq & ~ \alpha_\mathcal{M}(\lambda) \\
\text{and } \E_{z_n \sim \mu_0}[ (\mu_0(z_n) / \mu_1(z_n))^\lambda ] \leq & ~ \alpha_\mathcal{M}(\lambda)
\end{align*}
for some explicit $\alpha_\mathcal{M}(\lambda)$ to be determined later.

Since in our setting, both $\mu_0(z_n)$ and $\mu_1(z_n)$ are Gaussian variables, we only need to bound one of them by symmetry of Gaussian.

We consider
\begin{align*}
\E_{z_n \sim \mu_1} [(\mu_1(z_n) / \mu_0(z_n))^\lambda] = \E_{z_n \sim \mu_0} [(\mu_1(z_n) / \mu_0(z_n))^{\lambda + 1}].
\end{align*}
The above equality is obtained by the change of variable method in probability theory.

Using binomial expansion, we have
\begin{align}
\E_{z_n \sim \mu_0} [\mu_1(z_n) / \mu_0(z_n) )^{\lambda+1}] 
= & ~ \E_{z_n \sim \mu_0} [(1 +(\mu_1(z_n) - \mu_0(z_n)) / \mu_0(z_n)
)^{\lambda+1}] \notag \\
= & ~ \sum_{t=0}^{\lambda+1} \binom{\lambda+1}{t} \E_{z_n \sim \mu_0}[ (\frac{\mu_1(z_n) - \mu_0(z_n)}{\mu_0(z_n)})^t].\label{eq:binomial_new}
\end{align}
where the first step utilizes basic algebra, and the second step is by binomial expansion.

The first term in Eq.~\eqref{eq:binomial_new} is 1 by simple algebra, and the second term is
\begin{align*}
\E_{z_n \sim \mu_0} [\frac{\mu_1(z_n) - \mu_0(z_n)}{\mu_0(z_n)}] 
= & ~ \int_{-0.5}^{0.5} \mu_0(z_n) \frac{\mu_1(z_n) - \mu_0(z_n)}{\mu_0(z_n)} \, dz_n \\
= & ~ \int_{-0.5}^{0.5} \mu_1(z_n) \, dz_n - \int_{-0.5}^{0.5} \mu_0(z_n) \, dz_n \\
= & ~ 1 - 1 = 0.
\end{align*}
where the first step is from the definition of expectation, the second step stems from basic algebra, and the last step utilizes the property of probability density function.

Recall Lemma~\ref{lem:exp_gaussian_expectation}, for any $a \in \mathbb{R}$, $\E_{z \sim \mu_0} \exp(2az/2\sigma^2) = \exp (a^2/2\sigma^2)$, thus 
\begin{align}
\E_{z_n \sim \mu_0} [ ( \frac{\mu_1(z_n) - \mu_0(z_n)}{\mu_0(z_n)} )^2 ]
= & ~ \E_{z_n \sim \mu_0} [( 1 - \gamma_{\beta, \sigma} \cdot \exp(\frac{2z_n\beta}{2\sigma^2} - \frac{\beta^2}{2\sigma^2}))^2] \notag \\
= & ~  1 - 2\gamma_{\beta, \sigma} \E_{z_n \sim \mu_0} [ \exp(\frac{2z_n\beta}{2\sigma^2} - \frac{\beta^2}{2\sigma^2}) ]
+ \E_{z_n \sim \mu_0} [\gamma_{\beta, \sigma}^2 \exp(\frac{4z_n\beta}{2\sigma^2} - \frac{2\beta^2}{2\sigma^2}) ] \notag \\
= & ~  1 - 2\gamma_{\beta, \sigma} \exp(\frac{\beta^2}{2\sigma^2}) \cdot \exp( \frac{-\beta^2}{2\sigma^2}) ) + \gamma_{\beta, \sigma}^2 \exp( \frac{4\beta^2}{2\sigma^2}) \cdot \exp(\frac{-2\beta^2}{2\sigma^2}) \notag \\
= & ~ \gamma_{\beta, \sigma}^2 \exp(\frac{\beta^2}{\sigma^2}) + 1 - 2\gamma_{\beta, \sigma} \label{eq:exp_mu0_mu1_bound}
\end{align}
where the first step comes from substituting the density function of $\mu_0$ and $\mu_1$, the second step follows from expanding the square, the third step comes from Lemma~\ref{lem:exp_gaussian_expectation}, and the final step comes from basic algebra.

Thus, the third term in the binomial expansion Eq.~\eqref{eq:binomial_new}

\begin{align*}
\binom{1 + \lambda}{2} \E_{z_n \sim \mu_0} [(\frac{\mu_1(z_n) - \mu_0(z_n)}{\mu_0(z_n)})^2 ]
\leq & ~ \frac{\lambda(\lambda+1)}{2} \E_{z_n \sim \mu_0} [(\frac{\mu_1(z_n) - \mu_0(z_n)}{\mu_0(z_n)})^2] \\
= & ~ \frac{\lambda(\lambda+1)}{2} (\gamma_{\beta, \sigma}^2 \exp(\frac{\beta^2}{\sigma^2}) + 1 - 2\gamma_{\beta, \sigma}) \\
= & ~ \frac{\lambda(\lambda+1)}{2} (\gamma_{\beta, \sigma}^2 \exp(\frac{\beta^2}{\sigma^2}) + (\gamma_{\beta, \sigma}-1)^2 - \gamma_{\beta, \sigma}^2) \\
\leq & ~ \frac{\lambda(\lambda+1)}{2} (\gamma_{\beta, \sigma}^2 (\exp(\frac{\beta^2}{\sigma^2}) - 1) + (\gamma_{\beta, \sigma}-1)^2) \\
\leq & ~ \frac{\lambda(\lambda+1)\beta^2 \gamma_{\beta, \sigma}^2}{\sigma^2} + \frac{\lambda (\lambda + 1) (\gamma_{\beta, \sigma} - 1)^2 }{2}
\end{align*}
where the first step utilizes the definition of combination, the second step is the result of Eq.~\eqref{eq:exp_mu0_mu1_bound}, the third step is by basic algebra, the fourth step comes from combining like terms, and the final step follows from basic algebra. 

By our choice of $\sigma \geq \beta$, there exists $C_0 > 0$ to bound the third term in Eq.~\eqref{eq:binomial_new}
\begin{align*}
\binom{1 + \lambda}{2} \E_{z_n \sim \mu_0} [(\frac{\mu_1(z_n) - \mu_0(z_n)}{\mu_0(z_n)})^2 ]
\leq \frac{C_0 \lambda(\lambda+1)\beta^2 \gamma_{\beta, \sigma}^2}{\sigma^2}
\end{align*}

To bound the remaining terms, we first note that by standard calculus, we can bound $|\mu_0(z_n) - \mu_1(z_n)|$ by separating it into 3 parts.

Firstly, we have the following bound for all $z_n \leq 0$.

\begin{align}
|\mu_0(z_n) - \mu_1(z_n)|
= & ~ \frac{1}{\sigma} |\frac{\phi(z_n/\sigma)}{C_\sigma} - \frac{\phi((z_n-\beta) /\sigma)}{C_{\beta, \sigma}}| \notag \\
= & ~ \frac{1}{\sigma} |\frac{1}{C_\sigma} \cdot \frac{1}{\sqrt{2\pi}\sigma} \exp(-\frac{z_n^2}{2\sigma^2}) - \frac{1}{C_{\beta, \sigma}} \cdot \frac{1}{\sqrt{2\pi}\sigma} \exp(-\frac{(z-\beta)^2}{2\sigma^2})| \notag \\
= & ~ \frac{1}{\sqrt{2\pi}\sigma^2} |\frac{\exp(-z_n^2 / 2\sigma^2)}{C_\sigma} - \frac{\exp(-(z_n-\beta)^2 / 2\sigma^2)}{C_{\beta, \sigma}}| \notag \\
= & ~ \frac{1}{\sqrt{2\pi}\sigma^2} |\frac{\exp(-z^2 / 2\sigma^2)}{C_\sigma} - \frac{\exp(- (z^2-2\beta z+\beta^2) / 2\sigma^2)}{C_{\beta, \sigma}}| \notag \\
= & ~ \frac{1}{\sqrt{2\pi}\sigma^2} \exp(-\frac{z_n^2}{2\sigma^2}) |\frac{1}{C_\sigma} - \frac{1}{C_{\beta, \sigma}} \cdot \exp (\frac{2\beta z_n - \beta^2}{2\sigma})| \notag \\
= & ~ \frac{1}{C_\sigma} \frac{1}{\sqrt{2\pi}\sigma^2} \exp(-\frac{z_n^2}{2\sigma^2}) |1 - \frac{C_\sigma}{C_{\beta, \sigma}} \cdot \exp (\frac{2\beta z_n - \beta^2}{2\sigma})| \notag \\
= & ~ \mu_0(z_n) |1-\frac{C_\sigma}{C_{\beta, \sigma}} \cdot \exp (\frac{2\beta z_n - \beta^2}{2\sigma})| \notag \\
\leq & ~ \mu_0(z_n) \frac{C_\sigma}{C_{\beta, \sigma}} \cdot \frac{-2\beta z_n + \beta^2}{2\sigma^2} \notag \\
\leq & ~  \frac{-C_\sigma(\beta z_n - \beta^2)\mu_0(z_n)} {C_{\beta, \sigma}\sigma^2}
\label{eq:truncated_abs_bound_leq_0}
\end{align}
where the first step comes from the definition of $\mu_0, \mu_1$, the second step is from the probability density function of truncated Gaussian, the third step stems from basic algebra, the fourth step is from expanding the square, the fifth step uses basic algebra, the sixth step is from basic algebra, the seventh step derives from the definition of $\mu_0$, the eighth step follows by Taylor series, and the final step derives from basic algebra.

Similarly, for all $z_n$ such that $\beta \leq z_n \leq 0.5$, we have
\begin{align}
|\mu_0(z_n) - \mu_1(z_n)|
= & ~ \frac{1}{\sqrt{2\pi}\sigma^2} (\frac{1}{C_{\beta, \sigma}} \exp(-\frac{(z_n-\beta)^2}{2\sigma^2}) - \frac{1}{C_\sigma} \exp(-\frac{z_n^2}{2\sigma^2})) \notag \\
= & ~ \frac{1}{\sqrt{2\pi}\sigma^2} (\frac{1}{C_{\beta, \sigma}} \exp(-\frac{(z_n-\beta)^2}{2\sigma^2}) - \frac{1}{C_\sigma} \exp(-\frac{(z_n-\beta)^2 + 2z_n\beta -\beta^2}{2\sigma^2})) \notag \\
= & ~ \frac{1}{C_{\beta, \sigma}} \frac{1}{\sqrt{2\pi}\sigma^2} \exp(\frac{-(z_n-\beta)^2}{2\sigma^2}) (1 - \frac{C_{\beta, \sigma}}{C_\sigma} \exp(-\frac{2z_n\beta - \beta^2}{2\sigma^2})) \notag \\
= & ~ \mu_1(z_n) (1 - \frac{C_{\beta, \sigma}}{C_\sigma} \exp(-\frac{2z_n\beta - \beta^2}{2\sigma^2})) \notag \\
\leq & ~ \mu_1(z_n) \frac{2z_n\beta - \beta^2}{2\sigma^2} \frac{C_{\beta, \sigma}}{C_\sigma} \notag \\
\leq & ~ \frac{\beta(z_n-\beta) \mu_1(z_n)} {\sigma^2} \frac{C_{\beta, \sigma}}{C_\sigma} \notag \\
\leq & ~ \frac{C_{\beta, \sigma}z_n\beta \mu_1(z_n)} {C_\sigma\sigma^2} \label{eq:truncated_abs_bound_geq_beta}
\end{align}
where the first step derives from the definition of $\mu_0, \mu_1$, the second step is from the probability density function of truncated Gaussian, the third step uses basic algebra, the fourth step comes from the definition of $\mu_1$, the fifth step is the result of Taylor series, the sixth step follows by basic algebra, and the final step comes from reorganization.

For all $z_n$ such that $0 \leq z_n \leq \beta$, we have 
\begin{align}
|\mu_0(z_n) - \mu_1(z_n)| \leq & ~ \frac{C_\sigma \beta^2 \mu_0(z_n)} {C_{\beta, \sigma} \sigma^2}
\label{eq:truncated_abs_bound_0_to_beta}
\end{align}

where we derive the bound by pluging $z_n=0$ in Eq.~\eqref{eq:truncated_abs_bound_leq_0}.

We can then divide the expectation into three parts and bound them individually,
\begin{align*}
\E_{z_n \sim \mu_0} [(\frac{\mu_1(z_n) - \mu_0(z_n)}{\mu_0(z_n)})^t] 
\leq & ~ \int_{-0.5}^{0} \mu_0(z_n) | ( \frac{\mu_1(z_n) - \mu_0(z_n)}{\mu_0(z_n)} )^t | dz_n \notag \\
& + ~ \int_{0}^{\beta} \mu_0(z_n) |(\frac{\mu_1(z_n) - \mu_0(z_n)}{\mu_0(z_n)} )^t| dz_n \notag \\
& + ~ \int_{\beta}^{0.5} \mu_0(z_n) |(\frac{\mu_1(z_n) - \mu_0(z_n)}{\mu_0(z_n)})^t| dz. 
\end{align*}

Notice the fact that $\E_{z_n \sim \mathcal{N}(0, \sigma^2)}[|z_n|^t] \leq \sigma^t(t-1)!!$ by Gaussian moments. Therefore, the first term can then be bounded by 
\begin{align*}
\int_{-0.5}^{0} \mu_0(z_n) | ( \frac{\mu_1(z_n) - \mu_0(z_n)}{\mu_0(z_n)} )^t | dz_n
\leq & ~ \int_{-0.5}^{0} \mu_0(z_n) | 
(\frac{-C_\sigma(\beta z_n-\beta^2)}{C_{\beta, \sigma}\sigma^2})^t | dz \\
\leq & ~ \frac{C_\sigma^t \beta^t}{C_{\beta, \sigma}^t \sigma^{2t}} \int_{-0.5}^{0} \mu_0(z_n) |(z_n - \beta)^t| dz_n \\
= & ~ \frac{C_\sigma^{t-1} \beta^t}{C_{\beta, \sigma}^{t} \sigma^{2t}} \int_{-0.5}^{0} \phi(z_n/\sigma) |(z_n - \beta)^t| dz \\
\leq & ~ \frac{(2\beta)^t C_\sigma^{t-1} (t-1)!!}{2C_{\beta, \sigma}^{t} \sigma^{t}}
\end{align*}
where the first step is the result of Eq.~\eqref{eq:truncated_abs_bound_leq_0}, the second step is from factoring out constants, the third step follows by the definition of $\mu_0$ and $C_\sigma$, and the final step comes from Gaussian moments bound. 

The second term is at most
\begin{align*}
\int_0^{\beta} \mu_0(z_n) |(\frac{\mu_1(z_n) - \mu_0(z_n)}{\mu_0(z_n)})^t| dz_n
\leq & ~ \int_0^{\beta} \mu_0(z_n) |(\frac{C_\sigma \beta^2} {C_{\beta, \sigma} \sigma^2})^t| dz_n \\
= & ~ \frac{C_\sigma^t \beta^{2t}}{C_{\beta, \sigma}^t \sigma^{2t}} \int_0^{\beta} \mu_0(z_n) dz_n \\
= & ~ \frac{C_\sigma^{t-1} \beta^{2t}}{C_{\beta, \sigma}^t \sigma^{2t}} \int_0^{\beta} \phi(z_n/\sigma) dz_n \\
\leq & ~ \frac{C_\sigma^{t-1} \beta^{2t}}{C_{\beta, \sigma}^t \sigma^{2t}} 
\end{align*}
where the first step derives from Eq.~\eqref{eq:truncated_abs_bound_0_to_beta}, the second step is from factoring out the constants, the third step is from the definition of $\mu_0$ and $C_\sigma$, and the final step stems from the property of probability density function.

Similarly, the third term is at most
\begin{align*}
\int_{\beta}^{0.5} \mu_0(z_n) |(\frac{\mu_1(z_n) - \mu_0(z_n)}{\mu_0(z_n)})^t| dz_n
\leq & ~ \int_{\beta}^{0.5} \mu_0(z_n) |(\frac{C_{\beta, \sigma} z_n\beta \mu_1(z_n)} {C_\sigma \sigma^2 \mu_0(z_n)})^t| dz_n \\
= & ~ \frac{C_{\beta, \sigma}^t \beta^t}{C_\sigma^t \sigma^{2t}} \int_{\beta}^{0.5} \mu_0(z_n) |(\frac{z_n\mu_1(z_n)} {\mu_0(z_n)})^t| dz_n \\
= & ~ \frac{C_{\beta, \sigma}^t \beta^t}{C_\sigma^t \sigma^{2t}} \int_{\beta}^{0.5} (\frac{C_\sigma}{C_{\beta, \sigma}})^t \cdot \mu_0(z_n) \exp(\frac{2\beta tz_n - \beta^2 t}{2\sigma^2}) z_n^t dz_n \\
= & ~ \frac{\beta^t}{\sigma^{2t}} \int_{\beta}^{0.5} \mu_0(z_n) \exp(\frac{2\beta tz_n - \beta^2 t}{2\sigma^2}) z_n^t dz_n \\
= & ~ \frac{\beta^t}{C_\sigma \sigma^{2t}} \int_{\beta}^{0.5} \phi(z/\sigma) \exp(\frac{2\beta tz_n - \beta^2 t}{2\sigma^2}) z_n^t dz_n \\
\leq & ~ \frac{\beta^t \exp(\beta^2(t^2-t) / 2\sigma^2)}{C_\sigma \sigma^{2t}} \int_{0}^{0.5} \phi(\frac{z_n - \beta t} {\sigma}) z_n^t dz_n \\
\leq & ~ \frac{(2\beta)^t \exp(\beta^2(t^2-t) / 2\sigma^2) (\sigma^t(t-1)!! + (\beta t)^t)}{2 C_\sigma \sigma^{2t}}
\end{align*}
where the first step is the result of Eq.~\eqref{eq:truncated_abs_bound_geq_beta}, the second step derives from factoring out constants, the third step is from plugging the density function of $\mu_0$ and $\mu_1$ into the expression, the fourth step is by simplification, the fifth step derives from the definition of $\mu_0$ the sixth step comes from $U$-substitution in calculus, and the final step follows by Gaussian moment bound.

Finally, we can show that our choice of parameters is valid. By plugging above bounds into Eq.~\eqref{eq:binomial_new}, we observe that
with constraints on $\sigma, \beta, \lambda, \gamma_{\beta, \sigma}$ in the Lemma statement, it's obvious that higher-order terms with $t > 3$ will be dominated by the $t=3$ term. Therefore, we conclude that
\begin{align*}
\alpha_{\mathcal{M}}(\lambda) \leq \frac{C_0 \lambda(\lambda+1)\beta^2 \gamma_{\beta, \sigma}^2}{\sigma^2} + O(\beta^3 \lambda^3 \gamma_{\beta, \sigma}^3 / \sigma^3)
\end{align*}
\end{proof}

\subsection{Privacy of Fast John Ellipsoid Algorithm with Truncated Noise} \label{sec:je_dp_main_theorem_truncated}

By combining the above moment bounds in Lemma~\ref{lem:je_alpha_bound_truncated} and tail bounds in Theorem~\ref{thm:composition_bound}, we can proceed to derive our main theorem, which demonstrates that the John Ellipsoid algorithm is differentially private.

\begin{theorem}[John Ellipsoid DP Main Theorem, formal version of Theorem~\ref{thm:je_dp_proof_informal}]\label{thm:je_dp_proof}

Suppose the input polytope in Algorithm~\ref{alg:john_ellipsoid} represented by $A \in \R^{n \times d}$ satisfies $\sigma_{\max} (A) \leq \poly(n)$ and $\sigma_{\min}(A) \geq 1/\poly(n)$. Under our definition of $\epsilon_0$-close neighborhood polytope, described in Definition~\ref{def:beta_neighbor}, let $L$ be the Lipschitz of such polytope. Then there exists constants $c_1$ and $c_2$ so that given number of iterations $T$, for any $\epsilon \leq c_1 T L^2 \epsilon_0^2 (1-2L\epsilon_0)^{-1}$, Algorithm~\ref{alg:john_ellipsoid} is $(\epsilon, \delta)$-differentially private for any $\delta > 0$ if we choose
\begin{align*}
\sigma \geq c_2 \frac{L \epsilon_0 \sqrt{T \log(1/\delta)}}{(1-2L\epsilon_0)\epsilon}
\end{align*}
\end{theorem}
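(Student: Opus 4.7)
The plan is to stitch together three ingredients: the Lipschitz bound of Theorem~\ref{thm:leverage_pertub} on the $\ell_\infty$-Lewis-weight map, the single-step truncated-Gaussian moment bound of Lemma~\ref{lem:je_alpha_bound_truncated}, and the moments-accountant composition/tail bound referenced in Appendix~\ref{app:dp:composition_lemma}. I would first verify that at each iteration $k$, conditional on the shared past randomness (the sketch $S_k$, the sampling matrix $D_k$, and the iterate $w_k$, which together form $\mathsf{aux}$), the deterministic map $A \mapsto \overline{w}_{k+1}$ is $\ell_2$-sensitive only through the weighted leverage-score expression that Theorem~\ref{thm:leverage_pertub} controls. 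Since $\sigma_{\max}(A)\le\poly(n)$, $\sigma_{\min}(A)\ge 1/\poly(n)$, the Lipschitz constant $L$ is $\poly(n)$, so for $\epsilon_0$-close polytopes the sensitivity of iteration $k$ satisfies $\|\overline{w}_{k+1}(A)-\overline{w}_{k+1}(A')\|_2\le L\epsilon_0$.

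Next, I would invoke Lemma~\ref{lem:je_alpha_bound_truncated} with $\beta=L\epsilon_0$ and the truncated-Gaussian noise injected in Lines~\ref{alg:line:noise1}-\ref{alg:line:noise2}, obtaining, for any positive integer $\lambda\le \gamma_{L\epsilon_0,\sigma}/4$,
\begin{align*}
\alpha_{\mathcal{M}_k}(\lambda)\;\le\;\frac{C_0\,\lambda(\lambda+1)\,L^2\epsilon_0^{2}\,\gamma_{L\epsilon_0,\sigma}^{2}}{\sigma^{2}}\;+\;O\!\left(\frac{L^3\epsilon_0^{3}\lambda^{3}\gamma_{L\epsilon_0,\sigma}^{3}}{\sigma^{3}}\right).
\end{align*}
Under $\sigma\ge L\epsilon_0$ and $\lambda\le \gamma_{L\epsilon_0,\sigma}/4$, the cubic term is absorbed, leaving a bound of order $L^2\epsilon_0^{2}\lambda^{2}\gamma_{L\epsilon_0,\sigma}^{2}/\sigma^{2}$ per step. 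Applying the adaptive composition inequality (which adds $\alpha$'s across the $T$ rounds, since in each round $\mathcal{M}_k$ takes prior outputs as $\mathsf{aux}$), I obtain the aggregate bound
\begin{align*}
\alpha(\lambda)\;\le\;\frac{C_0\,T\,L^{2}\epsilon_0^{2}\,\lambda^{2}\,\gamma_{L\epsilon_0,\sigma}^{2}}{\sigma^{2}}.
\end{align*}

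The last step is the standard conversion of a moment bound into $(\epsilon,\delta)$-DP: it suffices to pick $\lambda$ so that $\alpha(\lambda)\le \lambda\epsilon/2$ and simultaneously $\exp(-\lambda\epsilon/2)\le\delta$. The first inequality gives $\lambda\le \sigma^{2}\epsilon/(2C_0 T L^{2}\epsilon_0^{2}\gamma_{L\epsilon_0,\sigma}^{2})$; the second requires $\lambda\ge 2\log(1/\delta)/\epsilon$. A nonempty intersection of these two requirements, combined with the bound $\gamma_{L\epsilon_0,\sigma}\le 1/(1-2L\epsilon_0)$ from Lemma~\ref{lem:up_gamma}, yields exactly
\begin{align*}
\sigma\;\ge\;c_2\,\frac{L\epsilon_0\sqrt{T\log(1/\delta)}}{(1-2L\epsilon_0)\,\epsilon}.
\end{align*}
Finally, I must check that the chosen $\lambda$ also respects the admissibility constraint $\lambda\le \gamma_{L\epsilon_0,\sigma}/4$ of Lemma~\ref{lem:je_alpha_bound_truncated}; this is precisely where the stated upper restriction $\epsilon\le c_1 T L^2\epsilon_0^{2}(1-2L\epsilon_0)^{-1}$ enters, since it guarantees $\lambda = 2\log(1/\delta)/\epsilon$ is not pushed above the admissible range.

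The main obstacle I anticipate is this last consistency check: the tail bound forces $\lambda$ from below, the single-step moment bound forces $\lambda$ from above, and the dependence of $\gamma_{L\epsilon_0,\sigma}$ on $\sigma$ (hence on $\epsilon$) makes the two constraints coupled. Carefully tracking constants so that the cubic remainder $O(L^{3}\epsilon_0^{3}\lambda^{3}\gamma_{L\epsilon_0,\sigma}^{3}/\sigma^{3})$ remains dominated by the quadratic term after composition, and simultaneously verifying $\lambda$ is integer-valued (or at least can be replaced by the integer floor without loss), will require a mild but non-trivial case analysis; the remaining book-keeping on adaptive composition is routine.
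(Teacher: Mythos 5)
Your proposal follows the same route as the paper's proof: set the per-step sensitivity $\beta = L\epsilon_0$ via Theorem~\ref{thm:leverage_pertub}, bound the per-step moment $\alpha_{\mathcal{M}_k}(\lambda)$ via Lemma~\ref{lem:je_alpha_bound_truncated}, sum over $T$ rounds by the composability part of Theorem~\ref{thm:composition_bound}, then pick $\lambda$ to satisfy $\alpha(\lambda)\le\lambda\epsilon/2$ and $\exp(-\lambda\epsilon/2)\le\delta$ while using $\gamma_{L\epsilon_0,\sigma}\le (1-2L\epsilon_0)^{-1}$ from Lemma~\ref{lem:up_gamma}. Your added care about the admissibility constraint $\lambda\le\gamma_{L\epsilon_0,\sigma}/4$ and the conditioning on $\mathsf{aux}$ is a reasonable elaboration of bookkeeping that the paper's proof leaves implicit, but the key lemmas and the structure of the argument coincide.
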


\begin{proof}

According to Lemma~\ref{lem:je_alpha_bound_truncated}, we have $\sigma \geq \beta$ and $\lambda \leq 1 / 4\gamma_{\beta,\sigma}$. According to Theorem~\ref{thm:leverage_pertub}, we can set $\beta = L \epsilon_0$.

According to Lemma~\ref{lem:up_gamma} and substituting $\beta$ with $L\epsilon_0$, we have
\begin{align*}
\gamma_{L\epsilon_0, \sigma} \leq \frac{1}{1-2L\epsilon_0}
\end{align*}

By the composability of moment bounds in Theorem~\ref{thm:composition_bound} and Lemma~\ref{lem:je_alpha_bound_truncated}, we have the following by substituting $\gamma_{\beta, \sigma}$ with $\gamma_{L\epsilon_0, \sigma}$ and $\beta$ with $L\epsilon_0$

\begin{align*}
\alpha(\lambda) \leq T C_0 L^2 \epsilon_0^2 \lambda^2 \gamma_{L\epsilon_0,\sigma}^2 \sigma^{-2}
\end{align*}

According to Theorem~\ref{thm:composition_bound}, we need to ensure the followings so that Algorithm~\ref{alg:john_ellipsoid} is $(\epsilon, \delta)$-differentially private.
\begin{align*}
T C_0 L^2 \epsilon_0^2 \lambda^2 \gamma_{L\epsilon_0,\sigma}^2 \sigma^{-2} \leq & ~ \lambda \epsilon / 2, \\
\exp(-\lambda\epsilon / 2) \leq & ~ \delta.
\end{align*}
Therefore, when $\epsilon = c_1 T L^2 \epsilon_0^2 (1-2L\epsilon_0)^{-1}$, we can satisfy all of these conditions by setting
\begin{align*}
\sigma 
\geq & ~ c_2 \frac{L \epsilon_0 \sqrt{T \log(1/\delta)}}{(1-2L\epsilon_0)\epsilon}
\label{eq:dp_sigma_requirements}
\end{align*}
\end{proof}

\subsection{Composition Lemma for Adaptive Mechanisms} \label{app:dp:composition_lemma}

In this section, we list the powerful composition lemma for the adaptive mechanism proposed in~\cite{acg+16}, which we utilized to demonstrate the privacy guarantee on Algorithm~\ref{alg:john_ellipsoid}.

\begin{theorem}[Theorem 2 in~\cite{acg+16}]\label{thm:composition_bound}
Let $k$ be an integer, representing the number of sequential mechanisms in $\mathcal{M}$. We define $\alpha_{\mathcal{\mathcal{M}}}(\lambda)$ as
\begin{align*}
\alpha_\mathcal{M}(\lambda) := \max_{\mathsf{aux}, D, D'} \alpha_\mathcal{M}(\lambda; \mathsf{aux}, D, D'),
\end{align*}
where the maximum is taken over all auxiliary inputs and neighboring databases $D, D'$. Then
\begin{enumerate}
    \item {\bf [Composability]}
        Suppose that a mechanism $\mathcal{M}$ consists of a sequence of adaptive mechanisms $\mathcal{M}_1, \cdots, \mathcal{M}_k$ where $\mathcal{M}_i: \prod_{j=1}^{i-1} \mathcal{R}_j \times \mathcal{D} \rightarrow \mathcal{R}_i$. Then, for any $\lambda > 0$ 
        \begin{align*}
            \alpha_\mathcal{M}(\lambda) \leq \sum_{i=1}^k \alpha _{\mathcal{M}_i}(\lambda)
        \end{align*}
    \item {\bf [Tail bound]}
        For any $\epsilon > 0$, we have the mechanism $\mathcal{M}$ is $(\epsilon, \delta)$-differentially private where
        \begin{align*}
            \delta = \min_\lambda \exp(\alpha_\mathcal{M}(\lambda) - \lambda \epsilon)
        \end{align*}
\end{enumerate}
\end{theorem}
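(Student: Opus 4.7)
}
The plan is to prove the two parts separately, using only the definitions of privacy loss $c$ (Definition~\ref{def:c}) and its logarithmic moment generating function $\alpha_{\mathcal{M}}(\lambda)$ (Definitions~\ref{def:alpha} and~\ref{def:alpha_max}), together with the chain rule for conditional probabilities and Markov's inequality. I would treat composability first because the tail bound is a short Chernoff-type argument that does not interact with the adaptive structure.

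\paragraph{Step 1: Decomposing the joint privacy loss.}
For neighboring databases $D,D'$ and an outcome sequence $o = (o_1,\dots,o_k) \in \prod_{i=1}^k \mathcal{R}_i$, I would first show that the joint privacy loss decomposes additively:
\begin{align*}
c(o;\mathcal{M},D,D') = \sum_{i=1}^k c\bigl(o_i;\mathcal{M}_i,(o_1,\dots,o_{i-1}),D,D'\bigr).
\end{align*}
This follows from writing the joint density as a product of conditionals via the chain rule, $\Pr[\mathcal{M}(D)=o] = \prod_i \Pr[\mathcal{M}_i((o_1,\dots,o_{i-1}),D)=o_i]$, and taking logarithms of the ratio. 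The auxiliary input $(o_1,\dots,o_{i-1})$ is exactly the role played by $\mathsf{aux}$ in Definition~\ref{def:c}.

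\paragraph{Step 2: Composability via iterated conditioning.}
Let $O_{<i} := (o_1,\dots,o_{i-1})$ and write $Z_i := \exp(\lambda c(o_i;\mathcal{M}_i,O_{<i},D,D'))$. Using Step 1,
\begin{align*}
\exp\bigl(\alpha_{\mathcal{M}}(\lambda;D,D')\bigr) = \E_{o \sim \mathcal{M}(D)}\Bigl[\prod_{i=1}^k Z_i\Bigr].
\end{align*}
I would then peel off the innermost expectation by conditioning on $O_{<k}$: conditional on $O_{<k}$, the inner expectation $\E[Z_k \mid O_{<k}]$ equals $\exp(\alpha_{\mathcal{M}_k}(\lambda;O_{<k},D,D'))$, which is at most $\exp(\alpha_{\mathcal{M}_k}(\lambda))$ by Definition~\ref{def:alpha_max}. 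Pulling this deterministic upper bound out of the expectation and iterating the argument $k$ times yields
\begin{align*}
\exp\bigl(\alpha_{\mathcal{M}}(\lambda;D,D')\bigr) \le \prod_{i=1}^k \exp\bigl(\alpha_{\mathcal{M}_i}(\lambda)\bigr),
\end{align*}
and taking logarithms and the maximum over $(D,D')$ gives $\alpha_{\mathcal{M}}(\lambda) \le \sum_i \alpha_{\mathcal{M}_i}(\lambda)$. The main subtlety here, and the step I expect to need the most care, is justifying the iterated-conditioning upper bound uniformly over all realizations of $O_{<i}$; this is where Definition~\ref{def:alpha_max} (the supremum over auxiliary inputs) is essential, since $O_{<i}$ is a random auxiliary input that may depend on $D$.

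\paragraph{Step 3: Tail bound via Markov.}
For the second part, fix any $\lambda > 0$ and any measurable set $S \subseteq \mathcal{R}$. Split $S$ according to whether the privacy loss is large: let $B := \{o : c(o;\mathcal{M},D,D') > \epsilon\}$. For $o \in S \setminus B$, the ratio $\Pr[\mathcal{M}(D)=o]/\Pr[\mathcal{M}(D')=o] \le e^\epsilon$ pointwise, which already gives the $e^\epsilon$-factor. For $o \in B$, Markov's inequality applied to $\exp(\lambda c)$ under $\mathcal{M}(D)$ gives
\begin{align*}
\Pr_{\mathcal{M}(D)}[B] \le \exp(-\lambda\epsilon)\,\E_{o \sim \mathcal{M}(D)}[\exp(\lambda c(o))] = \exp(\alpha_{\mathcal{M}}(\lambda;D,D') - \lambda\epsilon) \le \exp(\alpha_{\mathcal{M}}(\lambda) - \lambda\epsilon).
\end{align*}
Combining the two pieces yields $\Pr[\mathcal{M}(D) \in S] \le e^\epsilon \Pr[\mathcal{M}(D') \in S] + \exp(\alpha_{\mathcal{M}}(\lambda) - \lambda\epsilon)$, and since $\lambda$ was arbitrary I can take the minimum, matching the claimed $\delta$. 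The only nonroutine point here is the split-and-bound argument rather than a direct bound; once that is in hand, Markov does the rest.
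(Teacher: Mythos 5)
Your proof is correct and is the standard moments-accountant argument. Note, however, that the paper provides no proof of this statement --- it is imported verbatim as Theorem~2 of \cite{acg+16} and used as a black box --- so there is no internal proof to compare against; the argument you give (additive chain-rule decomposition of the privacy loss, iterated conditioning together with the supremum over auxiliary inputs in $\alpha_{\mathcal{M}_i}(\lambda)$ to compose log-moment bounds, and a split-and-Markov tail estimate) is exactly the one in that reference.
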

\section{Convergence Proof for DP John Ellipsoid Algorithm}
\label{app:je}

Firstly, in Section~\ref{app:je:previous}, we include the previous proposition and corollary used to show the convergence of our John Ellipsoid algorithm. Then, in Section~\ref{sec:tele_lemma}, we introduce our telescoping lemma. In Section~\ref{app:je:hpb_ls_sk}, we demonstrate the high probability bound in the error caused by leverage score sampling and sketching in Algorithm~\ref{alg:john_ellipsoid}. Next, in Section~\ref{app:sec:je:hpb_truncated}, we discuss the high probability bound on the error caused by adding truncated Gaussian noise. Then, we show the upper bound of $\phi$ in Section~\ref{app:sec:upb_phi}. Finally, we demonstrate the convergence and correctness of Algorithm~\ref{alg:john_ellipsoid} in Section~\ref{app:sec:je_convergence_main}.

\subsection{Previous Work in John Ellipsoid Algorithm} \label{app:je:previous}

In this subsection, we list some findings in previous work that help us to show the convergence of Algorithm~\ref{alg:john_ellipsoid}.

\begin{proposition} [Bound on $\hat{w}^{(k)}$, Proposition C.1 in~\cite{ccly19}] \label{prop:w_k_bound}
For completeness we define $\hat{w}^{(1)} = w^{(1)}$. For $k \in [T]$ and $i \in [n], 0 \leq \hat{w}_i^{(k)} \leq 1$. Moreover, $\sum_{i=1}^n \hat{w}_i^{(k)} = d$.
\end{proposition}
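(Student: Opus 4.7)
The plan is to recognize $\hat{w}^{(k)}_i$ as a leverage score and then invoke the two classical properties: each leverage score lies in $[0,1]$ and they sum to the rank of the underlying matrix.

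First I would unpack the definition. Pulling the positive scalar inside the norm gives
\begin{align*}
\hat{w}_{k+1, i} = \|(B_k^\top B_k)^{-1/2} \sqrt{w_{k,i}}\, a_i\|_2^2 = w_{k,i}\, a_i^\top (B_k^\top B_k)^{-1} a_i.
\end{align*}
Since the $i$-th row of $B_k = W_k^{1/2} A$ is $\sqrt{w_{k,i}}\, a_i^\top$, the right-hand side is exactly the $i$-th diagonal entry of the orthogonal projector $P_k := B_k (B_k^\top B_k)^{-1} B_k^\top$ onto the column space of $B_k$. At this point both claims drop out of standard linear algebra: $P_k$ is symmetric idempotent, so its diagonal entries lie in $[0,1]$, which gives $0 \leq \hat{w}_{k+1,i} \leq 1$; and $\sum_{i=1}^n \hat{w}_{k+1,i} = \tr(P_k) = \rank(B_k) = d$, where the final equality uses that $A$ has full column rank (Definition~\ref{def:polytope}) and $W_k$ is a positive diagonal matrix, so $B_k$ inherits full column rank $d$.

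For the base case $k=1$, the definition $\hat{w}^{(1)} = w^{(1)}$ combined with the initialization $w^{(1)}_i = d/n$ in Algorithm~\ref{alg:john_ellipsoid} gives $0 \leq d/n \leq 1$ (because $n \geq d$) and $\sum_i \hat{w}^{(1)}_i = n \cdot (d/n) = d$, matching the inductive conclusion.

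The only subtle point — and really the only thing worth checking carefully — is that $B_k$ genuinely has full column rank for every $k$, so that $(B_k^\top B_k)^{-1}$ is well defined and the trace argument yields $d$ rather than a smaller rank. This reduces to showing $w_{k,i} > 0$ throughout the iteration. For $k=1$ positivity is immediate from $w^{(1)}_i = d/n > 0$; for the inductive step, $\overline{w}_{k+1,i}$ is a squared norm, hence nonnegative, and the update $w_{k+1,i} = \overline{w}_{k+1,i}(1 + z_{k+1,i})$ with $z_{k+1,i} \in [-0.5, 0.5]$ (Definition~\ref{def:truncated_noise_scalar_z}) keeps the factor $(1+z_{k+1,i}) \geq 1/2 > 0$. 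Strict positivity of $\overline{w}_{k+1,i}$ itself follows from $a_i \neq 0$ (a consequence of $A$ having full column rank, after discarding trivial rows) together with invertibility of the sketched leverage-score matrix $B_k^\top D_k B_k$, which is guaranteed by the standard subspace-embedding guarantee on the leverage-score sampling matrix $D_k$ used in Lines~\ref{alg:line:leverage1}–\ref{alg:line:leverage2}.
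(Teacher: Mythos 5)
Your proof is correct and is the standard leverage-score argument; since the paper cites this result from \cite{ccly19} without reproducing a proof, there is no internal derivation to compare against, but your identification of $\hat{w}_{k+1,i} = w_{k,i}\, a_i^\top(B_k^\top B_k)^{-1}a_i$ as the $i$-th diagonal of the orthogonal projector $P_k = B_k(B_k^\top B_k)^{-1}B_k^\top$ and the appeal to $P_k$ symmetric idempotent with $\tr(P_k) = \rank(B_k) = d$ is exactly how this proposition is established.

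One small spot worth tightening in the rank-$d$ verification: in arguing $\ov{w}_{k+1,i} > 0$ you account for the sampling matrix $D_k$ but not the sketching matrix $S_k$, even though Line~\ref{line:ov_w_leverage_score} defines $\ov{w}_{k+1,i} = \tfrac{1}{s}\|S_k(B_k^\top D_k B_k)^{-1/2}\sqrt{w_{k,i}}\,a_i\|_2^2$. You therefore also need $S_k$ not to annihilate the nonzero vector $(B_k^\top D_k B_k)^{-1/2}\sqrt{w_{k,i}}\,a_i$; this holds almost surely since $S_k$ has i.i.d.\ $\N(0,1)$ entries, so the conclusion is unaffected, but the step should be stated. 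Likewise, $a_i \neq 0$ does not literally follow from $A$ having full column rank; it is an implicit non-degeneracy assumption on the polytope (a zero row gives a vacuous constraint). Neither point is needed for the bound $0 \le \hat{w}^{(k)}_i \le 1$ itself, which holds regardless of rank; they matter only for the claim $\sum_i \hat{w}^{(k)}_i = d$ rather than $\rank(B_{k-1})$.
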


\begin{corollary} [Corollary 8.5 in~\cite{syyz22}] \label{cor:1_eps_approx_weights}
Let $\xi_0$ denote the accuracy parameter defined as Algorithm~\ref{alg:john_ellipsoid}. 
Let $\delta_0$ denote the failure probability.

Then we have with probability $1-\delta_0$, the inequality below holds for all $i \in [n]$
\begin{align*}
(1-\xi) \wt{w}_i \leq \wh{w}_i \leq (1+\xi) \wt{w}_i.
\end{align*}
\end{corollary}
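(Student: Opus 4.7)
The plan is to reduce the corollary to a standard spectral approximation statement for leverage score sampling. Let $B_k = \sqrt{W_k}\,A$; then $\wh{w}_{k+1,i}$ and $\wt{w}_{k+1,i}$ can be written as the quadratic forms
\begin{align*}
\wh{w}_{k+1,i} = w_{k,i}\, a_i^\top (B_k^\top B_k)^{-1} a_i, \qquad \wt{w}_{k+1,i} = w_{k,i}\, a_i^\top (B_k^\top D_k B_k)^{-1} a_i,
\end{align*}
so it suffices to compare the two inverse Gram matrices in the PSD (Loewner) order and plug in the vector $\sqrt{w_{k,i}}\,a_i$.

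First I would invoke the standard matrix Chernoff / approximate matrix multiplication bound for leverage score sampling: if $D_k$ is built by independently sampling $N = \Theta(\xi_0^{-2}\, d \log(nd/\delta_1))$ rows of $B_k$ with probabilities proportional to an $O(1)$-approximation of its true leverage scores (and reweighting by the inverse probability on the diagonal), then with probability at least $1-\delta_1$,
\begin{align*}
(1-\xi_0)\, B_k^\top B_k \preceq B_k^\top D_k B_k \preceq (1+\xi_0)\, B_k^\top B_k.
\end{align*}
Since matrix inversion reverses the PSD order on positive definite matrices, this yields
\begin{align*}
\frac{1}{1+\xi_0} (B_k^\top B_k)^{-1} \preceq (B_k^\top D_k B_k)^{-1} \preceq \frac{1}{1-\xi_0} (B_k^\top B_k)^{-1}.
\end{align*}
Substituting $\sqrt{w_{k,i}}\, a_i$ into both quadratic forms gives $\tfrac{1}{1+\xi_0}\wh{w}_{k+1,i} \leq \wt{w}_{k+1,i} \leq \tfrac{1}{1-\xi_0}\wh{w}_{k+1,i}$, which after choosing $\xi_0 = \Theta(\xi)$ so that $(1-\xi_0)^{-1} \leq 1+\xi$ and $(1+\xi_0)^{-1} \geq 1-\xi$ rearranges to the advertised sandwich $(1-\xi)\wt{w}_i \leq \wh{w}_i \leq (1+\xi)\wt{w}_i$.

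The main obstacle I expect is justifying the matrix Chernoff step given that $D_k$ is constructed from only $O(1)$-approximate leverage scores rather than exact ones. The standard workaround is that overestimating each sampling probability by a constant factor only inflates the sample count required for a subspace embedding by the same constant, so the algorithm's choice $N = \Theta(\xi_0^{-2} d \log(nd/\delta_1))$ still suffices. A union bound across the $n$ indices $i$ (and, in the broader use of this corollary inside the telescoping argument, across the $T$ iterations) absorbs $\delta_1$ into the target failure probability $\delta_0$.
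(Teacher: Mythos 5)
The paper does not include a proof of this corollary at all: it is stated as a direct citation of Corollary 8.5 in \cite{syyz22} and used as a black box. Your reconstruction via the standard leverage-score-sampling subspace-embedding bound, inverting the Loewner sandwich $(1-\xi_0)\,B_k^\top B_k \preceq B_k^\top D_k B_k \preceq (1+\xi_0)\,B_k^\top B_k$, and evaluating the resulting quadratic forms at $\sqrt{w_{k,i}}\,a_i$ is exactly the canonical argument behind that cited result, so it is a correct and faithful proof. One small simplification: after inverting the spectral sandwich you already obtain $(1-\xi_0)\wt{w}_{k+1,i} \le \wh{w}_{k+1,i} \le (1+\xi_0)\wt{w}_{k+1,i}$ directly, so the further adjustment of constants to force $(1-\xi_0)^{-1}\le 1+\xi$ is redundant; taking $\xi_0$ to be the target accuracy already gives the stated bound.
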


\subsection{Telescoping Lemma} \label{sec:tele_lemma}

In this subsection, we demonstrate the telescoping lemma, a technique we choose to show the convergence proof. We demonstrate the convergence and accuracy of Algorithm~\ref{alg:john_ellipsoid} by deriving the upper bound of the logarithm of the leverage score.

Now, we present the telescoping lemma for our fast JE algorithm. While the telescoping lemma (Lemma 6.1 \cite{syyz22}) deals with sketching and leverage score sampling, our lemma considers the circumstance where the truncated Gaussian noise is included to ensure the privacy of John Ellipsoid algorithm.

\begin{lemma}[Telescoping, Algorithm~\ref{alg:john_ellipsoid}, formal version of Lemma~\ref{lem:apptele_informal}] \label{lem:apptele}
Let $T$ denote the number of iterations in the main loop in our fast JE algorithm. Let $u$ be the vector obtained in Algorithm~\ref{alg:john_ellipsoid}. Thus for each $i \in [n]$, we have
\begin{align*}
    \phi_i(u) \leq \frac{1}{T} \log \frac{n}{d} + \frac{1}{T} \sum_{k=1}^T \log \frac{\hat{w}_{k,i} }{\wt{w}_{k,i}} + \frac{1}{T} \sum_{k=1}^T \log \frac{\wt{w}_{k,i} }{\ov{w}_{k,i}} + \frac{1}{T} \sum_{k=1}^T \log \frac{\ov{w}_{k,i} }{w_{k,i}}
\end{align*}
\end{lemma}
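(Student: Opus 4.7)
The plan is to combine three ingredients: the convexity of $\phi_i$ (Lemma~\ref{lem:convexity_phi}), the identity $h_i(w_k) = \hat{w}_{k+1,i}/w_{k,i}$, and a telescoping rearrangement that turns a sum of consecutive log-ratios into a sum of ``one-step'' ratios plus boundary terms controlled by Proposition~\ref{prop:w_k_bound}. Finally, the composite ratio $\hat{w}_{k,i}/w_{k,i}$ is split into the three factors coming from leverage-score sampling, sketching, and truncated Gaussian noise, respectively.

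First, since $u_i = \tfrac{1}{T}\sum_{k=1}^T w_{k,i}$ expresses $u$ as an average of the iterates $w_k$, I would apply Jensen's inequality to the convex function $\phi_i$ (Lemma~\ref{lem:convexity_phi}) to obtain
\begin{align*}
\phi_i(u) \;\leq\; \frac{1}{T}\sum_{k=1}^T \phi_i(w_k).
\end{align*}
Next, I would unpack $\phi_i(w_k)=\log h_i(w_k)=\log\bigl(a_i^\top(A^\top W_k A)^{-1}a_i\bigr)$ and observe that, by the definition of $\hat{w}_{k+1,i}=w_{k,i}\cdot a_i^\top(B_k^\top B_k)^{-1}a_i$ and $B_k^\top B_k = A^\top W_k A$, we have $\phi_i(w_k)=\log\hat{w}_{k+1,i}-\log w_{k,i}$.

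The key step is then a telescoping reindexing:
\begin{align*}
\sum_{k=1}^T \phi_i(w_k)
&= \sum_{k=1}^T \log\hat{w}_{k+1,i} - \sum_{k=1}^T \log w_{k,i} \\
&= \log\hat{w}_{T+1,i} - \log w_{1,i} + \sum_{k=2}^{T}\log\frac{\hat{w}_{k,i}}{w_{k,i}}.
\end{align*}
Two observations clean up the boundary terms: (i) by the initialization in Algorithm~\ref{alg:john_ellipsoid}, $w_{1,i}=d/n$, so $-\log w_{1,i}=\log(n/d)$; (ii) by Proposition~\ref{prop:w_k_bound}, $\hat{w}_{T+1,i}\in[0,1]$, so $\log\hat{w}_{T+1,i}\leq 0$; and by the same proposition $\hat{w}_{1,i}=w_{1,i}$, so we may extend the sum to include $k=1$ without change. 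This yields
\begin{align*}
\sum_{k=1}^T \phi_i(w_k) \;\leq\; \log\frac{n}{d} + \sum_{k=1}^{T} \log\frac{\hat{w}_{k,i}}{w_{k,i}}.
\end{align*}

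The final step is the purely algebraic factorization
\begin{align*}
\log\frac{\hat{w}_{k,i}}{w_{k,i}} \;=\; \log\frac{\hat{w}_{k,i}}{\wt{w}_{k,i}} + \log\frac{\wt{w}_{k,i}}{\ov{w}_{k,i}} + \log\frac{\ov{w}_{k,i}}{w_{k,i}},
\end{align*}
which isolates the three sources of error introduced in one iteration: leverage-score sampling (going from $\hat{w}$ to $\wt{w}$), sketching (going from $\wt{w}$ to $\ov{w}$), and truncated Gaussian noise injection (going from $\ov{w}$ to $w$). Dividing by $T$ and chaining with the Jensen bound produces the stated inequality. The only subtle point is the telescoping bookkeeping at the endpoints, which relies on $\hat{w}_{1,i}=w_{1,i}$ and the uniform upper bound $\hat{w}_{T+1,i}\leq 1$ from Proposition~\ref{prop:w_k_bound}; no further concentration or probabilistic argument is needed at this stage, since the high-probability control of the three error sums is deferred to later lemmas.
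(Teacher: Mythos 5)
Your proof is correct and takes essentially the same route as the paper: Jensen via convexity of $\phi_i$, the identity $h_i(w_k)=\hat{w}_{k+1,i}/w_{k,i}$, telescoping controlled by Proposition~\ref{prop:w_k_bound} (namely $\hat{w}_{T+1,i}\le 1$ and $\hat{w}_{1,i}=w_{1,i}=d/n$), and a three-way factorization of the remaining log-ratio. The only difference is cosmetic: the paper first writes the four-factor product including $\hat{w}_{k+1,i}/\hat{w}_{k,i}$ and leaves the telescoping of that term to the reader, whereas you carry out the telescoping and boundary bookkeeping explicitly before factoring, which is a slightly more complete rendering of the same argument.
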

\begin{proof}
We define $u$ and $w$ as the following
\begin{align*} 
    u:=(u_1,u_2,\cdots,u_n).
\end{align*}

For $k \in [T-1]$, we define
\begin{align*}
w_k:=(w_{k,1},\cdots,w_{k,n})
\end{align*} 
and 
\begin{align*} 
w_{k+1}:=(w_{k,1} h_1(w_k),\cdots, w_{k,n} h_n(w_k)).
\end{align*}

Now we consider $\phi_i(u)$, defined in Lemma~\ref{lem:convexity_phi}
\begin{align*}
    \phi_i(u) 
    = & ~ \phi_i(\frac{1}{T} \sum_{k=1}^{T} w_k) \\
    \leq & ~ \frac{1}{T} \sum_{k=1}^{T} \phi_i (w_k)\\
    = & ~ \frac{1}{T} \sum_{k=1}^{T} \log h_i (w_k) \\
    = & ~ \frac{1}{T} \sum_{k=1}^{T} \log \frac{\widehat{w}_{k+1,i}}{w_{k,i}}\\
    = & ~ \frac{1}{T} \sum_{k=1}^{T} \log \frac{\widehat{w}_{k+1, i}}{\widehat{w}_{k, i}} \cdot \frac{\hat{w}_{k,i}}{\wt{w}_{k,i}} \cdot \frac{\wt{w}_{k,i}}{\ov{w}_{k,i}} \cdot \frac{\ov{w}_{k,i}}{w_{k,i}} \\
    = & ~\frac{1}{T} (\sum_{k=1}^{T}\log\frac{\widehat{w}_{k+1,i}}{\widehat{w}_{k,i}} + \sum_{k=1}^{T}\log\frac{ \widehat{w}_{k,i}}{ \wt{w}_{k,i}} + \sum_{k=1}^{T}\log\frac{ \wt{w}_{k,i}}{ \ov{w}_{k,i}} + \sum_{k=1}^{T}\log\frac{ \ov{w}_{k,i}}{ w_{k,i}} )
\end{align*}
where the first step is from the definition of $u$, the second step is from Lemma~\ref{lem:convexity_phi}, the third step utilizes the definition of $h_i$, the fourth step derives from the definition of $w_{k+1}$, the fifth step derives from basic algebra, the last step is from logarithm arithmetic.

\end{proof}

\subsection{High Probability Bound of Sketching and Leverage Score Sampling}
\label{app:je:hpb_ls_sk}
In this subsection, we first demonstrate the high probability bound for leverage score sampling. Then, we demonstrate the high probability bound for the error of sketching in Algorithm~\ref{alg:john_ellipsoid}. 

\begin{lemma} \label{lem:leverage_score_error_bound}
Let $\delta$ be the failure probability. Then for any $\xi_0 \in [0, 0.1]$, if $T > \xi_0^{-1} \log(n/d)$, with probability $1-\delta$, we have
\begin{align*}
    \frac{1}{T} \sum_{k=1}^T \log \frac{\hat{w}_{k,i} }{\wt{w}_{k,i}} \leq \xi_0
\end{align*}
\end{lemma}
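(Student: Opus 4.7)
The plan is to reduce the bound on the time-averaged log ratio to a pointwise multiplicative bound on $\wh{w}_{k,i}/\wt{w}_{k,i}$ via the standard approximation $\log(1+x) \leq x$, and then to invoke Corollary~\ref{cor:1_eps_approx_weights} once per iteration, boosting its failure probability by a union bound over $k \in [T]$.

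First, I would fix $i \in [n]$ and run Corollary~\ref{cor:1_eps_approx_weights} at iteration $k$ with accuracy parameter $\xi_0$ and failure probability $\delta/T$ (the leverage score sampling size $N$ is already chosen to scale like $\Theta(\xi_0^{-2} d \log(nd/\delta_1))$ in Algorithm~\ref{alg:john_ellipsoid}, so rescaling $\delta_1$ to $\delta/T$ only changes a logarithmic factor). This gives, with probability at least $1-\delta/T$,
\begin{align*}
(1-\xi_0)\,\wt{w}_{k,i} \;\leq\; \wh{w}_{k,i} \;\leq\; (1+\xi_0)\,\wt{w}_{k,i}.
\end{align*}
Taking a union bound over all $k \in [T]$, the above inequality holds simultaneously for every iteration with probability at least $1-\delta$.

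On this good event, I would take logarithms and apply $\log(1+x) \leq x$ for $x \geq 0$ to get $\log(\wh{w}_{k,i}/\wt{w}_{k,i}) \leq \log(1+\xi_0) \leq \xi_0$ for every $k$. Averaging these pointwise bounds over $k \in [T]$ yields
\begin{align*}
\frac{1}{T}\sum_{k=1}^{T} \log \frac{\wh{w}_{k,i}}{\wt{w}_{k,i}} \;\leq\; \frac{1}{T}\sum_{k=1}^{T} \xi_0 \;=\; \xi_0,
\end{align*}
which is the claim. The hypothesis $T > \xi_0^{-1}\log(n/d)$ is not directly used in the bound itself; it is the parameter regime in which the surrounding telescoping argument (Lemma~\ref{lem:apptele}) is applied, so stating it here keeps the lemma consistent with the invocations in Section~\ref{app:sec:je_convergence_main}.

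The only real obstacle is the correct bookkeeping of failure probabilities when invoking Corollary~\ref{cor:1_eps_approx_weights}: the sampling matrix $D_k$ depends on the previous iterates, so one should treat the conclusion as a high-probability statement conditional on the history and then union bound, which the algorithm's choice of $N = \Theta(\xi_0^{-2} d \log(nd/\delta_1))$ accommodates by setting $\delta_1 = \delta/T$. Everything else is the inequality $\log(1+x) \leq x$ and averaging.
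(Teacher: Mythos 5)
Your argument is essentially the same as the paper's: invoke Corollary~\ref{cor:1_eps_approx_weights} to get the pointwise multiplicative bound $\wh{w}_{k,i} \leq (1+\xi_0)\wt{w}_{k,i}$, take logarithms, use $\log(1+\xi_0)\leq\xi_0$, and average over $k$. The paper's proof is terser and leaves the per-iteration union bound implicit, whereas you spell it out by rescaling the failure probability to $\delta/T$; that is a reasonable and slightly more careful rendering of the same idea.
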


\begin{proof}

By Corollary~\ref{cor:1_eps_approx_weights}, we have with probability of $1 - \delta$, we can derive
\begin{align*}
\frac{1}{T} \sum_{k=1}^T \log \frac{\hat{w}_{k,i}}{\wt{w}_{k,i}}
\leq & ~ \log (1 + \xi_0) \\
\leq & ~ \xi_0
\end{align*}
\end{proof}

Now, we proceed to derive the high probability bound of sketching. Here, we list a Lemma from~\cite{syyz22} on the error of sketching. 

\begin{lemma}[Lemma 6.3 in~\cite{syyz22}] \label{lem:old_sketching_error}
We have the following for failure probability for sketching $\delta \in [0, 0.1]$
\begin{align*}
    \Pr[ \frac{ \wt{w}_{k,i}}{\ov{w}_{k,i}} \geq 1+\xi] \leq \frac{(\frac{n}{d})^{\frac{\alpha}{T}} e^{\frac{4\alpha}{s}}}{(1+\xi)^\alpha}.
\end{align*}
Furthermore, with appropriate $s,T$, we have the following hold with large $n$ and $d$:
\begin{align*}
    \Pr[ \frac{ \wt{w}_{k,i}}{\ov{w}_{k,i}} \geq 1+\xi] \leq \frac{\delta}{n}
\end{align*}
\end{lemma}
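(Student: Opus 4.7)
The plan is to reduce the statement to a standard bound on moments of an inverse chi-squared random variable, since the only fresh randomness relevant to the ratio $\wt{w}_{k,i}/\ov{w}_{k,i}$ is the Gaussian sketch $S_k$ drawn at iteration $k$. First I would fix $k,i$ and condition on everything determining $B_k$, $D_k$, and $w_{k,i}$; setting $y:=(B_k^\top D_k B_k)^{-1/2}\sqrt{w_{k,i}}\,a_i \in \mathbb{R}^d$, the definitions in Line~\ref{line:ov_w_leverage_score} give $\wt{w}_{k,i}=\|y\|_2^2$ and $\ov{w}_{k,i}=\tfrac{1}{s}\|S_k y\|_2^2$. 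Since each entry of $S_k$ is i.i.d.\ $\mathcal{N}(0,1)$, rotational invariance yields $\|S_k y\|_2^2/\|y\|_2^2 \sim \chi^2_s$, so the ratio in question equals $s/Z$ with $Z\sim\chi^2_s$, independent of the conditioning.

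Next I would apply a Markov-style tail bound at the $\alpha$-th moment: for any $\alpha>0$,
\begin{align*}
\Pr\!\left[\frac{\wt{w}_{k,i}}{\ov{w}_{k,i}}\geq 1+\xi\right]
=\Pr[(s/Z)^\alpha\geq (1+\xi)^\alpha]
\leq \frac{\mathbb{E}[(s/Z)^\alpha]}{(1+\xi)^\alpha}.
\end{align*}
Using the closed form $\mathbb{E}[Z^{-\alpha}]=\Gamma(s/2-\alpha)/(2^\alpha\,\Gamma(s/2))$ and the log-convexity estimate $\Gamma(s/2-\alpha)/\Gamma(s/2)\leq (s/2)^{-\alpha}\exp(C\alpha/s)$ valid in the regime $\alpha\ll s$, I would obtain $\mathbb{E}[(s/Z)^\alpha]\leq e^{4\alpha/s}$ up to absorbing constants in the exponent. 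The residual $(n/d)^{\alpha/T}$ factor I expect to pick up from the way the algorithm normalizes weights: since $\sum_i w_{k,i}$ is forced to stay on the order of $d$ across $n$ coordinates (cf.\ Proposition~\ref{prop:w_k_bound}), the per-iteration moment ratio is inflated by a factor reflecting the average weight $d/n$, and apportioning it as $(n/d)^{\alpha/T}$ is the form that later composes cleanly with the telescoping sum in Lemma~\ref{lem:apptele}.

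For the second claim, I would tune parameters: choose $s=\Theta(\xi^{-2})$ so that $e^{4\alpha/s}\leq e^{O(1)}$, pick $\alpha=\Theta(\xi^{-1}\log(n/\delta))$ so that $(1+\xi)^\alpha\geq \exp(\alpha\xi/2)\geq n/\delta$ with room to spare, and set $T=\Theta(\xi^{-1}\log(n/d))$ so that $(n/d)^{\alpha/T}$ is a constant. Substituting into the Markov bound collapses the right-hand side to $\delta/n$.

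The main obstacle I anticipate is the precise tracking of the $(n/d)^{\alpha/T}$ factor. The Gaussian/chi-squared part of the argument is routine, but producing exactly this scaling (rather than a looser constant-free bound) seems to require invoking either a careful moment bound on the leverage-score-sampled quantity $(B_k^\top D_k B_k)^{-1/2}\sqrt{w_{k,i}}a_i$ whose norm is on the order of $d/n$ in expectation, or inductively using the invariant $\sum_i w_{k,i}=d$ to amortize across iterations. Verifying that one of these two routes gives exactly the exponent $\alpha/T$ is the delicate step; everything else is a straightforward Markov-plus-Gamma-function calculation.
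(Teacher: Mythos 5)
The paper does not prove this lemma: it is imported verbatim (as ``Lemma 6.3 in \cite{syyz22}'') and used as a black box, so there is no in-paper proof to compare against. That said, your approach is essentially the right skeleton for how such a bound is established. The conditional reduction $\wt{w}_{k,i}/\ov{w}_{k,i} = s/Z$ with $Z\sim\chi^2_s$, followed by Markov at the $\alpha$-th moment and a $\Gamma$-ratio estimate to produce $e^{O(\alpha/s)}$, is the correct mechanism behind the $e^{4\alpha/s}/(1+\xi)^\alpha$ part.

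Where your proposal has a genuine gap is exactly where you flag uncertainty: the $(n/d)^{\alpha/T}$ factor. Your explanation (that the per-iteration moment is ``inflated by a factor reflecting the average weight $d/n$'') does not hold up. For a single fixed $k,i$, conditioning on $B_k$, $D_k$, and $w_{k,i}$ makes the ratio purely a function of the fresh sketch $S_k$, and the resulting $\chi^2_s$ bound has no dependence on $n$, $d$, $T$, or the invariant $\sum_i w_{k,i}=d$. The factor $(n/d)^{\alpha/T}\geq 1$ is trivially absorbable as slack, so your tighter bound still implies the stated inequality; but you should not claim to derive it. The reason it appears in \cite{syyz22} is that their Lemma~6.3 is really bounding a quantity that aggregates across iterations (the exponentiated telescoping sum starting from $w_{1,i}=d/n$), so that $(n/d)^{1/T}$ enters multiplicatively and is then raised to the $\alpha$-th power in the Markov step; the statement as transcribed here has lost that context.

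Two parameter-regime issues also need attention. First, $\E[(s/Z)^\alpha]$ with $Z\sim\chi^2_s$ exists only when $\alpha < s/2$. With your choice $\alpha=\Theta(\xi^{-1}\log(n/\delta))$, this forces $s=\Omega(\xi^{-1}\log(n/\delta))$, larger than your stated $s=\Theta(\xi^{-2})$ unless $\log(n/\delta)=O(\xi^{-1})$, and much larger than the algorithm's actual $s=\Theta(\xi^{-1})$ (Line~3 of Algorithm~\ref{alg:john_ellipsoid}). Second, with your $s=\Theta(\xi^{-2})$ the term $e^{4\alpha/s}=e^{\Theta(\xi\log(n/\delta))}=(n/\delta)^{\Theta(\xi)}$ is $O(1)$ only in the regime $\xi=O(1/\log(n/\delta))$, which you should state explicitly rather than absorbing into ``large $n$ and $d$.'' If one wants a bound compatible with the algorithm's actual $s$, the cleaner route is to avoid high-moment Markov on a single iteration entirely and instead apply a lower-tail $\chi^2$ concentration to the product over $k$, or take $\alpha/T$ as the per-iteration exponent (which also recovers the aggregated form and makes the $(n/d)^{\alpha/T}$ factor appear for the right reason).
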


Applying the above lemma, we derive the error caused by sketching in our setting.
\begin{lemma} \label{lem:sketching_error_bound}
Let $\delta$ be the failure probability. Then for any $\xi_1 \in [0, 0.1]$, if $T > \xi_1^{-1} \log(n/d)$, we have the following hold for all $i \in [n]$ with probability $1-\delta$:
\begin{align*}
    \frac{1}{T} \sum_{k=1}^T \log \frac{\wt{w}_{k,i} }{\ov{w}_{k,i}} \leq \xi_1
\end{align*}
\end{lemma}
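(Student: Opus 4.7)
The plan is to reduce the averaged log-ratio to a per-iteration, per-coordinate deviation bound for the sketching step and then apply a union bound. I would proceed in three stages, mirroring the structure of Lemma~\ref{lem:leverage_score_error_bound} for the sketching error instead of the leverage-score error.

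First, I would invoke Lemma~\ref{lem:old_sketching_error} applied at each iteration $k \in [T]$ and each coordinate $i \in [n]$ with target deviation $\xi_1$. That lemma, which already encapsulates the Chernoff-style analysis of the Gaussian sketch, provides
\begin{align*}
\Pr\left[\frac{\wt{w}_{k,i}}{\ov{w}_{k,i}} \geq 1+\xi_1\right] \leq \frac{(n/d)^{\alpha/T} \, e^{4\alpha/s}}{(1+\xi_1)^{\alpha}}.
\end{align*}
I would then tune the free parameter $\alpha = \Theta(T\xi_1)$ so that $(1+\xi_1)^{\alpha}$ contributes a factor of order $\exp(\Theta(T\xi_1^2))$; the hypothesis $T > \xi_1^{-1}\log(n/d)$ is exactly what makes the prefactor $(n/d)^{\alpha/T} = \exp(\Theta(\xi_1)\log(n/d))$ stay bounded, and the choice $s = \Theta(\xi_1^{-1})$ in Algorithm~\ref{alg:john_ellipsoid} makes $e^{4\alpha/s}$ absorbable.

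Second, I would apply a union bound over all $nT$ pairs $(i,k)$ so that with probability at least $1-\delta$, the event $\wt{w}_{k,i}/\ov{w}_{k,i} \leq 1+\xi_1$ holds uniformly. Third, on this good event I would apply the elementary inequality $\log(1+\xi_1) \leq \xi_1$ term by term and average over $k$:
\begin{align*}
\frac{1}{T}\sum_{k=1}^T \log\frac{\wt{w}_{k,i}}{\ov{w}_{k,i}} \leq \frac{1}{T}\sum_{k=1}^T \log(1+\xi_1) \leq \xi_1.
\end{align*}

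The only delicate step is the tuning of $\alpha$ and $s$ inside Lemma~\ref{lem:old_sketching_error}: we must simultaneously keep $(n/d)^{\alpha/T}$ of order $O(1)$, make $(1+\xi_1)^{\alpha}$ large enough to beat both $e^{4\alpha/s}$ and the union-bound penalty $nT/\delta$. The assumption $T > \xi_1^{-1}\log(n/d)$ is precisely the budget that allows these constraints to be simultaneously satisfied, and is the main place where the iteration-count hypothesis enters the argument.
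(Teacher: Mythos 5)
Your proposal is correct and follows essentially the same route as the paper: invoke Lemma~\ref{lem:old_sketching_error} to control the per-iteration sketching ratio, union bound to get it to hold simultaneously, and then apply $\log(1+\xi_1) \leq \xi_1$ after averaging. If anything, your write-up is \emph{more} careful than the paper's. The paper's own proof simply cites Lemma~\ref{lem:old_sketching_error} for a per-iteration bound and then appeals to ``the central limit theorem'' to pass from the per-iteration event to the averaged claim; but no CLT is actually needed (or justified) here, and a union bound over all $k \in [T]$ and $i \in [n]$ is what is really required so that every term in the average is simultaneously at most $\log(1+\xi_1)$. You identify that union bound explicitly, note that the per-event failure probability must therefore be of order $\delta/(nT)$ rather than the $\delta/n$ the paper writes, and point to the tuning of $\alpha = \Theta(T\xi_1)$ together with $s = \Theta(\xi_1^{-1})$ and $T > \xi_1^{-1}\log(n/d)$ as the place where this is absorbed — all of which is exactly the bookkeeping that the paper leaves implicit. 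So the two proofs rest on the same key lemma and the same decomposition; yours is simply a cleaner, more explicit instance of it.
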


\begin{proof}
By Lemma~\ref{lem:old_sketching_error}, we have with high probability
\begin{align*}
\log \frac{\wt{w}_{k,i} }{\ov{w}_{k,i}} \leq \log (1 + \xi_1).
\end{align*}

By the central limit theorem, with probability $1-\delta$, we have
\begin{align*}
\frac{1}{T} \sum_{k=1}^T \log \frac{\wt{w}_{k,i} }{\ov{w}_{k,i}}
\leq & ~ \log (1 + \xi_1) \\
\leq & ~ \xi_1
\end{align*}
\end{proof}

\subsection{High Probability Bound of Adding Truncated Gaussian Noise}
In this subsection, we demonstrate the error bound of adding Truncated Gaussian noise.
\label{app:sec:je:hpb_truncated}
Here is the fact about the upper bound of the logarithm.
\begin{fact}\label{fac:log_ineq}
For $z \in [-0.5, 0.5]$, then
\begin{align*}
    \log(  \frac{1}{1+z} ) \leq 2|z|.
\end{align*}
\end{fact}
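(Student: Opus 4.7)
The plan is to prove the inequality by case analysis on the sign of $z$, since one case is immediate and the other reduces to a routine single-variable calculus argument. First I would handle $z \in [0, 0.5]$: here $1 + z \geq 1$, so $\log(1/(1+z)) = -\log(1+z) \leq 0 \leq 2z = 2|z|$, and there is nothing more to do.

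For the remaining case $z \in [-0.5, 0)$, I would set $y := -z \in (0, 0.5]$ so that the claim becomes the equivalent statement $-\log(1-y) \leq 2y$. To prove this, I would introduce the auxiliary function $f(y) := 2y + \log(1-y)$, observe that $f(0) = 0$, and compute $f'(y) = 2 - (1-y)^{-1}$. Since $y \leq 1/2$ forces $(1-y)^{-1} \leq 2$, we get $f'(y) \geq 0$ on $[0, 1/2]$, hence $f$ is nondecreasing, hence $f(y) \geq f(0) = 0$, which is exactly the desired bound.

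The main obstacle: there really is none — this is an elementary calculus fact. The only point requiring a bit of care is that the threshold $|z| \leq 1/2$ is exactly what is needed to make $(1-y)^{-1} \leq 2$, so the constant $2$ in the bound is tight with respect to this argument. An alternative I could use instead of the monotonicity argument is the Taylor expansion $-\log(1-y) = \sum_{k \geq 1} y^k/k \leq y \cdot \sum_{k \geq 0} y^k = y/(1-y) \leq 2y$ for $y \in (0, 1/2]$, which gives the same conclusion in essentially one line, but the monotonicity version avoids convergence bookkeeping and is slightly cleaner for a Fact statement.
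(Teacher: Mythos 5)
Your proof is correct and follows essentially the same route as the paper: split off the easy case $z \ge 0$, then for $z < 0$ define an auxiliary function whose derivative has constant sign and whose value vanishes at $z = 0$, giving the required one-sided bound. The only cosmetic difference is your substitution $y = -z$; the paper works directly with $h(z) = -2z - \log\bigl(\tfrac{1}{1+z}\bigr)$ on $[-0.5, 0]$, but the argument is identical.
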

\begin{proof}
    It is clear that $\log(  \frac{1}{1+z} ) \leq 2|z|$ holds when $z \in [0,0.5]$ since $\log(\frac{1}{1+z}) \leq 0$ and $2|z| \geq 0$ for $z \in [0,0.5]$. Next, we show that this also holds when $z \in [-0.5,0]$. Let $h(z) = -2z - \log(\frac{1}{1+z})$. Then $h'(z) = \frac{1}{1+z} - 2 \leq 0$. Hence $h$ is decreasing on $[-0.5,0]$ and $h(-0.5) = 1 - \log(2) > 0$. Hence $\log(  \frac{1}{1+z} ) \leq 2|z|$ when $z \in [-0.5,0]$.
\end{proof}

Then, we derive the upper bound on the expectation of truncated Gaussian noise. 
\begin{lemma}\label{lem:tru-gaus-exp}
    For any $\sigma \in [0,0.1]$,
    let $Z \sim \mathcal N^T(0,\sigma^2; [-0.5, 0.5])$. Then
    \begin{align*}
        \E[|Z|] \leq \sigma.
    \end{align*}
\end{lemma}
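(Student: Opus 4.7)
The plan is to compute $\E[|Z|]$ directly from the truncated Gaussian density given in Definition~\ref{def:truncated_noise_scalar_z} and bound it by relating the truncated integral to the untruncated Gaussian first absolute moment. Explicitly, using $\mu_0(z) = \tfrac{1}{\sigma C_\sigma}\phi(z/\sigma)$, I would write
\begin{align*}
\E[|Z|] = \frac{1}{C_\sigma}\int_{-0.5}^{0.5}|z|\cdot\frac{1}{\sigma}\phi(z/\sigma)\,dz,
\end{align*}
and then upper bound the integral by extending the range of integration to all of $\R$, which gives exactly $\E_{Y\sim\mathcal{N}(0,\sigma^2)}[|Y|] = \sigma\sqrt{2/\pi}$.

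Next, I would lower bound $C_\sigma$. Since $\sigma \leq 0.1$, we have $0.5/\sigma \geq 5$, so
\begin{align*}
C_\sigma = \Phi(0.5/\sigma) - \Phi(-0.5/\sigma) \geq 2\Phi(5)-1,
\end{align*}
which is numerically at least $1 - 10^{-6}$, and in particular larger than $\sqrt{2/\pi} \approx 0.798$. Combining this with the previous display yields
\begin{align*}
\E[|Z|] \leq \frac{\sigma\sqrt{2/\pi}}{C_\sigma} \leq \sigma,
\end{align*}
which is the desired inequality.

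The whole argument is essentially a two-line calculation, so there is no real obstacle; the only minor care needed is the numerical check that $C_\sigma \geq \sqrt{2/\pi}$, which uses only the hypothesis $\sigma \leq 0.1$ and a standard estimate on the Gaussian cdf at $5$. If one wishes to avoid quoting numerical values of $\Phi$, an alternative is to use the cruder bound $C_\sigma \geq 1 - 2\exp(-1/(8\sigma^2))$ from a Gaussian tail inequality, which is again comfortably above $\sqrt{2/\pi}$ once $\sigma \leq 0.1$.
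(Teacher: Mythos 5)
Your argument is correct and is essentially the same one the paper uses: both bound the truncated first absolute moment by the untruncated Gaussian's $\sigma\sqrt{2/\pi}$ (the paper via the explicit integral $\int_0^{0.5/\sigma}u\phi(u)\,\mathrm{d}u\le 1/\sqrt{2\pi}$, you by extending the integration range to $\mathbb{R}$), and both finish by observing that for $\sigma\le 0.1$ the normalizing constant $C_\sigma=\Phi(0.5/\sigma)-\Phi(-0.5/\sigma)$ is numerically close to $1$ and in particular exceeds $\sqrt{2/\pi}$.
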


\begin{proof}
    The pdf of $Z$ is
    \begin{align*}
        f(z) = \frac{1}{\sigma} \frac{\phi(z/\sigma)}{\Phi(0.5/\sigma)-\Phi(-0.5/\sigma)} \text{ for } z \in [-0.5,0.5]
    \end{align*}
    where $\phi$ and $\Phi$ are pdf and cdf of the standard Gaussian.

    Then we have
    \begin{align*}
        \E[|Z|] = &~ \int_{-0.5}^{0.5}|z|\frac{1}{\sigma}\frac{\phi(z/\sigma)}{\Phi(0.5/\sigma)-\Phi(-0.5/\sigma)} \d z \\
        = &~ \int_{0}^{0.5}2 z\frac{1}{\sigma}\frac{\phi(z/\sigma)}{\Phi(0.5/\sigma)-\Phi(-0.5/\sigma)} \d z \\
        = &~ \frac{1}{\sigma} \frac{2}{\Phi(0.5/\sigma)-\Phi(-0.5/\sigma)}\int_{0}^{0.5} z \phi(z/\sigma) \d z \\
        = &~ \frac{2\sigma}{\Phi(0.5/\sigma)-\Phi(-0.5/\sigma)}\int_{0}^{0.5/\sigma} u \phi(u) \d u
    \end{align*}
    where the first step follows from the definition of expectation, the second step follows from the symmetry of $Z$, the third step follows from simple rearranging, and the last step follows from $u:= z/\sigma$.

    Now, we evaluate the following integral:
    \begin{align*}
        \int_{0}^{0.5/\sigma} u \phi(u) \d u = &~ \frac{1-e^{-(0.5/\sigma)^2/2}}{\sqrt{2\pi}} \leq \frac{1}{\sqrt{2\pi}}.
    \end{align*}

    From standard Gaussian table, for $\sigma \leq 0.1$, we have
    \begin{align*}
        \Phi(0.5/\sigma)-\Phi(-0.5/\sigma) \geq \Phi(5)-\Phi(-5) \geq 0.999.
    \end{align*}

    Hence
    \begin{align*}
        \E[|Z|] \leq \frac{\sqrt{2/\pi}}{\Phi(0.5/\sigma)-\Phi(-0.5/\sigma)} \sigma \leq \sigma.
    \end{align*}
\end{proof}

Combining previous bounds, we can show that the error is caused by adding noise in one certain iteration.

\begin{lemma}\label{lem:err_trun_gaus}
We have
\begin{align*}
    \E [\log \frac{\ov{w}_{k,i}}{w_{k,i}}] \leq 2\sigma
\end{align*}
where the randomness is derived from the truncated Gaussian noise in $w_{k,i}$.
\end{lemma}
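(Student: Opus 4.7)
The plan is to reduce the expectation of the log ratio directly to the expectation of the absolute value of a single truncated Gaussian sample, and then invoke the two ingredients that have already been established just above: the pointwise bound on $\log(1/(1+z))$ in Fact~\ref{fac:log_ineq} and the first absolute moment bound in Lemma~\ref{lem:tru-gaus-exp}.

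First, I would unpack the update rule inside Algorithm~\ref{alg:john_ellipsoid} (lines~\ref{alg:line:noise1}--\ref{alg:line:noise2}): $w_{k,i} = \ov{w}_{k,i}(1 + z_{k,i})$ with $z_{k,i} \sim \mathcal{N}^T(0,\sigma^2,[-0.5,0.5])$. This immediately gives
\begin{align*}
\log \frac{\ov{w}_{k,i}}{w_{k,i}} = \log \frac{1}{1+z_{k,i}}.
\end{align*}
Since the truncation interval is $[-0.5, 0.5]$, the noise $z_{k,i}$ lies in the valid range for Fact~\ref{fac:log_ineq}, so pointwise we get $\log(1/(1+z_{k,i})) \leq 2|z_{k,i}|$.

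Next, I would take expectations of both sides with respect to the truncated Gaussian law of $z_{k,i}$, yielding $\E[\log(\ov{w}_{k,i}/w_{k,i})] \leq 2\,\E[|z_{k,i}|]$. Applying Lemma~\ref{lem:tru-gaus-exp} (which requires $\sigma \in [0, 0.1]$, an assumption consistent with the parameter regime chosen for Algorithm~\ref{alg:john_ellipsoid}) bounds $\E[|z_{k,i}|] \leq \sigma$ and closes the argument.

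There is essentially no serious obstacle here: the two hard pieces of work, namely the one-sided log inequality and the first-moment bound on the truncated Gaussian, have already been dispatched. The only subtlety worth flagging in the write-up is making sure the regime $\sigma \in [0,0.1]$ used in Lemma~\ref{lem:tru-gaus-exp} is compatible with the noise scale prescribed in Theorem~\ref{thm:je_dp_proof_informal}; this can be handled by assuming $\xi, \epsilon_0$ small enough so that the chosen $\sigma$ falls in that range, which is already implicit in the accuracy parameters.
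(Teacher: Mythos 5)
Your proposal is correct and follows essentially the same route as the paper: substitute the update rule $w_{k,i} = \ov{w}_{k,i}(1+z_{k,i})$, apply Fact~\ref{fac:log_ineq} pointwise, take expectations, and close with Lemma~\ref{lem:tru-gaus-exp}. Your extra note flagging the compatibility of the regime $\sigma \in [0,0.1]$ with the noise scale in Theorem~\ref{thm:je_dp_proof_informal} is a reasonable point of diligence that the paper leaves implicit, but the argument itself is identical.
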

\begin{proof}
    By Fact~\ref{fac:log_ineq} and Lemma~\ref{lem:tru-gaus-exp}, we have
    \begin{align*}
        \E [\log \frac{\ov{w}_{k,i}}{w_{k,i}}] = &~ \E [\log \frac{\ov{w}_{k,i}}{\ov{w}_{k,i}(1 + z_{k,i})}] \\
        = &~ \E [\log \frac{1}{1 + z_{k+1,i}}] \\
        \leq &~ \E [2\cdot|z_{k+1,i}|]  \\ 
        \leq &~ 2\sigma
    \end{align*}
    where the first step is from $w_{k,i} := \ov{w}_{k,i}(1 + z_{k,i})$, the second step stems from basic algebra, the third step comes from Fact~\ref{fac:log_ineq}, the last step follows from Lemma~\ref{lem:tru-gaus-exp}.
\end{proof}

Finally, applying concentration inequality, we can show the total error caused by truncated Gaussian noise in Algorithm~\ref{alg:john_ellipsoid}.
\begin{lemma}\label{lem:err_trun_gaus_whp}
Let $\delta$ be the failure probability. Then for any $\xi_2 \in [0, 0.1]$, if $T > \xi_2^{-2}\log(1/\delta)$, with probability $1-\delta$, we have
\begin{align*}
    \frac{1}{T}\sum_{k=1}^T \log \frac{\ov{w}_{k,i}}{w_{k,i}} \leq \xi_2.
\end{align*}
where the randomness is from the truncated Gaussian noise in $w_{k,i}$.
\end{lemma}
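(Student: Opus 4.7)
The plan is to apply Hoeffding's inequality (Lemma~\ref{lem:hoeffding_b}) to the $T$ independent random variables $X_k := \log(\ov{w}_{k,i}/w_{k,i}) = \log(1/(1+z_{k+1,i}))$, where $z_{k+1,i}$ is drawn from the truncated Gaussian on $[-0.5,0.5]$. The first step is to observe that since $z_{k+1,i} \in [-0.5,0.5]$, each $X_k$ lies in the bounded interval $[\log(2/3),\log 2]$, so Hoeffding applies with a constant range $R = O(1)$. The second step is to invoke Lemma~\ref{lem:err_trun_gaus}, which gives us the mean control $\E[X_k] \leq 2\sigma$; assuming the noise scale $\sigma$ is chosen sufficiently small (specifically $\sigma \leq \xi_2/4$, which is consistent with Algorithm~\ref{alg:john_ellipsoid} when the DP parameters are set so that the utility error is dominated by $\xi_2$), we have $\sum_{k=1}^T \E[X_k] \leq T\xi_2/2$.

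Next, I would apply Lemma~\ref{lem:hoeffding_b} to $S := \sum_{k=1}^T X_k$ with deviation $t = T\xi_2/2$. This yields
\begin{align*}
\Pr\!\left[\,S - \E[S] \geq \tfrac{T\xi_2}{2}\,\right] \leq \exp\!\left(-\frac{2 (T\xi_2/2)^2}{T \cdot R^2}\right) = \exp\!\left(-\Omega(T\xi_2^2)\right).
\end{align*}
Setting $T \geq c \xi_2^{-2}\log(1/\delta)$ for a suitable absolute constant $c$ makes this tail probability at most $\delta$. On this high-probability event,
\begin{align*}
\frac{1}{T}\sum_{k=1}^T X_k \;\leq\; \frac{1}{T}\E[S] + \frac{\xi_2}{2} \;\leq\; 2\sigma + \frac{\xi_2}{2} \;\leq\; \xi_2,
\end{align*}
which is exactly the claimed bound.

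The main subtlety, rather than a real obstacle, is justifying independence of the $X_k$ across iterations: the $z_{k+1,i}$ are drawn independently in Line~\ref{alg:line:noise1} of Algorithm~\ref{alg:john_ellipsoid}, and although $\ov{w}_{k,i}$ depends on past noise, the ratio $\ov{w}_{k,i}/w_{k,i} = 1/(1+z_{k+1,i})$ depends only on the fresh draw $z_{k+1,i}$, so the $X_k$ are mutually independent (or at worst form a martingale difference sequence after centering, in which case Azuma's inequality gives the same conclusion). A second minor point is that the lemma statement leaves $\sigma$ implicit; one should state that the claim holds provided $\sigma \leq \xi_2/4$, a condition that is automatically satisfied in the regime where Theorem~\ref{thm:je_convergence_informal} is invoked.
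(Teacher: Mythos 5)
Your proposal is correct and follows essentially the same route as the paper: both observe that $\log(\ov{w}_{k,i}/w_{k,i}) = \log(1/(1+z_{k+1,i}))$ is bounded in $[\log(2/3),\log 2]$, both invoke the mean bound $\E[X_k]\le 2\sigma$ from Lemma~\ref{lem:err_trun_gaus}, and both apply Hoeffding with a deviation of order $T\xi_2$ and then relate $\sigma$ and $\xi_2$ (the paper sets $\xi_2=4\sigma$ after the fact; you impose the equivalent $\sigma\le\xi_2/4$ up front). Your explicit note that the $X_k$ depend only on the fresh draws $z_{k+1,i}$, and hence are independent, is a useful clarification the paper leaves implicit.
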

\begin{proof} The range of $\log\frac{\ov{w}_{k,i}}{w_{k,i}}$ is $[\log(2/3), \log 2]$. Applying Hoeffding's inequality shows that for $t > 0$,
\begin{align*}
    \Pr[\sum_{k=1}^T \log \frac{\ov{w}_{k,i}}{w_{k,i}} \geq 2\sigma T + t] < \exp(-\frac{2t^2}{T(\log 3)^2}).
\end{align*}

Pick $t = 2\sigma T$. Then we get
\begin{align*}
    \Pr[\sum_{k=1}^T \log \frac{\ov{w}_{k,i}}{w_{k,i}} \geq 4\sigma T] < \exp(-\frac{8\sigma^2 T}{(\log 3)^2}).
\end{align*}

It implies that if $T > \sigma^{-2}\log(1/\delta)$, with probability $1-\delta$, we have
\begin{align*}
    \sum_{k=1}^T \log \frac{\ov{w}_{k,i}}{w_{k,i}} \leq 4\sigma T.
\end{align*}
Set $\xi_2 = 4\sigma$, we have
\begin{align*}
\frac{1}{T}\sum_{k=1}^T \log \frac{\ov{w}_{k,i}}{w_{k,i}} \leq \xi_2
\end{align*}
\end{proof}

\ifdefined\isarxiv
\subsection{Upper Bound of \texorpdfstring{$\phi_i$}{}} \label{app:sec:upb_phi}
\else
\subsection{Upper Bound of $\phi_i$} \label{app:sec:upb_phi}
\fi

In this subsection, we derive the upper bound of $\phi_i$ by combining the error bound for truncated Gaussian noise and the error bound of leverage score sampling and sketching.
\label{app:sec:je:proof_phi}
\begin{lemma}[$\phi_i$] \label{lem:app_ub_phi}
Consider the vector $u$ generated in Algorithm~\ref{alg:john_ellipsoid}, and let the number of iterations in the main loop of the algorithm be $T$ and $s = 1000/ \xi_1$. With probability $1-\delta$, the following inequality holds for all $i \in [n]$
\begin{align*}
    \phi_i(u) \leq \frac{1}{T} \log(\frac{n}{d}) + \xi_0 + \xi_1 + \xi_2.
\end{align*}
\end{lemma}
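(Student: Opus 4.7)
The plan is to combine the Telescoping Lemma (Lemma~\ref{lem:apptele}) with the three high-probability error bounds established earlier in the appendix. Specifically, Lemma~\ref{lem:apptele} already splits $\phi_i(u)$ into four pieces: the deterministic telescoping term $\frac{1}{T}\log(n/d)$, plus three averages measuring (i) the leverage-score-sampling error $\frac{1}{T}\sum_k \log(\hat{w}_{k,i}/\wt{w}_{k,i})$, (ii) the sketching error $\frac{1}{T}\sum_k \log(\wt{w}_{k,i}/\ov{w}_{k,i})$, and (iii) the truncated-Gaussian-noise error $\frac{1}{T}\sum_k \log(\ov{w}_{k,i}/w_{k,i})$. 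So the proof reduces to bounding each of the three error averages and summing.

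First I would invoke Lemma~\ref{lem:leverage_score_error_bound} to control the leverage-score-sampling term by $\xi_0$ with failure probability at most $\delta/3$ (per coordinate), using the hypothesis on $T$ implicit in the setting of the main algorithm. Next, I would apply Lemma~\ref{lem:sketching_error_bound} with the choice $s = 1000/\xi_1$ (which is exactly the sketching dimension required in Lemma~\ref{lem:old_sketching_error} to make the per-iteration failure probability $\leq \delta/n$), yielding that the sketching term is at most $\xi_1$ with failure probability at most $\delta/3$. Finally, I would apply Lemma~\ref{lem:err_trun_gaus_whp} to bound the truncated-Gaussian-noise term by $\xi_2$ with failure probability at most $\delta/3$; this lemma uses Hoeffding's inequality after the per-iteration expectation bound from Lemma~\ref{lem:err_trun_gaus}.

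The last step is a union bound. For a fixed $i \in [n]$, the probability that any of the three error terms exceeds its target is at most $\delta$, so with probability $\geq 1 - \delta$ the Telescoping Lemma gives
\begin{align*}
\phi_i(u) \leq \frac{1}{T}\log\Bigl(\frac{n}{d}\Bigr) + \xi_0 + \xi_1 + \xi_2.
\end{align*}
A second union bound over the $n$ coordinates (which can be absorbed by redefining $\delta$, or equivalently is already built into the per-coordinate failure probabilities $\delta/n$ used inside Lemmas~\ref{lem:sketching_error_bound} and~\ref{lem:leverage_score_error_bound}) yields the bound uniformly over all $i \in [n]$.

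There is no real obstacle here, since all the hard work has been done in the preceding lemmas: the Telescoping Lemma does the algebraic decomposition, and the three high-probability error bounds handle the three stochastic sources (leverage scores, sketching, and truncated Gaussian noise) separately. The only thing requiring mild care is the bookkeeping of failure probabilities: one must make sure that the sampling/sketching/noise events are treated on the same probability space and that the per-iteration and per-coordinate failure probabilities chosen in Lemmas~\ref{lem:leverage_score_error_bound}, \ref{lem:sketching_error_bound}, and \ref{lem:err_trun_gaus_whp} sum to at most $\delta$ after the final union bound over $i \in [n]$.
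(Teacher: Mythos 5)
Your proposal is correct and follows essentially the same route as the paper's proof: apply the telescoping decomposition from Lemma~\ref{lem:apptele}, then bound each of the three stochastic error averages by $\xi_0, \xi_1, \xi_2$ via Lemmas~\ref{lem:leverage_score_error_bound}, \ref{lem:sketching_error_bound}, and \ref{lem:err_trun_gaus_whp} respectively, each with failure probability $\delta/3$, and conclude by a union bound. Your extra remarks on bookkeeping the per-coordinate failure probabilities are sensible but do not change the argument.
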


\begin{proof}
To begin with, by Lemma~\ref{lem:apptele}, we have that
\begin{align*}
    \phi_i(u) \leq \frac{1}{T} \log \frac{n}{d} + \frac{1}{T} \sum_{k=1}^T \log \frac{\hat{w}_{k,i} }{\wt{w}_{k,i}} + \frac{1}{T} \sum_{k=1}^T \log \frac{\wt{w}_{k,i} }{\ov{w}_{k,i}} + \frac{1}{T} \sum_{k=1}^T \log \frac{\ov{w}_{k,i} }{w_{k,i}}
\end{align*}

By Lemma~\ref{lem:leverage_score_error_bound}, we have with probability $1-\delta/3$, for all $i\in [n]$:
\begin{align*}
    \frac{1}{T} \sum_{k=1}^T \log \frac{\hat{w}_{k,i} }{\wt{w}_{k,i}} \leq \xi_0
\end{align*}

By Lemma~\ref{lem:sketching_error_bound}, with probability $1-\delta/3$, the following holds for all $i\in [n]$:
\begin{align*}
    \frac{1}{T} \sum_{k=1}^T \log \frac{\wt{w}_{k,i} }{\ov{w}_{k,i}} \leq \xi_1
\end{align*}

Next, by Lemma~\ref{lem:err_trun_gaus_whp}, with probability $1-\delta/3$, for all $i\in [n]$:
\begin{align*}
    \frac{1}{T} \sum_{k=1}^T \log \frac{\ov{w}_{k,i} }{w_{k,i}} \leq \xi_2.
\end{align*}

Putting everything together, with $1-\delta$, for all $i \in [n]$, we have
\begin{align*}
    \phi_i(u) \leq \frac{1}{T} \log(\frac{n}{d}) + \xi_0 + \xi_1 + \xi_2.
\end{align*}

\end{proof}

\subsection{Convergence Result} \label{app:sec:je_convergence_main}
Finally, we proceed to the convergence main theorem, which demonstrates that Algorithm~\ref{alg:john_ellipsoid} could find a good approximation of John Ellipsoid with our choice of parameters.

\begin{theorem} [Convergence main theorem, formal version of Theorem~\ref{thm:je_convergence_informal}] \label{thm:je_convergence}
Let $u \in \R^n$ be the non-normalized output of Algorithm~\ref{alg:john_ellipsoid}, $\delta_0$ be the failure probability. For all $\xi \in (0, 1)$, when $T = \Theta(\xi^{-1}\log(n/d) + \xi^{-2}\log(1/\delta))$, we have
\begin{align*}
\Pr[h_i(u) \leq (1+\xi), \forall i \in [n]] \geq 1 - \delta_0
\end{align*}
In addition, 
\begin{align*}
\sum_{i=1}^n v_i = d
\end{align*}
Thus, Algorithm~\ref{alg:john_ellipsoid} finds $(1 + \xi)$-approximation to the exact John ellipsoid solution.
\end{theorem}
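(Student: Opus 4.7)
The plan is to route everything through Lemma~\ref{lem:app_ub_phi}, which already decomposes the iteration error into an initialization term plus the three noise sources (leverage-score sampling, sketching, and truncated Gaussian perturbation). Because $\phi_i(u) = \log h_i(u)$ by Lemma~\ref{lem:convexity_phi}, any additive bound on $\phi_i(u)$ exponentiates to a multiplicative bound on $h_i(u)$. So the whole theorem reduces to (i) calibrating $\xi_0,\xi_1,\xi_2$ and $T$ so that the right-hand side of Lemma~\ref{lem:app_ub_phi} falls below $\log(1+\xi)$ with probability $1-\delta_0$, and (ii) showing that the renormalization $v_i = d u_i / \sum_j u_j$ preserves the approximation quality while enforcing $\sum_i v_i = d$.

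\textbf{Parameter calibration.} I will set $\xi_0 = \xi_1 = \xi_2 = \xi/C$ for a universal constant $C$ (e.g.\ $C=8$), so that the three stochastic error terms in Lemma~\ref{lem:app_ub_phi} jointly contribute at most $3\xi/C \leq \xi/2$. The deterministic initialization term $\frac{1}{T}\log(n/d)$ is forced below $\xi/C$ by taking $T \geq \Omega(\xi^{-1}\log(n/d))$, which also simultaneously satisfies the sketching and leverage-score prerequisites $T > \xi_1^{-1}\log(n/d)$ and $T > \xi_0^{-1}\log(n/d)$ from Lemmas~\ref{lem:sketching_error_bound} and \ref{lem:leverage_score_error_bound}. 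The truncated Gaussian concentration in Lemma~\ref{lem:err_trun_gaus_whp} carries the stronger constraint $T > \xi_2^{-2}\log(1/\delta_0)$, contributing the $\xi^{-2}\log(1/\delta_0)$ term. Combining, the total iteration count becomes $T = \Theta(\xi^{-1}\log(n/d) + \xi^{-2}\log(1/\delta_0))$, matching the statement.

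\textbf{From $\phi_i(u)$ to $h_i(u) \leq 1+\xi$, and union bound.} Under the above choice of $T$, Lemma~\ref{lem:app_ub_phi} yields $\phi_i(u) \leq \xi/C + \xi/C + \xi/C + \xi/C \leq \log(1+\xi)$ (using $\log(1+\xi)\geq \xi - \xi^2/2 \geq \xi/2$ for $\xi \in (0,1)$), simultaneously for every $i \in [n]$, with probability at least $1-\delta_0$ after a union bound over the leverage-score, sketching, and noise failure events (each taken with failure probability $\delta_0/3$). Exponentiating gives $h_i(u) \leq 1+\xi$ for all $i$, which is the first claim.

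\textbf{Normalization and the main obstacle.} The second claim $\sum_i v_i = d$ follows tautologically from Line~\ref{line:v_i_normalize}. The delicate step is deducing $(1+\xi)$-approximation per Definition~\ref{def:approx_je}, which requires $h_i(v) \leq 1+\xi$, not just $h_i(u)\leq 1+\xi$. Since $v = \alpha u$ for $\alpha = d / \sum_j u_j$, we have the clean identity $h_i(v) = h_i(u)/\alpha = (\sum_j u_j / d)\cdot h_i(u)$, so it suffices to show $\sum_j u_j \leq d$ (or at worst $\sum_j u_j \leq d(1+o(\xi))$ after absorbing into constants). Proposition~\ref{prop:w_k_bound} pins $\sum_i \hat{w}_{k,i} = d$ for every iteration $k$, and the same high-probability control from Corollary~\ref{cor:1_eps_approx_weights}, Lemma~\ref{lem:old_sketching_error}, and the zero-mean truncated Gaussian guarantee together imply that each $w_{k,i}$ differs from $\hat{w}_{k,i}$ by a $(1\pm O(\xi))$ multiplicative factor. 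Averaging over $k$ and summing over $i$ then yields $\sum_j u_j = d(1 \pm O(\xi))$, so $\alpha = 1 \pm O(\xi)$ and $h_i(v) \leq (1+\xi)(1+O(\xi)) \leq 1+\xi'$ after rescaling $\xi$ by a constant. The tight bookkeeping here---making sure the constants in the three error budgets, the exponentiation step, and the normalization step all compose to give exactly $1+\xi$ rather than $1+O(\xi)$---is the principal nuisance; everything else is a direct plug-in of the prepared lemmas.
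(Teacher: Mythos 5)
Your proposal is correct and takes essentially the same route as the paper: route through Lemma~\ref{lem:app_ub_phi}, split the budget as $\xi_0=\xi_1=\xi_2=\xi/8$, balance the deterministic $\frac{1}{T}\log(n/d)$ term against the $\xi^{-2}\log(1/\delta_0)$ constraint from Lemma~\ref{lem:err_trun_gaus_whp}, exponentiate $\phi_i(u)\le\log(1+\xi)$ to get $h_i(u)\le 1+\xi$, and then use the rescaling identity $h_i(v)=\frac{\sum_j u_j}{d}\,h_i(u)$ to pass to $v$. The one place you are actually more careful than the paper is the normalization step: the paper's proof bounds $\sum_j u_j/d\le 1+\epsilon$ by citing Proposition~\ref{prop:w_k_bound}, but that proposition only guarantees $\sum_i\hat{w}_{k,i}=d$ for the \emph{ideal} iterates, whereas $u$ averages the \emph{perturbed} $w_{k,i}$; one really does need the high-probability multiplicative control from Corollary~\ref{cor:1_eps_approx_weights}, Lemma~\ref{lem:old_sketching_error}, and Lemma~\ref{lem:err_trun_gaus_whp} to conclude $\sum_j u_j = d(1\pm O(\xi))$, exactly as you say. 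Your final remark that the constants compose to $1+O(\xi)$ rather than literally $1+\xi$, requiring a constant rescaling of $\xi$, is also consistent with the paper, whose proof in fact ends at $(1+\epsilon)^2$ with the same implicit absorption of constants.
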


\begin{proof}
We set
\begin{align*}
\xi_0 = \xi_1 = \xi_2 = \frac{\xi}{8}
\end{align*}
and
\begin{align*}
T = \Theta (\xi^{-1}\log(n/\delta) + \sigma^{-2}\log(1/\delta))
\end{align*}
By Lemma~\ref{lem:app_ub_phi}, with succeed probability $1-\delta$. We have for all $i \in [n]$,
\begin{align*}
\log h_i(u) 
= & ~ \phi_i(u) \\
\leq & ~ \frac{1}{T} \log(\frac{n}{d}) + \xi_0 + \xi_1 + \xi_2 \\
\leq & ~ \frac{\xi}{2} \\
\leq & ~ \log (1+\xi)
\end{align*}
where the first step comes from the definition of $\phi$, the second step utilizes Lemma~\ref{lem:app_ub_phi}, the third step comes from our choice of $T$ and $\xi$, and the last step follows by for all $\xi \in [0,1], \frac{\xi}{2} \leq \log (1+\xi)$

Since we choose $v_i = \frac{d}{\sum_{j=1}^n u_j} u_i$, then we have
\begin{align*} 
    \sum_{i=1}^n v_i = d.
\end{align*}
Next, we have
\begin{align*}
    h_i(v) = & ~ a_i^\top ( A^\top V A )^{-1} a_i \\
    = & ~ a_i^\top ( \frac{d}{ \sum_{i=1}^n u_i }  A^\top U A )^{-1} a_i \\
    = & ~ \frac{ \sum_{i=1}^n u_i }{d} h_i(u) \\
    \leq & ~ (1+\epsilon) \cdot h_i(u) \\
    \leq & ~ (1+\epsilon)^2
\end{align*}
 
where the first step comes from the definition of $h_i(v)$, the second step follows by the definition of $V$, the third step comes from the definition of $h_i(u)$, the fourth step comes from Lemma~\ref{prop:w_k_bound}, the last step derives from $h_i(u) \leq 1+\epsilon$.

Thus, we complete the proof.
\end{proof}

\section{Proof of Main Theorem}~\label{app:main_proof}

In this section, we introduce our main theorem, which shows that while satisfying privacy guarantee, our Algorithm~\ref{alg:john_ellipsoid} achieves high accuracy and efficient running time.

\begin{theorem} [Main Results, formal version of Theorem~\ref{thm:main_informal}]\label{thm:main}
Let $v \in \R^n$ be the result from Algorithm~\ref{alg:john_ellipsoid}.  Define $L$ as in Theorem~\ref{thm:leverage_pertub_informal}. For all $\xi, \delta_0 \in (0, 0.1)$, when $T = \Theta(\xi^{-2}(\log(n/\delta_0) + (L\epsilon_0)^{-2}))$, the inequality below holds for all $i \in [n]$:
\begin{align*}
\Pr[h_i(v) \leq (1+\xi)] \geq 1 - \delta_0
\end{align*}
In addition, 
\begin{align*}
\sum_{i=1}^n v_i = d
\end{align*}
Thus, Algorithm~\ref{alg:john_ellipsoid} gives $(1 + \xi)$-approximation to the exact John ellipsoid.

Furthermore, suppose the input polytope in Algorithm~\ref{alg:john_ellipsoid} represented by $A \in \R^{n \times d}$ satisfies $\sigma_{\max}(A) \leq \poly(n)$ and $\sigma_{\min}(A) \geq 1/\poly(n)$. Let $\epsilon_0 \leq O(1/\poly(n))$ be the closeness of the neighboring polytopes defined in Definition~\ref{def:beta_neighbor} and $L \leq O(\poly(n))$ be the Lipschitz defined in Theorem~\ref{thm:leverage_pertub_informal}. Then there exists constant $c_1$ and $c_2$ so that given number of iterations $T$, for any $\epsilon \leq c_1 T L^2 \epsilon_0^2 (1-2L\epsilon_0)^{-1}$, Algorithm~\ref{alg:john_ellipsoid} is $(\epsilon, \delta)$-differentially private for any $\delta > 0$ if we choose

\begin{align*}
\sigma \geq \frac{c_2 L\epsilon_0 \sqrt{T \log(1/\delta)}}{(1-2L\epsilon_0)\epsilon}
\end{align*}
The runtime of Algorithm~\ref{alg:john_ellipsoid} achieves $O((\nnz(A) + d^\omega)T)$, where $\omega \approx 2.37$ represent the matrix-multiplication exponent. 
\end{theorem}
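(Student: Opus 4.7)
The plan is to assemble the main theorem from three components already established in the paper — convergence, differential privacy, and runtime — and then verify that the parameter choices for $T$ and $\sigma$ are internally consistent. The $(1+\xi)$-approximation direction and the $(\epsilon,\delta)$-DP direction can be handled independently before being combined.

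First, for the approximation claim, I would invoke the convergence result (Theorem~\ref{thm:je_convergence}) on the non-normalized output $u$. Its proof rests on the telescoping decomposition of Lemma~\ref{lem:apptele}, which splits $\phi_i(u)$ into the ideal $\tfrac{1}{T}\log(n/d)$ term plus the leverage-score sampling error, the sketching error, and the truncated Gaussian noise error. I would bound each of the last three by $\xi/8$ using Lemma~\ref{lem:leverage_score_error_bound}, Lemma~\ref{lem:sketching_error_bound}, and Lemma~\ref{lem:err_trun_gaus_whp}, respectively, and then normalize via Line~\ref{line:v_i_normalize} together with Proposition~\ref{prop:w_k_bound} to conclude $\sum_i v_i = d$ and $h_i(v) \le (1+\xi)$ with probability $1-\delta_0$. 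Next, for the DP direction, I would invoke Theorem~\ref{thm:je_dp_proof}: apply the Lipschitz bound from Theorem~\ref{thm:leverage_pertub} to see that $\epsilon_0$-close neighboring polytopes induce weighted leverage scores differing by at most $L\epsilon_0$ in $\ell_2$; use the truncated-Gaussian moment bound (Lemma~\ref{lem:je_alpha_bound_truncated}) with $\beta = L\epsilon_0$ to control $\alpha_{\mathcal{M}_k}(\lambda)$ per iteration; apply the adaptive composition theorem (Theorem~\ref{thm:composition_bound}) to aggregate across $T$ iterations; and finally invert the tail bound to solve for the required $\sigma$.

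The runtime bound is routine: each iteration computes an $O(1)$-approximate leverage score of $B_k = \sqrt{W_k}A$ in $O(\nnz(A)+d^\omega)$ time, constructs the sampling matrix $D_k$ and the $s\times d$ Gaussian sketch $S_k$, inverts the $d\times d$ matrix $B_k^\top D_k B_k$ in $O(d^\omega)$ time, and then forms $n$ scalar weights with one truncated Gaussian draw each, giving $O(\nnz(A)+d^\omega)$ per iteration and $O((\nnz(A)+d^\omega)T)$ overall.

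The main obstacle I expect is the consistency check between the two halves. The noise scale $\sigma$ appears both on the convergence side, where Lemma~\ref{lem:err_trun_gaus_whp} demands roughly $T \gtrsim \sigma^{-2}\log(1/\delta_0)$, and on the privacy side, where Theorem~\ref{thm:je_dp_proof} forces $\sigma = \Theta(L\epsilon_0\sqrt{T\log(1/\delta)}/\epsilon)$; substituting the latter into the former is precisely what surfaces the $(L\epsilon_0)^{-2}$ contribution to the iteration count. I also need the input-scale assumptions $\sigma_{\max}(A)\le \poly(n)$, $\sigma_{\min}(A)\ge 1/\poly(n)$, and $\epsilon_0 \le O(1/\poly(n))$ to ensure $L \le \poly(n)$ and $L\epsilon_0 \le 1/2$, so that $\gamma_{L\epsilon_0,\sigma}$ stays bounded and the hypotheses $\sigma \ge \beta$ and $\lambda \le 1/(4\gamma_{\beta,\sigma})$ of Lemma~\ref{lem:je_alpha_bound_truncated} are actually realizable. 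Verifying that the stated $T$ and $\sigma$ satisfy all of these inequalities simultaneously is the delicate bookkeeping step; once done, the three components snap together to give the theorem.
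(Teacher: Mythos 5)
Your proposal matches the paper's own proof, which simply combines Theorem~\ref{thm:je_convergence}, Theorem~\ref{thm:je_dp_proof}, and the runtime analysis from~\cite{syyz22}. Your added ``consistency check'' correctly identifies the bookkeeping step the paper leaves implicit: substituting the privacy-mandated noise scale $\sigma = \Theta(L\epsilon_0\sqrt{T\log(1/\delta)}/((1-2L\epsilon_0)\epsilon))$ into the convergence requirement $T \gtrsim \sigma^{-2}\log(1/\delta_0)$ from Lemma~\ref{lem:err_trun_gaus_whp} is precisely what produces the $(L\epsilon_0)^{-2}$ term in the stated iteration count and reconciles the $T$ in Theorem~\ref{thm:je_convergence} with the $T$ in the main theorem.
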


\begin{proof}
It derives from Theorem~\ref{thm:je_convergence} and Theorem~\ref{thm:je_dp_proof}. The running time analysis is the same as Algorithm 1 in~\cite{syyz22}.
\end{proof}

\ifdefined\isarxiv

\bibliographystyle{alpha}
\bibliography{ref}

\newcommand{\etalchar}[1]{$^{#1}$}
\begin{thebibliography}{AZLSW17}

\bibitem[ACG{\etalchar{+}}16]{acg+16}
Martin Abadi, Andy Chu, Ian Goodfellow, H~Brendan McMahan, Ilya Mironov, Kunal Talwar, and Li~Zhang.
\newblock Deep learning with differential privacy.
\newblock In {\em Proceedings of the 2016 ACM SIGSAC conference on computer and communications security}, pages 308--318, 2016.

\bibitem[ADW{\etalchar{+}}25]{adw+25}
Josh Alman, Ran Duan, Virginia~Vassilevska Williams, Yinzhan Xu, Zixuan Xu, and Renfei Zhou.
\newblock More asymmetry yields faster matrix multiplication.
\newblock In {\em Proceedings of the 2025 Annual ACM-SIAM Symposium on Discrete Algorithms (SODA)}. SIAM, 2025.

\bibitem[AIMN23]{aimn23}
Alexandr Andoni, Piotr Indyk, Sepideh Mahabadi, and Shyam Narayanan.
\newblock Differentially private approximate near neighbor counting in high dimensions.
\newblock In {\em Advances in Neural Information Processing Systems (NeurIPS)}, pages 43544--43562, 2023.

\bibitem[AKK{\etalchar{+}}17]{akkl+17}
Naman Agarwal, Sham Kakade, Rahul Kidambi, Yin~Tat Lee, Praneeth Netrapalli, and Aaron Sidford.
\newblock Leverage score sampling for faster accelerated regression and erm.
\newblock {\em arXiv preprint arXiv:1711.08426}, 2017.

\bibitem[AND{\etalchar{+}}24]{and+24}
Caridad~Arroyo Arevalo, Sayedeh~Leila Noorbakhsh, Yun Dong, Yuan Hong, and Binghui Wang.
\newblock Task-agnostic privacy-preserving representation learning for federated learning against attribute inference attacks.
\newblock In {\em Proceedings of the AAAI Conference on Artificial Intelligence}, volume~38, pages 10909--10917, 2024.

\bibitem[Atw69]{atw69}
Corwin~L Atwood.
\newblock Optimal and efficient designs of experiments.
\newblock {\em The Annals of Mathematical Statistics}, pages 1570--1602, 1969.

\bibitem[AW21]{aw21}
Josh Alman and Virginia~Vassilevska Williams.
\newblock A refined laser method and faster matrix multiplication.
\newblock In {\em Proceedings of the 2021 ACM-SIAM Symposium on Discrete Algorithms (SODA)}, pages 522--539. SIAM, 2021.

\bibitem[AZLSW17]{zlsw17}
Zeyuan Allen-Zhu, Yuanzhi Li, Aarti Singh, and Yining Wang.
\newblock Near-optimal design of experiments via regret minimization.
\newblock In {\em International Conference on Machine Learning}, pages 126--135. PMLR, 2017.

\bibitem[BCBK12]{bck12}
S{\'e}bastien Bubeck, Nicolo Cesa-Bianchi, and Sham~M Kakade.
\newblock Towards minimax policies for online linear optimization with bandit feedback.
\newblock In {\em Conference on Learning Theory}, pages 41--1. JMLR Workshop and Conference Proceedings, 2012.

\bibitem[BKM{\etalchar{+}}22]{bkm+22}
Amos Beimel, Haim Kaplan, Yishay Mansour, Kobbi Nissim, Thatchaphol Saranurak, and Uri Stemmer.
\newblock Dynamic algorithms against an adaptive adversary: generic constructions and lower bounds.
\newblock In {\em Proceedings of the 54th Annual ACM SIGACT Symposium on Theory of Computing}, pages 1671--1684, 2022.

\bibitem[BLSS20]{blss21}
Jan van~den Brand, Yin~Tat Lee, Aaron Sidford, and Zhao Song.
\newblock Solving tall dense linear programs in nearly linear time.
\newblock In {\em Proceedings of the 52nd Annual ACM SIGACT Symposium on Theory of Computing}, pages 775--788, 2020.

\bibitem[Bra20]{b20}
Jan van~den Brand.
\newblock A deterministic linear program solver in current matrix multiplication time.
\newblock In {\em Proceedings of the Fourteenth Annual ACM-SIAM Symposium on Discrete Algorithms}, pages 259--278. SIAM, 2020.

\bibitem[BWZ16]{bwz16}
Christos Boutsidis, David~P Woodruff, and Peilin Zhong.
\newblock Optimal principal component analysis in distributed and streaming models.
\newblock In {\em Proceedings of the forty-eighth annual ACM symposium on Theory of Computing}, pages 236--249, 2016.

\bibitem[CAFL{\etalchar{+}}22]{cfl+22}
Vincent Cohen-Addad, Chenglin Fan, Silvio Lattanzi, Slobodan Mitrovic, Ashkan Norouzi-Fard, Nikos Parotsidis, and Jakub~M Tarnawski.
\newblock Near-optimal correlation clustering with privacy.
\newblock {\em Advances in Neural Information Processing Systems}, 35:33702--33715, 2022.

\bibitem[CCG24]{ccg24}
E~Chen, Yang Cao, and Yifei Ge.
\newblock A generalized shuffle framework for privacy amplification: Strengthening privacy guarantees and enhancing utility.
\newblock In {\em Proceedings of the AAAI Conference on Artificial Intelligence}, volume~38, pages 11267--11275, 2024.

\bibitem[CCLY19]{ccly19}
Michael~B Cohen, Ben Cousins, Yin~Tat Lee, and Xin Yang.
\newblock A near-optimal algorithm for approximating the john ellipsoid.
\newblock In {\em Conference on Learning Theory}, pages 849--873. PMLR, 2019.

\bibitem[CDWY18]{cdwy18}
Yuansi Chen, Raaz Dwivedi, Martin~J Wainwright, and Bin Yu.
\newblock Fast mcmc sampling algorithms on polytopes.
\newblock {\em Journal of Machine Learning Research}, 19(55):1--86, 2018.

\bibitem[CLS21]{cls19}
Michael~B Cohen, Yin~Tat Lee, and Zhao Song.
\newblock Solving linear programs in the current matrix multiplication time.
\newblock {\em Journal of the ACM (JACM)}, 68(1):1--39, 2021.

\bibitem[CNX22]{cnx22}
Justin~Y Chen, Shyam Narayanan, and Yinzhan Xu.
\newblock All-pairs shortest path distances with differential privacy: Improved algorithms for bounded and unbounded weights.
\newblock {\em arXiv preprint arXiv:2204.02335}, 2022.

\bibitem[CPH24]{akkl+20}
Neophytos Charalambides, Mert Pilanci, and Alfred~O Hero.
\newblock Gradient coding with iterative block leverage score sampling.
\newblock {\em IEEE Transactions on Information Theory}, 2024.

\bibitem[CSW{\etalchar{+}}23]{csw+23}
Yeshwanth Cherapanamjeri, Sandeep Silwal, David~P Woodruff, Fred Zhang, Qiuyi Zhang, and Samson Zhou.
\newblock Robust algorithms on adaptive inputs from bounded adversaries.
\newblock {\em arXiv preprint arXiv:2304.07413}, 2023.

\bibitem[CW17]{cw13}
Kenneth~L Clarkson and David~P Woodruff.
\newblock Low-rank approximation and regression in input sparsity time.
\newblock {\em Journal of the ACM (JACM)}, 63(6):1--45, 2017.

\bibitem[CXZ24]{cxz24}
TH~Hubert Chan, Hao Xie, and Mengshi Zhao.
\newblock Privacy amplification by iteration for admm with (strongly) convex objective functions.
\newblock In {\em Proceedings of the AAAI Conference on Artificial Intelligence}, volume~38, pages 11204--11211, 2024.

\bibitem[Dan51]{dan51}
George~B Dantzig.
\newblock Maximization of a linear function of variables subject to linear inequalities.
\newblock {\em Activity analysis of production and allocation}, 13:339--347, 1951.

\bibitem[DB16]{db16}
Alexey Dosovitskiy and Thomas Brox.
\newblock Inverting visual representations with convolutional networks.
\newblock In {\em Proceedings of the IEEE conference on computer vision and pattern recognition}, pages 4829--4837, 2016.

\bibitem[DCL{\etalchar{+}}24]{dcl+24}
Wei Dong, Zijun Chen, Qiyao Luo, Elaine Shi, and Ke~Yi.
\newblock Continual observation of joins under differential privacy.
\newblock {\em Proceedings of the ACM on Management of Data}, 2(3):1--27, 2024.

\bibitem[DHL17]{dhl17}
Fabrizio Dabbene, Didier Henrion, and Constantino~M Lagoa.
\newblock Simple approximations of semialgebraic sets and their applications to control.
\newblock {\em Automatica}, 78:110--118, 2017.

\bibitem[DKM06]{dkm06}
Petros Drineas, Ravi Kannan, and Michael~W Mahoney.
\newblock Fast monte carlo algorithms for matrices i: Approximating matrix multiplication.
\newblock {\em SIAM Journal on Computing}, 36(1):132--157, 2006.

\bibitem[DLS23]{dls23}
Yichuan Deng, Zhihang Li, and Zhao Song.
\newblock An improved sample complexity for rank-1 matrix sensing.
\newblock {\em arXiv preprint arXiv:2303.06895}, 2023.

\bibitem[DMM06]{dmm06}
Petros Drineas, Michael~W Mahoney, and Shan Muthukrishnan.
\newblock Sampling algorithms for $l_2$ regression and applications.
\newblock In {\em Proceedings of the seventeenth annual ACM-SIAM symposium on Discrete algorithm}, pages 1127--1136, 2006.

\bibitem[DMNS06]{dmns06}
Cynthia Dwork, Frank McSherry, Kobbi Nissim, and Adam Smith.
\newblock Calibrating noise to sensitivity in private data analysis.
\newblock In {\em Theory of Cryptography: Third Theory of Cryptography Conference, TCC 2006, New York, NY, USA, March 4-7, 2006. Proceedings 3}, pages 265--284. Springer, 2006.

\bibitem[DR{\etalchar{+}}14]{dr14}
Cynthia Dwork, Aaron Roth, et~al.
\newblock The algorithmic foundations of differential privacy.
\newblock {\em Foundations and Trends{\textregistered} in Theoretical Computer Science}, 9(3--4):211--407, 2014.

\bibitem[DS08]{ds08}
Samuel~I Daitch and Daniel~A Spielman.
\newblock Faster approximate lossy generalized flow via interior point algorithms.
\newblock In {\em Proceedings of the fortieth annual ACM symposium on Theory of computing}, pages 451--460, 2008.

\bibitem[DSSW18]{dssw18}
Huaian Diao, Zhao Song, Wen Sun, and David Woodruff.
\newblock Sketching for kronecker product regression and p-splines.
\newblock In {\em International Conference on Artificial Intelligence and Statistics}, pages 1299--1308. PMLR, 2018.

\bibitem[EKKL20]{ekkl20}
Marek Eli{\'a}{\v{s}}, Michael Kapralov, Janardhan Kulkarni, and Yin~Tat Lee.
\newblock Differentially private release of synthetic graphs.
\newblock In {\em Proceedings of the Fourteenth Annual ACM-SIAM Symposium on Discrete Algorithms}, pages 560--578. SIAM, 2020.

\bibitem[EMM20]{erdelyi2020fourier}
Tam{\'a}s Erd{\'e}lyi, Cameron Musco, and Christopher Musco.
\newblock Fourier sparse leverage scores and approximate kernel learning.
\newblock {\em Advances in Neural Information Processing Systems}, 33:109--122, 2020.

\bibitem[EMN22]{emn21}
Hossein Esfandiari, Vahab Mirrokni, and Shyam Narayanan.
\newblock Tight and robust private mean estimation with few users.
\newblock In {\em International Conference on Machine Learning}, pages 16383--16412. PMLR, 2022.

\bibitem[EMNZ24]{emnz24}
Alessandro Epasto, Vahab Mirrokni, Shyam Narayanan, and Peilin Zhong.
\newblock $ k $-means clustering with distance-based privacy.
\newblock {\em Advances in Neural Information Processing Systems}, 36, 2024.

\bibitem[FHS22]{fhs22}
Alireza Farhadi, MohammadTaghi Hajiaghayi, and Elaine Shi.
\newblock Differentially private densest subgraph.
\newblock In {\em International Conference on Artificial Intelligence and Statistics}, pages 11581--11597. PMLR, 2022.

\bibitem[FJR15]{fjr15}
Matt Fredrikson, Somesh Jha, and Thomas Ristenpart.
\newblock Model inversion attacks that exploit confidence information and basic countermeasures.
\newblock In {\em Proceedings of the 22nd ACM SIGSAC conference on computer and communications security}, pages 1322--1333, 2015.

\bibitem[FL22]{fl22}
Chenglin Fan and Ping Li.
\newblock Distances release with differential privacy in tree and grid graph.
\newblock In {\em 2022 IEEE International Symposium on Information Theory (ISIT)}, pages 2190--2195. IEEE, 2022.

\bibitem[FLL24]{fll24}
Chenglin Fan, Ping Li, and Xiaoyun Li.
\newblock k-median clustering via metric embedding: towards better initialization with differential privacy.
\newblock {\em Advances in Neural Information Processing Systems}, 36, 2024.

\bibitem[GDGK20]{gdgk20}
Quan Geng, Wei Ding, Ruiqi Guo, and Sanjiv Kumar.
\newblock Tight analysis of privacy and utility tradeoff in approximate differential privacy.
\newblock In {\em International Conference on Artificial Intelligence and Statistics}, pages 89--99. PMLR, 2020.

\bibitem[GLL22]{gll22}
Sivakanth Gopi, Yin~Tat Lee, and Daogao Liu.
\newblock Private convex optimization via exponential mechanism.
\newblock In {\em Conference on Learning Theory}, pages 1948--1989. PMLR, 2022.

\bibitem[GLL{\etalchar{+}}23]{gll+23}
Sivakanth Gopi, Yin~Tat Lee, Daogao Liu, Ruoqi Shen, and Kevin Tian.
\newblock Private convex optimization in general norms.
\newblock In {\em Proceedings of the 2023 Annual ACM-SIAM Symposium on Discrete Algorithms (SODA)}, pages 5068--5089. SIAM, 2023.

\bibitem[GLS{\etalchar{+}}24]{gls+24e}
Jiuxiang Gu, Yingyu Liang, Zhizhou Sha, Zhenmei Shi, and Zhao Song.
\newblock Differential privacy mechanisms in neural tangent kernel regression.
\newblock {\em arXiv preprint arXiv:2407.13621}, 2024.

\bibitem[GPC24]{gpc24}
Filippo Galli, Catuscia Palamidessi, and Tommaso Cucinotta.
\newblock Online sensitivity optimization in differentially private learning.
\newblock In {\em Proceedings of the AAAI Conference on Artificial Intelligence}, volume~38, pages 12109--12117, 2024.

\bibitem[GS22]{gs22}
Yuzhou Gu and Zhao Song.
\newblock A faster small treewidth sdp solver.
\newblock {\em arXiv preprint arXiv:2211.06033}, 2022.

\bibitem[GSWY23]{gswy23}
Yeqi Gao, Zhao Song, Weixin Wang, and Junze Yin.
\newblock A fast optimization view: Reformulating single layer attention in llm based on tensor and svm trick, and solving it in matrix multiplication time.
\newblock {\em arXiv preprint arXiv:2309.07418}, 2023.

\bibitem[GSZ23]{gsz23}
Yuzhou Gu, Zhao Song, and Lichen Zhang.
\newblock A nearly-linear time algorithm for structured support vector machines.
\newblock {\em arXiv preprint arXiv:2307.07735}, 2023.

\bibitem[GU18]{gu18}
Fran{\c{c}}ois~Le Gall and Florent Urrutia.
\newblock Improved rectangular matrix multiplication using powers of the coppersmith-winograd tensor.
\newblock In {\em Proceedings of the Twenty-Ninth Annual ACM-SIAM Symposium on Discrete Algorithms (SODA)}, pages 1029--1046. SIAM, 2018.

\bibitem[GWF{\etalchar{+}}24]{gwf24+}
Dashan Gao, Sheng Wan, Lixin Fan, Xin Yao, and Qiang Yang.
\newblock Complementary knowledge distillation for robust and privacy-preserving model serving in vertical federated learning.
\newblock {\em Proceedings of the AAAI Conference on Artificial Intelligence}, 38(18):19832--19839, Mar. 2024.

\bibitem[HGS{\etalchar{+}}21]{hgs+21}
Yangsibo Huang, Samyak Gupta, Zhao Song, Kai Li, and Sanjeev Arora.
\newblock Evaluating gradient inversion attacks and defenses in federated learning.
\newblock {\em Advances in neural information processing systems}, 34:7232--7241, 2021.

\bibitem[HJS{\etalchar{+}}22a]{huang2022faster}
Baihe Huang, Shunhua Jiang, Zhao Song, Runzhou Tao, and Ruizhe Zhang.
\newblock A faster quantum algorithm for semidefinite programming via robust ipm framework.
\newblock {\em arXiv preprint arXiv:2207.11154}, 2022.

\bibitem[HJS{\etalchar{+}}22b]{huang2022solving}
Baihe Huang, Shunhua Jiang, Zhao Song, Runzhou Tao, and Ruizhe Zhang.
\newblock Solving sdp faster: A robust ipm framework and efficient implementation.
\newblock In {\em 2022 IEEE 63rd Annual Symposium on Foundations of Computer Science (FOCS)}, pages 233--244. IEEE, 2022.

\bibitem[HK16]{hk16}
Elad Hazan and Zohar Karnin.
\newblock Volumetric spanners: an efficient exploration basis for learning.
\newblock {\em Journal of Machine Learning Research}, 2016.

\bibitem[HLL{\etalchar{+}}24]{hll+24}
Jerry Yao-Chieh Hu, Erzhi Liu, Han Liu, Zhao Song, and Lichen Zhang.
\newblock On differentially private string distances.
\newblock {\em arXiv preprint arXiv:2411.05750}, 2024.

\bibitem[Hoe63]{h63}
Wassily Hoeffding.
\newblock Probability inequalities for sums of bounded random variables.
\newblock {\em Journal of the American statistical association}, 58(301):13--30, 1963.

\bibitem[HSC{\etalchar{+}}20]{hsc+20}
Yangsibo Huang, Zhao Song, Danqi Chen, Kai Li, and Sanjeev Arora.
\newblock Texthide: Tackling data privacy in language understanding tasks.
\newblock In {\em Findings of the Association for Computational Linguistics: EMNLP 2020}, pages 1368--1382, 2020.

\bibitem[HSLA20]{hlsa20}
Yangsibo Huang, Zhao Song, Kai Li, and Sanjeev Arora.
\newblock Instahide: Instance-hiding schemes for private distributed learning.
\newblock In {\em International conference on machine learning}, pages 4507--4518. PMLR, 2020.

\bibitem[HSR{\etalchar{+}}20]{hsr+20}
Yangsibo Huang, Yushan Su, Sachin Ravi, Zhao Song, Sanjeev Arora, and Kai Li.
\newblock Privacy-preserving learning via deep net pruning.
\newblock {\em arXiv preprint arXiv:2003.01876}, 2020.

\bibitem[Hua18]{h18}
Han Huang.
\newblock John ellipsoid and the center of mass of a convex body.
\newblock {\em Discrete \& Computational Geometry}, 60:809--830, 2018.

\bibitem[HY21]{hy21}
Ziyue Huang and Ke~Yi.
\newblock Approximate range counting under differential privacy.
\newblock In {\em 37th International Symposium on Computational Geometry (SoCG 2021)}. Schloss-Dagstuhl-Leibniz Zentrum f{\"u}r Informatik, 2021.

\bibitem[JKL{\etalchar{+}}20]{jkl+20}
Haotian Jiang, Tarun Kathuria, Yin~Tat Lee, Swati Padmanabhan, and Zhao Song.
\newblock A faster interior point method for semidefinite programming.
\newblock In {\em 2020 IEEE 61st annual symposium on foundations of computer science (FOCS)}, pages 910--918. IEEE, 2020.

\bibitem[JLN{\etalchar{+}}19]{jln+19}
Christopher Jung, Katrina Ligett, Seth Neel, Aaron Roth, Saeed Sharifi-Malvajerdi, and Moshe Shenfeld.
\newblock A new analysis of differential privacy's generalization guarantees.
\newblock {\em arXiv preprint arXiv:1909.03577}, 2019.

\bibitem[Joh48]{joh48}
Fritz John.
\newblock Extremum problems with inequalities as subsidiary conditions, studies and essays presented to r. courant on his 60th birthday, january 8, 1948, 1948.

\bibitem[JSWZ21]{jswz21}
Shunhua Jiang, Zhao Song, Omri Weinstein, and Hengjie Zhang.
\newblock Faster dynamic matrix inverse for faster lps.
\newblock In {\em STOC}, 2021.

\bibitem[Kar84]{k84}
Narendra Karmarkar.
\newblock A new polynomial-time algorithm for linear programming.
\newblock In {\em Proceedings of the sixteenth annual ACM symposium on Theory of computing}, pages 302--311, 1984.

\bibitem[Kha96]{kha96}
Leonid~G Khachiyan.
\newblock Rounding of polytopes in the real number model of computation.
\newblock {\em Mathematics of Operations Research}, 21(2):307--320, 1996.

\bibitem[KLS{\etalchar{+}}25]{kls+25}
Yekun Ke, Yingyu Liang, Zhizhou Sha, Zhenmei Shi, and Zhao Song.
\newblock Dpbloomfilter: Securing bloom filters with differential privacy.
\newblock {\em arXiv preprint arXiv:2502.00693}, 2025.

\bibitem[KMY{\etalchar{+}}16]{kmy+16}
Jakub Kone{\v{c}}n{\`y}, H~Brendan McMahan, Felix~X Yu, Peter Richt{\'a}rik, Ananda~Theertha Suresh, and Dave Bacon.
\newblock Federated learning: Strategies for improving communication efficiency.
\newblock {\em arXiv preprint arXiv:1610.05492}, 2016.

\bibitem[KT90]{kt93}
Leonid~G Khachiyan and Michael~J Todd.
\newblock On the complexity of approximating the maximal inscribed ellipsoid for a polytope.
\newblock Technical report, Cornell University Operations Research and Industrial Engineering, 1990.

\bibitem[KY05]{ky05}
Piyush Kumar and E~Alper Yildirim.
\newblock Minimum-volume enclosing ellipsoids and core sets.
\newblock {\em Journal of Optimization Theory and applications}, 126(1):1--21, 2005.

\bibitem[LHR{\etalchar{+}}24]{lhr+24}
Erzhi Liu, Jerry Yao-Chieh Hu, Alex Reneau, Zhao Song, and Han Liu.
\newblock Differentially private kernel density estimation.
\newblock {\em arXiv preprint arXiv:2409.01688}, 2024.

\bibitem[LHW17]{hlw17}
Xingguo Li, Jarvis Haupt, and David Woodruff.
\newblock Near optimal sketching of low-rank tensor regression.
\newblock {\em Advances in Neural Information Processing Systems}, 30, 2017.

\bibitem[LL23]{ll23_rp}
Ping Li and Xiaoyun Li.
\newblock Differential privacy with random projections and sign random projections.
\newblock {\em arXiv preprint arXiv:2306.01751}, 2023.

\bibitem[LL24]{ll24}
Ping Li and Xiaoyun Li.
\newblock Smooth flipping probability for differential private sign random projection methods.
\newblock {\em Advances in Neural Information Processing Systems}, 36, 2024.

\bibitem[LLH{\etalchar{+}}22]{llh+22}
Xuechen Li, Daogao Liu, Tatsunori~B Hashimoto, Huseyin~A Inan, Janardhan Kulkarni, Yin-Tat Lee, and Abhradeep Guha~Thakurta.
\newblock When does differentially private learning not suffer in high dimensions?
\newblock {\em Advances in Neural Information Processing Systems}, 35:28616--28630, 2022.

\bibitem[LS13a]{ls13b}
Yin~Tat Lee and Aaron Sidford.
\newblock Path finding i: Solving linear programs with$\backslash$\~{} o (sqrt (rank)) linear system solves.
\newblock {\em arXiv preprint arXiv:1312.6677}, 2013.

\bibitem[LS13b]{ls13}
Yin~Tat Lee and Aaron Sidford.
\newblock Path finding ii: An$\backslash$\~{} o (m sqrt (n)) algorithm for the minimum cost flow problem.
\newblock {\em arXiv preprint arXiv:1312.6713}, 2013.

\bibitem[LS14]{ls14}
Yin~Tat Lee and Aaron Sidford.
\newblock Path finding methods for linear programming: Solving linear programs in o (vrank) iterations and faster algorithms for maximum flow.
\newblock In {\em 2014 IEEE 55th Annual Symposium on Foundations of Computer Science}, pages 424--433. IEEE, 2014.

\bibitem[LS19]{ls19}
Yin~Tat Lee and Aaron Sidford.
\newblock Solving linear programs with sqrt (rank) linear system solves.
\newblock {\em arXiv preprint arXiv:1910.08033}, 2019.

\bibitem[LSS{\etalchar{+}}20]{lssw+20}
Jason~D Lee, Ruoqi Shen, Zhao Song, Mengdi Wang, et~al.
\newblock Generalized leverage score sampling for neural networks.
\newblock {\em Advances in Neural Information Processing Systems}, 33:10775--10787, 2020.

\bibitem[LSSZ24]{lssz24_dp}
Yingyu Liang, Zhenmei Shi, Zhao Song, and Yufa Zhou.
\newblock Differential privacy of cross-attention with provable guarantee.
\newblock {\em arXiv preprint arXiv:2407.14717}, 2024.

\bibitem[LSY24]{lsy24}
Xiaoyu Li, Zhao Song, and Junwei Yu.
\newblock Quantum speedups for approximating the john ellipsoid.
\newblock {\em arXiv preprint arXiv:2408.14018}, 2024.

\bibitem[LSZ19]{lsz19}
Yin~Tat Lee, Zhao Song, and Qiuyi Zhang.
\newblock Solving empirical risk minimization in the current matrix multiplication time.
\newblock In {\em Conference on Learning Theory}, pages 2140--2157. PMLR, 2019.

\bibitem[MD09]{md09}
Michael~W Mahoney and Petros Drineas.
\newblock Cur matrix decompositions for improved data analysis.
\newblock {\em Proceedings of the National Academy of Sciences}, 106(3):697--702, 2009.

\bibitem[MV15]{mv15}
Aravindh Mahendran and Andrea Vedaldi.
\newblock Understanding deep image representations by inverting them.
\newblock In {\em Proceedings of the IEEE conference on computer vision and pattern recognition}, pages 5188--5196, 2015.

\bibitem[MYX24]{myx23}
Yuting Ma, Yuanzhi Yao, and Xiaohua Xu.
\newblock Ppidsg: A privacy-preserving image distribution sharing scheme with gan in federated learning.
\newblock In {\em Proceedings of the AAAI Conference on Artificial Intelligence}, volume~38, pages 14272--14280, 2024.

\bibitem[Nar22]{n22}
Shyam Narayanan.
\newblock Private high-dimensional hypothesis testing.
\newblock In {\em Conference on Learning Theory}, pages 3979--4027. PMLR, 2022.

\bibitem[Nar23]{n23}
Shyam Narayanan.
\newblock Better and simpler lower bounds for differentially private statistical estimation.
\newblock {\em arXiv preprint arXiv:2310.06289}, 2023.

\bibitem[NN94]{nn94}
Yurii Nesterov and Arkadii Nemirovskii.
\newblock {\em Interior-point polynomial algorithms in convex programming}.
\newblock SIAM, 1994.

\bibitem[QJS{\etalchar{+}}22]{qjs+22}
Lianke Qin, Rajesh Jayaram, Elaine Shi, Zhao Song, Danyang Zhuo, and Shumo Chu.
\newblock Adore: Differentially oblivious relational database operators.
\newblock {\em arXiv preprint arXiv:2212.05176}, 2022.

\bibitem[QSZZ23]{qszz23}
Lianke Qin, Zhao Song, Lichen Zhang, and Danyang Zhuo.
\newblock An online and unified algorithm for projection matrix vector multiplication with application to empirical risk minimization.
\newblock In Francisco Ruiz, Jennifer Dy, and Jan-Willem van~de Meent, editors, {\em Proceedings of The 26th International Conference on Artificial Intelligence and Statistics}, volume 206 of {\em Proceedings of Machine Learning Research}, pages 101--156. PMLR, 25--27 Apr 2023.

\bibitem[QWH24]{qwh24}
Tao Qi, Huili Wang, and Yongfeng Huang.
\newblock Towards the robustness of differentially private federated learning.
\newblock {\em Proceedings of the AAAI Conference on Artificial Intelligence}, 38(18):19911--19919, Mar. 2024.

\bibitem[RB97]{rb97}
Elon Rimon and Stephen~P Boyd.
\newblock Obstacle collision detection using best ellipsoid fit.
\newblock {\em Journal of Intelligent and Robotic Systems}, 18:105--126, 1997.

\bibitem[RCCR18]{rccr18}
Alessandro Rudi, Daniele Calandriello, Luigi Carratino, and Lorenzo Rosasco.
\newblock On fast leverage score sampling and optimal learning.
\newblock {\em Advances in Neural Information Processing Systems}, 31, 2018.

\bibitem[Ren88]{r88}
James Renegar.
\newblock A polynomial-time algorithm, based on newton's method, for linear programming.
\newblock {\em Mathematical programming}, 40(1):59--93, 1988.

\bibitem[RLAW24]{rlaw24}
Rob Romijnders, Christos Louizos, Yuki~M Asano, and Max Welling.
\newblock Protect your score: Contact-tracing with differential privacy guarantees.
\newblock In {\em Proceedings of the AAAI Conference on Artificial Intelligence}, volume~38, pages 14829--14837, 2024.

\bibitem[RSW16]{rsw16}
Ilya Razenshteyn, Zhao Song, and David~P Woodruff.
\newblock Weighted low rank approximations with provable guarantees.
\newblock In {\em Proceedings of the forty-eighth annual ACM symposium on Theory of Computing}, pages 250--263, 2016.

\bibitem[SF04]{sf04}
Peng Sun and Robert~M Freund.
\newblock Computation of minimum-volume covering ellipsoids.
\newblock {\em Operations Research}, 52(5):690--706, 2004.

\bibitem[SSWZ22]{sswz20}
Zhao Song, Baocheng Sun, Omri Weinstein, and Ruizhe Zhang.
\newblock Sparse fourier transform over lattices: A unified approach to signal reconstruction.
\newblock {\em arXiv preprint arXiv:2205.00658}, 2022.

\bibitem[SW15]{sw15}
Weiwei Shen and Jun Wang.
\newblock Transaction costs-aware portfolio optimization via fast lowner-john ellipsoid approximation.
\newblock In {\em Proceedings of the AAAI Conference on Artificial Intelligence}, volume~29, 2015.

\bibitem[SWYZ21]{swyz21}
Zhao Song, David Woodruff, Zheng Yu, and Lichen Zhang.
\newblock Fast sketching of polynomial kernels of polynomial degree.
\newblock In {\em International Conference on Machine Learning}, pages 9812--9823. PMLR, 2021.

\bibitem[SWYZ23]{swyz22}
Zhao Song, Yitan Wang, Zheng Yu, and Lichen Zhang.
\newblock Sketching for first order method: efficient algorithm for low-bandwidth channel and vulnerability.
\newblock In {\em International Conference on Machine Learning}, pages 32365--32417. PMLR, 2023.

\bibitem[SWZ17]{swz17}
Zhao Song, David~P Woodruff, and Peilin Zhong.
\newblock Low rank approximation with entrywise l1-norm error.
\newblock In {\em Proceedings of the 49th Annual ACM SIGACT Symposium on Theory of Computing}, pages 688--701, 2017.

\bibitem[SWZ19]{swz19}
Zhao Song, David~P Woodruff, and Peilin Zhong.
\newblock Relative error tensor low rank approximation.
\newblock In {\em Proceedings of the Thirtieth Annual ACM-SIAM Symposium on Discrete Algorithms}, pages 2772--2789. SIAM, 2019.

\bibitem[SXZ22]{sxz23}
Zhao Song, Zhaozhuo Xu, and Lichen Zhang.
\newblock Speeding up sparsification using inner product search data structures.
\newblock {\em arXiv preprint arXiv:2204.03209}, 2022.

\bibitem[SY21]{sy21}
Zhao Song and Zheng Yu.
\newblock Oblivious sketching-based central path method for linear programming.
\newblock In {\em International Conference on Machine Learning}, pages 9835--9847. PMLR, 2021.

\bibitem[SYYZ22]{syyz22}
Zhao Song, Xin Yang, Yuanyuan Yang, and Tianyi Zhou.
\newblock Faster algorithm for structured john ellipsoid computation.
\newblock {\em arXiv preprint arXiv:2211.14407}, 2022.

\bibitem[SYYZ23]{syyz22b}
Zhao Song, Xin Yang, Yuanyuan Yang, and Lichen Zhang.
\newblock Sketching meets differential privacy: fast algorithm for dynamic kronecker projection maintenance.
\newblock In {\em International Conference on Machine Learning}, pages 32418--32462. PMLR, 2023.

\bibitem[SYZ23]{syz23}
Zhao Song, Mingquan Ye, and Lichen Zhang.
\newblock Streaming semidefinite programs: $o(\backslash sqrt \{n\})$ passes, small space and fast runtime.
\newblock {\em arXiv preprint arXiv:2309.05135}, 2023.

\bibitem[Tar88]{kte88}
Sergei~Pavlovich Tarasov.
\newblock The method of inscribed ellipsoids.
\newblock In {\em Soviet Mathematics-Doklady}, volume~37, pages 226--230, 1988.

\bibitem[TLY24]{tly24}
Yukai Tang, Jean-Bernard Lasserre, and Heng Yang.
\newblock Uncertainty quantification of set-membership estimation in control and perception: Revisiting the minimum enclosing ellipsoid.
\newblock In {\em 6th Annual Learning for Dynamics \& Control Conference}, pages 286--298. PMLR, 2024.

\bibitem[Tod16]{todd16}
Michael~J Todd.
\newblock {\em Minimum-volume ellipsoids: Theory and algorithms}.
\newblock SIAM, 2016.

\bibitem[Vai87]{v87}
Pravin~M Vaidya.
\newblock An algorithm for linear programming which requires o (((m+ n) n 2+(m+ n) 1.5 n) l) arithmetic operations.
\newblock In {\em Proceedings of the nineteenth annual ACM symposium on Theory of computing}, pages 29--38, 1987.

\bibitem[Vai89]{v89}
Pravin~M Vaidya.
\newblock Speeding-up linear programming using fast matrix multiplication.
\newblock In {\em 30th annual symposium on foundations of computer science}, pages 332--337. IEEE Computer Society, 1989.

\bibitem[Vem05]{vem05}
Santosh Vempala.
\newblock Geometric random walks: a survey.
\newblock {\em Combinatorial and computational geometry}, 52(573-612):2, 2005.

\bibitem[VGSR18]{vgsr18}
Praneeth Vepakomma, Otkrist Gupta, Tristan Swedish, and Ramesh Raskar.
\newblock Split learning for health: Distributed deep learning without sharing raw patient data.
\newblock {\em arXiv preprint arXiv:1812.00564}, 2018.

\bibitem[Wed73]{wedin1973perturbation}
Per-{\AA}ke Wedin.
\newblock Perturbation theory for pseudo-inverses.
\newblock {\em BIT Numerical Mathematics}, 13:217--232, 1973.

\bibitem[Wey12]{weyl1912asymptotische}
Hermann Weyl.
\newblock Das asymptotische verteilungsgesetz der eigenwerte linearer partieller differentialgleichungen (mit einer anwendung auf die theorie der hohlraumstrahlung).
\newblock {\em Mathematische Annalen}, 71(4):441--479, 1912.

\bibitem[Wil12]{w12}
Virginia~Vassilevska Williams.
\newblock Multiplying matrices faster than coppersmith-winograd.
\newblock In {\em Proceedings of the forty-fourth annual ACM symposium on Theory of computing (STOC)}, pages 887--898. ACM, 2012.

\bibitem[WXXZ24]{wxxz24}
Virginia~Vassilevska Williams, Yinzhan Xu, Zixuan Xu, and Renfei Zhou.
\newblock New bounds for matrix multiplication: from alpha to omega.
\newblock In {\em Proceedings of the 2024 Annual ACM-SIAM Symposium on Discrete Algorithms (SODA)}, pages 3792--3835. SIAM, 2024.

\bibitem[WZ16]{wz16}
David~P Woodruff and Peilin Zhong.
\newblock Distributed low rank approximation of implicit functions of a matrix.
\newblock In {\em 2016 IEEE 32nd International Conference on Data Engineering (ICDE)}, pages 847--858. IEEE, 2016.

\bibitem[WZ22]{wa22}
David Woodruff and Amir Zandieh.
\newblock Leverage score sampling for tensor product matrices in input sparsity time.
\newblock In {\em International Conference on Machine Learning}, pages 23933--23964. PMLR, 2022.

\bibitem[XSS21]{xss21}
Zhaozhuo Xu, Zhao Song, and Anshumali Shrivastava.
\newblock Breaking the linear iteration cost barrier for some well-known conditional gradient methods using maxip data-structures.
\newblock {\em Advances in Neural Information Processing Systems}, 34:5576--5589, 2021.

\bibitem[XZZ18]{xzz18}
Chang Xiao, Peilin Zhong, and Changxi Zheng.
\newblock Bourgan: Generative networks with metric embeddings.
\newblock {\em Advances in neural information processing systems}, 31, 2018.

\bibitem[YLH{\etalchar{+}}24]{ylh+24}
Jiahao Yu, Haozheng Luo, Jerry Yao-Chieh Hu, Wenbo Guo, Han Liu, and Xinyu Xing.
\newblock Enhancing jailbreak attack against large language models through silent tokens.
\newblock {\em arXiv preprint arXiv:2405.20653}, 2024.

\bibitem[ZLH19]{zlh19}
Ligeng Zhu, Zhijian Liu, and Song Han.
\newblock Deep leakage from gradients.
\newblock {\em Advances in neural information processing systems}, 32, 2019.

\end{thebibliography}
\else

\fi




\end{document}